\newif\ifincludeproofs
\newif\iffinal
\definecolor{darkviolet}{rgb}{0.5,0,0.4}
\definecolor{darkgreen}{rgb}{0,0.4,0.2}
\definecolor{darkblue}{rgb}{0.1,0.1,0.9}
\definecolor{darkgrey}{rgb}{0.5,0.5,0.5}
\definecolor{lightblue}{rgb}{0.4,0.4,1}
\lstdefinestyle{eclipsish}{
    basicstyle=\scriptsize\ttfamily,
    emphstyle=\color{red}\bfseries,
    keywordstyle=\color{darkgreen}\bfseries,
    keywordstyle=[2]\color{darkviolet}\bfseries,
    commentstyle=\color{darkgrey},
    stringstyle=\color{darkblue},
    numberstyle=\color{darkgrey}\ttfamily\tiny,
    emphstyle=\color{red},
        morecomment=[s][\color{lightblue}]{/**}{*/},
     showstringspaces=false,
  numbers=left,
  numbersep=5pt,
  xleftmargin=2.5ex,
  xrightmargin=2.4ex,
  breakindent=3ex,
  breakautoindent,
  numberblanklines=false,
  escapeinside={(*@}{@*)},
  mathescape=true,
}
\lstdefinelanguage{GCD}[]{C}
  {morekeywords=[2]{write, read,dispatch\_s, dispatch\_a,forkjoin,c\_queue, s\_queue, sleep, apply,require,?,!,??,!!},   alsoletter={^},   morekeywords= {def,foreach,block, ^,string,bool, global,  elseif,
   group,select,with,where,in, wait}
  }\lstset{language=GCD,style=eclipsish}
\newcommand{\mleq}{\preceq}
\newcommand{\tuple}[1]{\langle#1\rangle\xspace}
\newcommand{\set}[2]{\left\{#1\,\vert\,#2\right\}}
\newcommand{\veve}[1]{\ensuremath{\left| #1 \right|}}
\newcommand{\size}[1]{\veve{#1}}
\newcommand{\Aa}{\ensuremath{\mathcal{A}}\xspace}
\newcommand{\Cc}{\ensuremath{\mathcal{C}}\xspace}
\newcommand{\Dd}{\ensuremath{\mathcal{D}}\xspace}
\newcommand{\Ff}{\ensuremath{\mathcal{F}}\xspace}
\newcommand{\Ll}{\ensuremath{\mathcal{L}}\xspace}
\newcommand{\Pp}{\ensuremath{\mathcal{P}}\xspace}
\newcommand{\Ss}{\ensuremath{\mathcal{S}}\xspace}
\newcommand{\Ts}{\ensuremath{\mathcal{TS}}\xspace}
\newcommand{\Xx}{\ensuremath{\mathcal{X}}\xspace}
\newcommand{\e}{\ensuremath{\varepsilon}\xspace}
\newcommand{\mbar}{\ensuremath{\overline{m}}\xspace}
\newcommand{\DD}{\ensuremath{\mathbb{D}}\xspace}
\newcommand{\NN}{\ensuremath{\mathbb{N}}\xspace}
\newcommand{\sem}[1]{\ensuremath{\llbracket #1 \rrbracket}}
\newcommand{\cfont}[1]{\ensuremath{\mathtt{#1}}\xspace}
\DeclareMathOperator{\dotcup}{\mathaccent\cdot\cup}
\DeclareMathOperator{\bigdotcup}{\ooalign{$\bigcup$\cr\hfill$\cdot$\hfill}}
\newcommand{\main}{\ensuremath{\textit{main}}\xspace}
\DeclareMathOperator{\incr}{\cfont{incr}}
\DeclareMathOperator{\decr}{\cfont{decr}}
\DeclareMathOperator{\zerotest}{\cfont{is\_zero}}
\DeclareMathOperator{\ack}{\cfont{ack}}
\newcommand{\CQID}{CQID\xspace}
\newcommand{\SQID}{SQID\xspace}
\newcommand{\QID}{QID\xspace}
\newcommand{\lts}{\textsc{Lts}\xspace}
\newcommand{\pds}{\textsc{Pds}\xspace}
\newcommand{\mpds}{\textsc{Mpds}\xspace}
\newcommand{\petri}{\textsc{Pn}\xspace}
\newcommand{\pn}{\textsc{Pn}\xspace}
\renewcommand{\gcd}{\textsc{Gcd}\xspace}
\newcommand{\qdas}{\textsc{Qdas}\xspace}
\newcommand{\eqdas}{\textsc{eQdas}\xspace}
\newcommand{\twocs}{2\textsc{Cs}\xspace}
\newcommand{\fifo}{fifo\xspace}
\newcommand{\dexpspace}{\textsc{ExpSpace}\xspace}
\newcommand{\dexpspacecomplete}{\textsc{ExpSpace-C}\xspace}
\newcommand{\pspace}{\textsc{PSpace}\xspace}
\newcommand{\pspacecomplete}{\textsc{PSpace-C}\xspace}
\newcommand{\ptime}{\textsc{PTime}\xspace}
\newcommand{\ptimecomplete}{\textsc{PTime-C}\xspace}
\newcommand{\dexptime}{\textsc{ExpTime}\xspace}
\newcommand{\dexptimecomplete}{\textsc{ExpTime-C}\xspace}
\newcommand{\QDAS}{\ensuremath{\tuple{\CQID, \emptyset,\allowbreak \Gamma,
\allowbreak \mathtt{main},\allowbreak \Xx,\allowbreak \Sigma,\allowbreak
(\Ts_\gamma)_{\gamma\in\Gamma}}}}
\DeclareMathOperator{\Reach}{\textit{Reach}}
\newcommand{\Cover}{\ensuremath{\textit{Cover}}}
\newcommand{\Reachloss}{\Cover}
\newcommand{\SSigma}{\ensuremath{\smash{\widetilde{\Sigma}}}}
\newcommand{\Graph}{\ensuremath{G}}
\newcommand{\Data}{\ensuremath{\vec{d}}}
\newcommand{\DData}{\ensuremath{\widehat{\vec{d}}}}
\newcommand{\queue}{\ensuremath{\textit{queue}}}
\newcommand{\state}{\ensuremath{\textit{state}}}
\newcommand{\head}{\ensuremath{\textit{head}}}
\newcommand{\tail}{\ensuremath{\textit{tail}}}
\newcommand{\enqueue}{\ensuremath{\textsf{enqueue}}}
\newcommand{\dequeue}{\ensuremath{\textsf{dequeue}}}
\newcommand{\step}{\ensuremath{\textsf{step}}}
\newcommand{\letwait}{\ensuremath{\textsf{letwait}}}
\newcommand{\ctg}{\ensuremath{\textsc{Ctg}}\xspace}
\newcommand{\disps}{\ensuremath{\cfont{dispatch_s}}}
\newcommand{\dispa}{\ensuremath{\cfont{dispatch_a}}}
\newcommand{\forkjoin}{\ensuremath{\cfont{forkjoin}}\xspace}
\newcommand{\guardson}[1]{\ensuremath{\mathsf{guards}\left(#1\right)}}
\newcommand{\valof}[1]{\ensuremath{\mathsf{vals}\left(#1\right)}}
\newcommand{\assignon}[1]{\ensuremath{\mathsf{assign}\left(#1\right)}}
\newcommand{\Parikh}{\ensuremath{\mathsf{Parikh}}}
\newcommand{\mystate}[2]{\ensuremath{s^{#1}_{\mathtt{#2}}}}
\newcommand{\push}{\ensuremath{\cfont{push}}}
\newcommand{\pop}{\ensuremath{\cfont{pop}}}
\newcommand{\emptystack}{\ensuremath{\cfont{empty?}}}
\newcommand{\ko}{\ensuremath{\lightning}}
\newcommand{\vstrut}[1]{\rule{0pt}{#1}}
\newcommand{\inputproof}[1]{\ifincludeproofs\else\fi}
\newenvironment{myitemize}{\begin{list}{\labelitemi}{\setlength{\topsep}{4pt}\setlength{\partopsep}{0pt}
\setlength{\itemsep}{0pt}
\setlength{\itemindent}{0ex}
\setlength{\listparindent}{0ex}
\setlength{\leftmargin}{4ex}\setlength{\labelwidth}{2ex}
}}
{\end{list}}
\tikzstyle{state}=[draw,ellipse,inner sep=1pt,minimum size=2.5ex,font=\small]
\tikzstyle{box}=[state,rectangle,minimum size=2.3ex]
\tikzstyle{lop}=[->,looseness=5,out=30,in=-30,inner sep=1pt,shorten >=1pt]
\tikzstyle{lab}=[font=\tiny]
\tikzstyle{na}=[baseline=-0.5ex]
\def\eor{\ifmmode\squareforqed\else{\unskip\nobreak\hfil
\penalty50\hskip1em\null\nobreak\hfil$\dashv$
\parfillskip=0pt\finalhyphendemerits=0\endgraf}\fi}
\newtheorem{theorem}{Theorem}{}
\newtheorem{proposition}{Proposition}{}
\newtheorem{lemma}{Lemma}{}
{}
\newtheorem{example}{Example}{}
\newenvironment{proof}{\noindent{\it Proof.\hspace*{.5cm}}}{}
\newcommand{\qed}{\hfill$\Box$}
\newcommand\fixme[1]{}
\newcommand\detail[1]{}
\newcommand\change[1]{}
\newcommand\afaire[1]{}
\begin{document}
\title{Queue-Dispatch Asynchronous Systems}
\author{Gilles Geeraerts
\and Alexander Heußner
\and Jean-François Raskin\\[1.5ex]Université Libre de Bruxelles -- Belgium}
\maketitle

\begin{abstract}
  To make the development of efficient multi-core applications easier,
  libraries, such as Grand Central Dispatch, have been proposed. When
  using such a library, the programmer writes so-called {\em blocks},
  which are chunks of codes, and dispatches them, using {\em
    synchronous} or {\em asynchronous} calls, to several types of
  waiting queues. A scheduler is then responsible for dispatching
  those blocks on the available cores.  Blocks can synchronize via a
  global memory.  In this paper, we propose Queue-Dispatch
  Asynchronous Systems as a mathematical model that faithfully
  formalizes the synchronization mechanisms and the behavior of the
  scheduler in those systems.  We study in detail their relationships
  to classical formalisms such as pushdown systems, Petri nets, fifo
  systems, and counter systems.  Our main technical contributions are
  precise worst-case complexity results for the Parikh coverability
  problem and the termination question for several subclasses of our
  model. We give an outlook on
  extending our model towards verifying input-parametrized fork-join
  behaviour with the help of abstractions.
\end{abstract}

\section{Introduction\label{sec:introduction}}
The computing power delivered by computers has followed an exponential growing
rate the last decades. One of the main reasons was the steady increase of the CPU
clock rates. This growth, however, has come to an end a few years ago, because
further increasing the clock rate would incur major engineering challenges
related to power dissipations.  In order to overcome this and meet the continuous
need for more computing power, multi-core CPU's have been introduced and are now
ubiquitous. However, in order to harness the power of multiple cores, software
applications need to be fundamentally modified and the programmers now have to
write programs with parallelism in mind. But writing parallel programs is a
notoriously difficult and error prone task. Also, writing {\em efficient} and
{\em portable} parallel code for multi-core platforms is difficult, as the number
of available cores will vary greatly from one platform to another, and might also
depend on the current load, the energy management policy, and so forth.

In order to alleviate the task of the programmer, several high level programming
interfaces have been proposed, and are now available on several operating
systems. A popular example is {\em Grand Central Dispatch}, \gcd for
short, a technology that is present in Mac\,OS X (since 10.6), iOS (since version
4), and FreeBSD.
In \gcd, the programmer writes so-called {\em blocks} which are
chunks of codes, and send them to {\em queues}, together with several dependency
constraints between those blocks (for instance, one block cannot start before the
previous one in the queue has finished). The scheduler is then responsible for
dispatching those blocks on the available cores, through a thread pool that the
scheduler manages (thereby avoiding the explicit and costly creation/destruction
of threads by the programmer that is in addition extremely error-prone).

So far, to the best of our knowledge, no formal model has been proposed for
systems relying on \gcd or similar technologies, making those programs {\it de facto} out of reach of current verification methods and tools.  This is particularly unfortunate as the control structure of such programs is rich and may exhibit complex behaviors. Indeed, the state-space of such programs is infinite even when types of variables are abstracted to finite domains of values. This is not surprising as asynchronous calls and recursive synchronous calls can send an unbounded number of blocks to queues. Also, those programs are, as any parallel program, subject to concurrency bugs that are difficult to detect using testing only.

\begin{table}[t!]
  \centering
  \scalebox{0.8}{
  \begin{tabular}{ll|ccc}
    \multicolumn{2}{c|}{Parikh coverability}& \multicolumn{3}{c}{queue types}\\[1.5ex]
    \multicolumn{2}{c|}{}& concurrent & serial & both\\
    \cline{1-5}
    \multirow{3}{*}{\begin{sideways}{dispatch\ \ }\end{sideways}}      \hspace{1ex}\vstrut{3ex}
    &synchr.  & \dexptimecomplete & \ \ \ \pspacecomplete\ \ \  & \dexptimecomplete\\[1.5ex]
    &asynchr.\ \ & \ \ \dexpspacecomplete & \ko & (\ko)\\[1.5ex]
    &both  & \ko & (\ko) & (\ko)\\
  \end{tabular}}
                                            \\[1.5ex]\mbox{ }

  \caption{\qdas Verification Problems (\ko: ``undecidable'',
  parentheses: directly derivable)\label{tab:decidability}}
\end{table}

\paragraph{{\bf Contributions}}
In this paper, we introduce {\em Queue-Dispatch Asynchronous Systems},
\qdas for short, as a formal model for programs written using
libraries such as \gcd.  Our model is composed of {\em blocks}, that
are finite transition systems with finite data-domain variables that
can do {\em asynchronous} (non-blocking) and {\em synchronous}
(blocking) calls to other blocks (possibly recursively). However, a
call does not immediately trigger the execution of the callee: the
block  is inserted into a queue that can be either {\em
  concurrent} or {\em serial}. In concurrent queues, several blocks
can be taken from the queue and executed in parallel, while in serial
queues, a block can be dequeued only if the previous block in the
queue has completed its execution. Queues are maintained with a \fifo
policy. To formalize configurations of such systems, our formal
semantics relies on {\em call task graph}, \ctg for short, in which
nodes model tasks that are either in queues or executing, and edges
model dependencies between tasks and within queues.

We then study the decidability border for the {\em Parikh coverability
problem} and the \emph{termination problem}
on several subclasses of \qdas. Our results are summarized in Table\,\ref{tab:decidability}. The {\em Parikh image} of a
\ctg is an abstraction that counts for each type and state of blocks
the number of occurrences in the \ctg and the {\em Parikh
  coverability} problem asks for the reachability of a \ctg that
contains at least a given number of blocks of each type that are in a given set
of states. Not
surprisingly, this problem is undecidable for \qdas, but we identify
several subclasses for which the problem is decidable. For those
decidable cases, we characterize the exact complexity of the
problem.

The main positive decidability results with precise complexity are as
follows:  First, we show that \qdas with {\em only} synchronous calls
are essentially equivalent to pushdown systems with finite domain
data-variables, and we show that the Parikh coverability problem is
\dexptimecomplete for synchronous concurrent \qdas
(Theorem\,\ref{thm:syncqdas}). Second, for synchronous \qdas with only
serial queues, the problem is \pspacecomplete
(Theorem\,\ref{thm:serialsyncqdas}). Third, we show that \qdas with
{\em only} asynchronous calls and {\em only} concurrent queues are
essentially equivalent to lossy Petri nets and show that the
Parikh coverability problem is \dexpspacecomplete for that class
(Theorem\,\ref{thm:concasyncqdas}). This decidability border is precise
as we show that if we allow either $(i)$ asynchronous calls with
synchronous queues, or $(ii)$ synchronous and asynchronous calls with
concurrent queues, then the Parikh coverability problem becomes
undecidable (Theorem\,\ref{the:async-seri-undec} and
Theorem\,\ref{thm:concqdasundec}). The previous proof's ideas allow
to derive similar results for
termination wrt. the subclasses of \qdas.
The \emph{termination} problem asks given a \qdas whether all its
executions are finite.

We enhance up our results by presenting an extension of \qdas with an
explicit fork/join construct that, in addition, is parametrized by the
input. As Parikh coverability and termination lifted to this setting
are undecidable, we propose two over-approximations that allow for
solutions in practice.

\medskip
\noindent {\it Remark}: Due to the lack of space, detailed formal
proofs are deferred to the appendix.

\paragraph{{\bf Related Works}}
The basic model checking result for asynchronous programs is the
\dexpspace-hardness for the control-state reachability problem
obtained by making formal a link with \emph{multi-set pushdown
systems} (\mpds). The underlying two basic ideas are : $(i)$ to
untangle the call stack and the storage of pending asynchronous calls
by imposing that the next call in a serialized execution-equivalent
program is only processed when the call stack is empty; and $(ii)$ to
only count the number of pending calls for each block while the call
stack is non-empty. The original reduction in \cite{sen-k-2006-300-a}
is based on Parikh's theorem and derives the lower bound from a Petri
net reachability problem~\cite{esparza-j-1998-374-a}.  A Parikh-less
reduction was presented in~\cite{jhala-r-2007-339-a} that relied on
the convergence of an over- and under-approximation derived from
interprocedural dataflow analysis.

The close relation between asynchronous programs and Petri nets can also be used
to prove additional decidability results for
liveness questions~\cite{ganty-p-2009-102-a,ganty-p-2010--a}. The following results are based on a (polynomial-time)
reduction of asynchronous systems to an ``equivalent'' Petri net or extension
thereof: \emph{fair} termination (i.e., testing whether each dispatched call
terminates) is complete in \dexpspace, the
boundedness question is decidable in \dexpspace (i.e., asking whether we can
bound the number of pending calls), fair non-starvation (i.e., asking, when
assuming fairness on runs, whether every pending call is eventually dispatched)
is decidable.
The authors also consider extensions of asynchronous programs with cancellation
(i.e., an additional operation removing all pending instances of a block) and
testing whether there is \emph{no} pending instance of a given block. In the
first case, they show reduction to the model to Petri nets with transfer arcs or
reset arcs, in the second case they show reduction to Petri nets with one
inhibitor arc.
Multi-set pushdown automata are subsumed by \emph{well-structured
  transition systems with auxiliary storage} and inherit their
decidability results presented in
\cite{chadha-r-2007-136-a,chadha-r-2009-4169-a}.
Analogously, one can show that
termination, control-state maintainability, and simulation with
respect to finite state systems are decidable for asynchronous
programs.

All the models considered in the aforementioned publications do not
consider causality constraints on the sequence of asynchronous
dispatch calls, as would be necessary to model the \fifo policies of
\gcd. However, this is possible with \qdas. A more detailed look on
the differences between the model of~\cite{ganty-p-2010--a} and the
(\fifo-less) subclass of asynchronous serial \qdas is presented in
Section\,\ref{sec:parikh-cover-probl}.

A series of parallel programming libraries and techniques is
formalized in~\cite{bouajjani-a-2012--a} with the help of
\emph{recursively parallel programs}. These allow to model fork/join
based parallel computations based on a reduction to recursive vector
addition systems with states. With respect to \qdas and asynchronous
programming, recursively parallel programs only cover the classical
asynchronous models presented above and not the advanced scheduling
strategies for different queues that introduce more sophisticated
behaviours.

\section{Preliminaries}\label{sec:prelims}
\emph{\bf Grand Central Dispatch} (\gcd) is a technology developed by
Apple \cite{apple--2010--a,apple--2011--a} that is publicly available
at \url{http://libdispatch.macosforge.org/} under a free license. \gcd
is the main inspiration for the formal model of queue-dispatch
asynchronous systems. In the following, we often present our examples
as pseudo code using a syntax inspired by \gcd.  In the \gcd
framework, the programmer has to organize his code into
\emph{blocks}. During the execution of a \gcd program, one or several
\emph{tasks} run in parallel, each executing a given block (initially,
only the \texttt{main} block is running). Tasks can call (or
\emph{dispatch} in the \gcd vocabulary) other blocks, either
\emph{synchronously} (the call is blocking), or \emph{asynchronously}
(the call is not blocking). A \emph{dispatch} consists in inserting
the block into a \fifo \emph{queue}. In our examples, we use the
keywords $\dispa$ and $\disps$ to refer to asynchronous and
synchronous dispatches respectively.  At any time, the scheduler can
decide to \emph{dequeue} blocks from the queues and to assign them to
tasks for execution. All queues ensure that the blocks are dequeued in
\fifo order, however the actual scheduling policy depends on the type
of queue. \gcd supports two types of queues: \emph{concurrent queues}
allow several tasks from the same queue to run in parallel, whereas
\emph{serial queues} guarantee that \emph{at most one} task from this
queue is running. In our examples, concurrent (or serial) queues are
declared as global variables of type \texttt{c\_queue}
(\texttt{s\_queue}). In addition, all blocks have access to the same
set of \emph{global variables} (in this work, we assume that the
variables range over finite domains).

\begin{example}\label{ex:gcd-matrixmult}
  Let us consider the pseudo code in Fig.~\ref{fig:example-gcd} that
  computes the product of two integer matrices $\texttt{matrix1}$ and
  $\texttt{matrix2}$ of constant size ($\texttt{l,m,n}$)
  in a matrix $\texttt{matrix}$. The \texttt{main}
  task forks a series of \texttt{one\_cell} blocks. Each
  \texttt{one\_cell} computes the value of a single cell of the
  result. The parallelism is achieved via the \gcd scheduler, thanks
  to \emph{asynchronous dispatches} on the \emph{concurrent} queue
  \texttt{workqueue}. Asynchronous dispatches are needed to make sure
  that \texttt{main} is not blocked after each dispatch, and a
  concurrent queue allow all the \texttt{one\_cell} block to run in
  parallel.  The variable \texttt{count} is incremented each time the
  computation of a cell is finished and acts as a semaphore for the
  $\texttt{main}$ block, to ensure that \texttt{matrix} contains the
  final result. As only reading and writing to a variable are atomic,
  we need to guarantee exclusive access of two consecutive operations
  on \texttt{count} (line \ref{lst:count}). This is achieved by a
  dedicated block \texttt{increase} that is dispatched to the
  \emph{serial} queue \texttt{semaphore}. As only \texttt{increase}
  blocks can increase \texttt{count}, this queue implicitly locks the
  access to the variable.  Moreover, the synchronous dispatch in line
  \ref{lst:syncdisp}  guarantees that a block terminates only
  after it has increased \texttt{count}.
      \end{example}

\begin{figure}[t!]
  \centering
  \begin{lstlisting}
global int const l,m,n  (*@\label{lst:const}@*)
global int[l][m] matrix1, int[m][n] matrix2, int[l][n] matrix
global c_queue workqueue, s_queue semaphore, int count (*@\mbox{}\\[0.5ex]@*)
block increase():
  count = count + 1 (*@\label{lst:count}\\[0.5ex]@*)
block one_cell(int i, int j):
  for k in range(m):
    matrix[i][j]+= matrix1[i][k] * matrix2[k][j]
  dispatch_s(semaphore,increase()) (*@\label{lst:syncdisp}\\[0.5ex]@*)
def main():
  // read input matrix1, matrix2
  count = 0
  for i in range(l):  (*@\label{lst:fork-start}@*)
    for j in range(n):
      dispatch_a(workqueue,one_cell(i,j))
  wait(count = l*n)  (*@\label{lst:join}@*)
  // print the result
  \end{lstlisting}
  \caption{\gcd(-like) program for parallel matrix multiplication}
\label{fig:example-gcd}
\end{figure}

\paragraph{\bf Basic Notations:} Given a set $S$, let $\size{S}$ denote its
cardinality. For an $I$-indexed family of sets $(S_i)_{i \in I}$, we
write elements of $\prod_{i \in I} S_i$ in bold face, i.e.,
$\vec{s}\in\prod_{i\in I} S_i$. The $i$-component of $\vec{s}$ is
written $s_i \in S_i$, and we identify $\vec{s}$ with the indexed
family of elements $(s_i)_{i \in I}$.
We use $\dotcup$ to denote the disjoint union of sets.
An \emph{alphabet} $\Sigma$ is a finite set of \emph{letters}. We
write $\Sigma^*$ for the set of all \emph{finite words}, over $\Sigma$
and denote the empty word by $\varepsilon$. The concatenation of two
words $w,w'$ is represented by $w\cdot w'$. For a letter
$\sigma\in\Sigma$ and a word $w\in\Sigma^*$, let $\veve{w}_\sigma$ be
the number of occurrences of $\sigma$ in $w$.
We use standard complexity classes, e.g., polynomial time
(\ptime) or deterministic exponential time (\dexptime), and mark
completeness by appending ``-C'' (\pspacecomplete).

Let $\DD$ be a finite \emph{\bf data domain}
with an \emph{initial element} $d_0\in\DD$, and let $\Xx$ be a finite
set of variables ranging over $\Dd$. A \emph{valuation} of the
variables in $\Xx$ is a function $\mathbf{d}: \Xx\rightarrow \DD$.  An
\emph{atom} is an expression of the form $x=d$ or $x\neq d$, where
$x\in\Xx$ and $d\in \DD$. A \emph{guard} if a finite conjunction of
atoms. An assignment is an expression of the form $x\gets v$, where
$x\in \Xx$ and $v\in \DD$. Let $\guardson{\Xx}$, $\assignon{\Xx}$ and
$\valof{\Xx}$ denote respectively the sets of all guards, assignments
and valuations over variables from $\Xx$. Guards, atoms and valuations
have their usual semantics: for all valuations $\Data$ of $\Xx$ and all
$g\in\guardson{\Xx}$, we write $\Data\models g$ iff $\Data$ satisfies
$g$.

\smallskip\noindent A \emph{\bf pushdown system with data} is a
pushdown system (see~\cite{bouajjani-a-1997-135-a} for details)
equipped with a finite set of variables $\Xx$ over a
finite domain $\DD$. A configuration of a \pds with data is a pair
$(s,w,\Data)$ where $s$ is a control state, $w$ is the stack content,
and $\Data$ is a valuation of the variables

\begin{restatable}{proposition}{propreachpdsdata}\label{prop:reach_pds_data}
  The reachability problem is \dexptimecomplete for \pds with data.
\end{restatable}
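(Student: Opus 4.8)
The plan is to prove both directions of \dexptimecomplete{}-ness separately: membership in \dexptime{} via a reduction to ordinary pushdown reachability, and \dexptime{}-hardness via a reduction from a known \dexptime{}-hard problem. The key observation for the upper bound is that the finite data valuation can be folded into the control state, since $\DD$ and $\Xx$ are both finite. The subtlety that makes this more than trivial is the exponential blow-up: the number of valuations is $\size{\DD}^{\size{\Xx}}$, which is exponential in the size of the input, so a naive product construction yields a pushdown system of exponential size, and plain \pds{} reachability is only \ptime{} in the size of that machine — giving \dexptime{} overall.

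First I would make the upper bound precise. Given a \pds{} with data $\tuple{S,\Gamma,\Xx,\DD,\ldots}$, I construct an ordinary pushdown system whose control states are pairs $(s,\Data)$ with $s$ a control state and $\Data\in\valof{\Xx}$ a valuation. Each original transition, which carries a guard $g\in\guardson{\Xx}$ and an assignment, is replicated once for every valuation $\Data$ with $\Data\models g$; the target valuation is obtained by applying the assignment to $\Data$. Stack operations are copied verbatim, so the stack alphabet is unchanged. This product \pds{} has $\size{S}\cdot\size{\DD}^{\size{\Xx}}$ states, i.e.\ exponentially many, but it is computable in exponential time, and a configuration $(s,w,\Data)$ of the original system is reachable iff the configuration $((s,\Data),w)$ is reachable in the product. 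Since plain \pds{} reachability is solvable in time polynomial in the size of the \pds{} (via the standard saturation construction for the set of reachable configurations, see~\cite{bouajjani-a-1997-135-a}), the overall procedure runs in exponential time, establishing membership in \dexptime{}.

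For the lower bound I would reduce from the acceptance problem of a deterministic Turing machine running in exponential space, or more conveniently from a known \dexptime{}-complete problem such as the emptiness/membership problem for alternating polynomial-space machines; the cleanest route is to encode a \pspace{}-bounded alternating computation, or equivalently a linear-bounded alternating Turing machine, whose acceptance is \dexptime{}-complete. The point is that the $\size{\Xx}$ Boolean-valued variables over a two-element data domain give us a polynomial number of bits, hence an exponentially large configuration space, which is exactly what is needed to simulate a machine using exponentially many configurations in polynomial space. The pushdown stack is used to realise the recursive/alternating structure of the computation, while the data variables hold the (polynomially many bits of the) tape contents or the current machine configuration, so that reaching an accepting control state corresponds precisely to acceptance.

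The hard part will be the lower bound: I must exhibit a \pspace{}-bounded alternating (or \dexptime{}-hard) computation and show that its behaviour is faithfully captured by a \pds{} with data whose size is \emph{polynomial} in the instance, so that the exponential data-state space is the genuine source of the hardness rather than an artifact of an oversized encoding. In particular I must argue that the guards and assignments over $\Xx$ suffice to test and update the encoded configuration in unit steps, and that the stack discipline correctly tracks the alternation. The upper-bound direction, by contrast, is essentially a bookkeeping argument once the product construction and the polynomial-time \pds{} reachability algorithm are invoked; the only care needed there is to verify that the exponential state count is matched by an exponential-time, rather than worse, construction of the product machine.
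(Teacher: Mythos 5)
Your upper bound is exactly the paper's: fold the (exponentially many) valuations $\valof{\Xx}$ into the control states, obtaining an ordinary \pds of size $\size{S}\cdot\size{\DD}^{\size{\Xx}}$, and invoke the polynomial-time saturation-based reachability algorithm of \cite{bouajjani-a-1997-135-a}; nothing to add there. For the lower bound you diverge: the paper reduces from the emptiness of the intersection of a context-free language with $n$ regular languages (each data variable simply tracks the current state of one of the $n$ finite automata while the \pds generates the word), citing the closely related \dexptime-hardness of \pds reachability with checkpoints; you instead propose a direct simulation of a linearly bounded \emph{alternating} Turing machine. These are not really independent --- the paper itself notes that hardness of the intersection problem is obtained from linearly bounded alternating Turing machines --- so both arguments bottom out in the same source of hardness, but your route is self-contained where the paper's is off-the-shelf. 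The price you pay is the one detail your sketch glosses over: at a universal configuration, after the subtree of the first successor has been explored, the data variables no longer hold the configuration needed to generate the second successor, so you must explicitly save the polynomially many bits onto the stack (one guarded push per variable) before descending and pop/restore them on return; this keeps the construction polynomial and is standard, but it is precisely the ``stack discipline correctly tracks the alternation'' claim that needs to be spelled out. The paper's reduction avoids this bookkeeping entirely because the $n$ automata never need to be rolled back, which is what its choice of intermediate problem buys.
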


\smallskip\noindent A \emph{\bf Petri net} (\petri) is a tuple
$N=\langle P, T, m_0 \rangle$ where $P$ is a finite set of places, a
\emph{marking} of the places is function $m:P\rightarrow \NN$ that
associates, to each place $p\in P$ a number $m(p)$ of tokens, $T$ is
finite set of transitions, each transition $t\in T$ is a pair $(I_t,
O_t)$ where
          $I_t: P\rightarrow \{0,1\}$ and $O_t:
P\rightarrow \{0,1\}$ are respectively the \emph{input} and
\emph{output functions} of $t$, and $m_0$ is the \emph{initial
  marking}. Given two markings $m_1$ and $m_2$, we let $m_1\mleq m_2$
iff $m_1(p)\leq m_2(p)$ for all $p\in P$. Given a marking $m$, a
transition $t=(I_t, O_t)$ is \emph{enabled} in $m$ iff $m(p)\geq
I_t(p)$ for all $p\in P$. When $t$ is enabled in $m$, one can
\emph{fire} the transition $t$ in $m$, which produces a new marking
$m'$ s.t. $m'(p)=m(p)-I_t(p)+O_t(p)$ for all $p$. This is denoted
\smash{$m\xrightarrow{t} m'$}, or simply $m\rightarrow m'$ when the
transition identity is irrelevant.  A \emph{run} is a finite sequence
$m_0m_1\ldots m_n$ s.t. for all $1\leq i\leq n$: $m_{i-1}\rightarrow
m_i$.  For a \petri $N$, we denote by $\Reach(N)$ (resp. $\Cover(N)$)
the \emph{reachability (coverability) set} of $N$, i.e. the set of all
markings $m$ s.t. there exists a run $m_0m_1\ldots m_n$ of $N$ with
$m=m_n$ ($m\mleq m_n$). The \emph{coverability problem} asks, given a
\petri $N$ and a marking $m$, whether $m\in\Cover(N)$. It is
\dexpspace-complete~\cite{esparza-j-1998-374-a}. The termination
problem, i.e., whether all executions of the Petri net
are finite, is decidable in \dexpspacecomplete~\cite{lipton,rackoff}.

\section{Queue-dispatch asynchronous systems}

\paragraph{\bf Syntax:} We now define our formal model for
queue-dispatch asynchronous systems.
Let $\DD$ be a finite data domain containing an \emph{initial value}
$d_0$. A \emph{queue-dispatch asynchronous system} (\qdas) $\Aa$ is
a tuple $\tuple{ \CQID, \SQID, \Gamma, main, \Xx, \Sigma,
  (\Ts_\gamma)_{\gamma\in\Gamma}}$ where:
\begin{myitemize}
  \item $\CQID$ and $\SQID$ are respectively sets of
    \emph{(c)oncurrent} and \emph{(s)erial queues};
  \item $\Gamma$ is the finite set of \emph{blocks} and
    $main\in\Gamma$ the \emph{initial block}.  Each block
    $\gamma\in\Gamma$ is a tuple $\langle S_\gamma, s^0_\gamma,
    f_\gamma, \Sigma, \Delta_\gamma \rangle$ where $\langle S_\gamma,
    s^0_\gamma, \Sigma, \Delta_\gamma\rangle$ is an \lts and
    $f_\gamma\in S$ a distinct final state;
  \item $\Xx$ is a finite set of $\DD$-valued variables;
  \item $\Sigma$ is the set of \emph{actions}, with
    $\Sigma=(\{\disps,\dispa\}\times (\CQID\cup\SQID)$\linebreak[1] $\times
    \Gamma\setminus\{main\}) \cup \guardson{\Xx} \cup \assignon{\Xx}$.
\end{myitemize}

We assume that $\SQID, \CQID, \Gamma, \Xx$, and all $S_\gamma$ for
$\gamma\in\Gamma$ are disjoint from each other. Let
$S=\bigdotcup_{\gamma\in\Gamma}S_\gamma$,
$F=\bigdotcup_{\gamma\in\Gamma} \{f_\gamma\}$,
$\Delta=\bigdotcup_{\gamma\in\Gamma}\Delta_\gamma$, and
$\QID=\SQID\dotcup\CQID\dotcup\{\imath\}$ (where $\imath\notin\SQID\cup\CQID$).
We further assume that $\e\notin\Sigma$.

\paragraph{\bf Call-task graphs:}
We formalize the semantics of \qdas using the notion of
\emph{call-task graph} (\ctg) to describe the system's global
configurations.

A configuration of a \qdas (see Fig.\,\ref{fig:example-ctg} for an
example) contains a set of running tasks, represented by \emph{task
  vertices} (depicted by round nodes), a set of called but unscheduled
blocks, represented by \emph{call vertices} (square nodes). Call
vertices are held by queues, and the linear order of each queue is
represented by \emph{queue edges} (solid edges). Synchronous calls add
an additional dependency (the caller is waiting for the termination of
the callee) that is represented by a \emph{wait edge} (dashed edges)
between the caller and the callee. Wait edges are also inserted
between the head of a \emph{serial} queue and the running task that
has been extracted from this queue (if it exists) to indicate that the
task has to terminate before a new block can be dequeued.  Note that
only vertices without outgoing edges can execute a computation step,
the others are currently blocked.  Each node $v$ is labeled by a block
$\lambda(v)$, an by the identifier $\queue(v)$ of the queue that
contains it (for call vertices) or that contained it (for task
vertices). Task vertices are labeled by their current state
$\state(v)$ (for convenience, we also label call vertices by the
initial state of their respective blocks -- not shown in the figure).

\vspace{-1ex}
\begin{example}
  The \ctg in Fig.\,\ref{fig:example-ctg} depicts a configuration of a
  \qdas with two queues. Queue $q_2$ is serial (note the outgoing wait
  edge to the running task) and contains $\gamma_2\gamma_2\gamma_2$,
  and $q_1$ is parallel with content $\gamma_1\gamma_2$. There are 4
  active tasks, two of them (\texttt{main} and the task running
  $\gamma_1$) are blocked. The task running $\gamma_3$ has been
  dequeued from $q_2$ and is currently at location $s$. \eor
\end{example}
\vspace{-1ex}

\begin{figure}[t]
    \centering
        \begin{tikzpicture}
  [zstd/.style={state,font=\small,inner sep=1pt,minimum size=15pt},
  zstd2/.style={zstd,rectangle},
  lab/.style={font=\tiny,inner sep=1pt},
  anchor=west]

  \draw (0,0) node[zstd2] (00) {$\gamma_1$};
    \draw (00.south east) node[lab,anchor=south west] {$q_1$};
  \draw (00)+(1,0) node[zstd2] (01) {$\gamma_2$};
    \draw (01.south east) node[lab,anchor=south west] {$q_1$};
  \draw (00) edge[->] (01);

  \draw (4,0) node[zstd2] (10) {$\gamma_2$};
    \draw (10.south east) node[lab,anchor=south west] {$q_2$};
  \draw (10)+(1,0) node[zstd2] (11) {$\gamma_3$};
    \draw (11.south east) node[lab,anchor=south west] {$q_2$};
  \draw (11)+(1,0) node[zstd2] (12) {$\gamma_2$};
    \draw (12.south east) node[lab,anchor=south west] {$q_2$};
  \foreach \x / \y in {10/11,11/12}
    \draw (\x) edge[->] (\y);

  \draw (12)+(1.5,0) node[zstd] (13) {$\gamma_3$};
    \draw (13.south east) node[lab,anchor=north west] {$q_2$};
    \draw (13.north east) node[lab,anchor=south west] {$s$};
  \draw[dashed] (12) edge[->] (13);

  \draw (0.5,1) node[zstd] (main) {\main};
    \draw (main.south east) node[lab,anchor=north west] {$\imath$};
    \draw (main.north east) node[lab,anchor=south west] {$s$};
  \draw (main)+(1,0) node[zstd] (one) {$\gamma_1$};
    \draw (one.south east) node[lab,anchor=north west] {$q_1$};
    \draw (one.north east) node[lab,anchor=south west] {$s'$};
  \draw[dashed]
        (main) edge[->] (one)
        (one) edge[->,out=0,in=90] (11);

  \draw (2.5,0.1) node[zstd] (two) {$\gamma_2$};
    \draw (two.south east) node[lab,anchor=north west] {$q_1$};
    \draw (two.north east) node[lab,anchor=south west] {$s''$};

    \begin{pgfonlayer}{background}
      \path (01)+(0.5,0)  coordinate (dummy);
      \draw node[fit=(00) (01) (dummy),fill=gray!20,inner sep=4pt] (c1) {};
      \draw (c1.north west)
        node[anchor=south west,inner sep=2pt,font=\scriptsize]
        {queue $q_1$};
      \path (12)+(0.5,0)  coordinate (dummy);
      \draw node[fit=(10) (12) (dummy),fill=gray!20] (c1) {};
      \draw (c1.north west)
        node[anchor=south west,inner sep=2pt,font=\scriptsize]
        {queue $q_2$};
    \end{pgfonlayer}

  \draw[overlay]
  (13) +(0.5,0.8) node[draw,
  rectangle callout,callout absolute pointer={(13.east)+(0.1,0)},font=\scriptsize,
    fill=gray!5,text width=1.8cm, align=center]
  {$\lambda(v)=\gamma_3$ $\state(v)=s$ $\queue(v)=q_2$};
\end{tikzpicture}
\caption{\ctg for a \qdas with a concurrent queue $q_1$ and  a serial queue
    $q_2$\label{fig:example-ctg}}
    \vspace{-3ex}
  \end{figure}
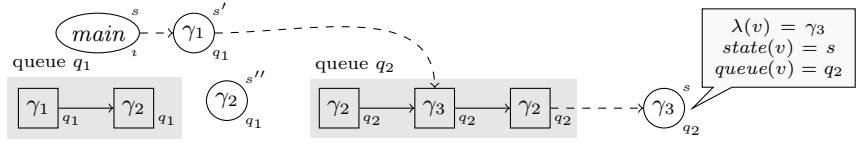

  Formally, given a \qdas $\Aa=\tuple{ \CQID, \SQID, \Gamma, main,
    \Xx, \Sigma, (\Ts_\gamma)_{\gamma\in\Gamma}}$, a \emph{call-task
    graph} over $\Aa$ is a tuple
  $\Graph_\Aa=\tuple{V,E,\lambda,\queue,\state}$ where: $V=V_C\dotcup
  V_T$ is a finite set of \emph{vertices}, partitioned into a set
  $V_C$ of \emph{call vertices} and a set $V_T$ of \emph{task
    vertices}; $E\subseteq V\times V$ is a set of \emph{edges};
  $\lambda: V\rightarrow \Gamma$ labels each vertex by a block;
  $\queue: V\rightarrow \QID\cup\{\imath\}$ associates each vertex to
  a queue  identifier
  (or $\imath$); and $\state: V\rightarrow S$ associates each vertex
  to a \lts state.
  For each $q\in\QID$, let $V_q=\set{v\in V}{\queue(v)=q}$. The set
  $E$ is partitioned into the set $E_W$ of \emph{wait edges} and the
  set $E_Q=\bigdotcup_{q\in\QID} E_q$ of \emph{queue edges} where, for
  each $q\in\QID$, $E_q=E\cap (V_q\times V_q)$.

A \ctg is \emph{empty} iff $V=\emptyset$. The \emph{Parikh image}
$\Parikh(\Graph)$ of a \ctg $\Graph$ of $\Aa$ is a function
$f:S\rightarrow \NN$, s.t. for all $s\in S$: $f(s)=|\set{v\in
  V}{\state(v)=s}|$. Given two Parikh images $\Parikh(G)$ and
$\Parikh(G')$, we let $\Parikh(G)\mleq\Parikh(G')$ iff for all $s\in
S$: $\Parikh(G)(s)\leq\Parikh(G')(s)$.
A \emph{path} (of length $n$) in $\Graph_\Aa$ is a sequence of
vertices $v_0,v_1,\ldots, v_n$ s.t. for all $1\leq i\leq n$:
$(v_{i-1},v_i)\in E$. Such a path is \emph{simple} iff $v_i\neq v_j$
for all $1\leq i<j\leq n$. The \emph{restriction} of $\Graph_\Aa$ to
$V'\subseteq V$ is the \ctg $\Graph_\Aa'=\tuple{V', E',\lambda',
  \queue', \state'}$, where $E'=E\cap (V'\times V')$, and $\lambda'$,
$\queue'$ and $\state'$ are respectively the restrictions of
$\lambda$, $\queue$ and $\state$ to $V'$.

In the rest of the paper, we  assume that all the \ctg we
consider are \emph{well-formed}, i.e., they fulfill the following
requirements:
\begin{enumerate}
  \item For each $v\in V_T$: $\state(v)\in S_{\lambda(v)}$
    where $S_{\lambda(v)}$ are the states of $\Ts_{\lambda(v)}$.
\item Each \emph{call} vertex has at most one outgoing (queue or wait)
  edge, at most one incoming \emph{wait} edge, and at most one
  incoming \emph{queue} edge. Each \emph{task} vertex has at most one
  outgoing, and at most one incoming \emph{wait} edge.
\item For each $q\in \QID$, the restriction of $\Graph_\Aa$ to $V_q$
  is either empty or contains one and only one simple path of length
  $|V_q|-1$. Intuitively, this ensures the well-formedness of the
  queues.
\item For each $q\in \SQID$, there is at most one task vertex $v$
  s.t. $\queue(v)=q$. This ensures that queues in $\SQID$ indeed
  force the serial execution of its members.
\end{enumerate}

For convenience, we also introduce the following notations. Let
$\Graph_\Aa$ be a \ctg, and let $q$ be a queue identifier of
$\Aa$. Then, $\head(q,\Graph_\Aa)$ and $\tail(q,\Graph_\Aa)$ denote
respectively the head and the tail of $q$ in the configuration
described by $\Graph_\Aa$, that is, $\head(q,\Graph_\Aa)$ is the call
vertex $v\in V_q$ that has no incoming queue edge, or $\bot$, if such
a vertex does not exist; and $\head(q,\Graph_\Aa)$ is the call vertex
$v\in V_q$ that has no outgoing \emph{queue} edge (but possibly an
outgoing \emph{wait} edge), or $\bot$, if such a vertex does not
exist. Remark that, when they exist, these vertices are necessarily
unique because of the well-formedness assumptions. Finally, we say
that a vertex $v$ is \emph{unblocked} iff it has no outgoing edge, and
that it is \emph{final} iff $(i)$ $v$ is an \emph{unblocked task}
vertex and $(ii)$ $\state(v)=f_\lambda(v)$ (that is, $v$ represents a
task that has reached the final state of its transition system and is
not waiting on another task).

Let us now define several operations on \ctg. We will rely on these
operations when defining the formal semantics of \qdas. Let $\Aa$ be a
\qdas and $\Graph_\Aa=\tuple{V,E,\lambda,\allowbreak \queue,\state}$
be a \ctg for $\Aa$. Then:
\begin{myitemize}
\item for all $v\in V$: $\Graph\setminus v$ is the restriction of
  $\Graph$ to $V\setminus\{v\}$.
\item for all $\gamma\in\Gamma$ and $q\in\QID$,
  $\enqueue(q,\gamma)(\Graph_\Aa)$ is the \ctg $\tuple{V', E',
    \lambda', \queue', \state}$ where: $V'=V\cup\{v'\}$, $v'$ is a
  fresh queue vertex, $\lambda(v')=\gamma$, $\queue(v')=q$,
  $\state(v')=s^0_\gamma$, and for all $v\in V$:
  $\lambda'(v)=\lambda(v)$ and $\queue'(v)=\queue(v)$.  Finally,
  $E'=E\cup E_1\cup E_2$, where: $(i)$
  $E_1=\{(v',\tail(\Graph_\Aa,q))\}$ if
  $\tail(\Graph_\Aa,q)\neq\bot$,
  and $E_1=\emptyset$ otherwise, and $(ii)$ if $v\in V$ is a
  \emph{task} node s.t. $\queue(v)=q\in\SQID$, then $E_2=\{(v',v)\}$,
  otherwise $E_2=\emptyset$. Intuitively, this operation inserts a
  call to $\gamma$ in the queue $q$, by creating a new vertex $v'$ and
  adding an edge to maintain the FIFO ordering, if necessary (set
  $E_1$). In the case of a \emph{serial} queue that was empty before
  the enqueue, a supplementary edge (in set $E_2$) might be necessary
  to ensure that $v'$ is blocked by a currently running $v$ which has
  been extracted from $q$.
\item for all $q\in\QID$, if $\head(q)$ is different from $\bot$ and
  \emph{unblocked}, then $\dequeue(q)(\Graph_\Aa)$ is the \ctg
  $\tuple{V_C'\dotcup V'_T, E', \lambda, \queue, \state}$ where
  $V'_C=V_C\setminus\{\head(q)\}$ and
  $V'_T=V'_T\cup\{\head(q)\}$. Otherwise, $\head(q)=\bot$ and
  $\dequeue(q)(\Graph_\Aa)$ is undefined. Intuitively, this operation
  removes the first (with respect to the FIFO ordering) block from $q$
  and turns the corresponding \emph{call} vertex $\head(q)$ into a
  \emph{task} vertex, meaning that the block is now running as a
  task.\item for all $\delta=(s,a,s')\in\Delta$, $\step(\delta)(\Graph_\Aa)$
  is a \emph{set} of \ctg defined as follows. $\tuple{V, E, \lambda,
    \queue, \state'}\in\step(\delta)(\Graph_\Aa)$ iff there exists an
  \emph{unblocked} $v\in V_T$ s.t. $\state(v)=s$, $\state'(v)=s'$ and
  for all $v'\neq v$: $\state'(v')=\state(v')$. Remark that
  $\step(\delta)(\Graph_\Aa)$ can be empty. Intuitively, each graph in
  $\step(\delta)(\Graph_\Aa)$ corresponds to the firing of an
  $a$-labeled transition by a task that is not blocked.
\item for all unblocked $v\in V\cup\{\bot\}$, all $v'\in V$:
  $\letwait(v,v')(\Graph_\Aa)$ is either the \ctg $\Graph_\Aa$ if
  $v=\bot$, or the \ctg $\tuple{V, E\cup (v,v'),\lambda, \queue,
    \state}$ if $v\neq \bot$. Intuitively, this operation adds a wait
  edge between nodes $v$ and $v'$ when $v\neq\bot$, and does not
  modify the \ctg otherwise.
\end{myitemize}

\paragraph{\bf Semantics of \qdas:} For a \qdas $\Aa$ with set of variables
$\Xx$, a \emph{configuration} is a pair $(\Graph,\Data)$, where
$\Graph$ is a \ctg of $\Aa$ and $\Data\in\valof{\Xx}$.  The
operational semantics of $\Aa$ is given as a transition system
$\sem{\Aa}$ whose states are configurations of $\Aa$; and whose
transitions reflect the semantics of the actions labeling the
transitions of the \qdas. Formally, given a \qdas $\Aa=\tuple{\CQID,
  \SQID, \Gamma, main, \Xx, \Sigma, (\Ts_\gamma)_{\gamma\in\Gamma}}$,
$\sem{\Aa}$ is the labeled transition system $\langle C, c^0, \SSigma,
\Longrightarrow \rangle$ where: $(i)$ $C$ contains all the pairs
$(\Graph,\Data)$ where $\Data\in\valof{\Xx}$, and $\Graph$ is a \ctg
of $\Aa$, $(ii)$ $c^0=(\Graph^0,\Data^0)$ with $\Data^0(x)=d_0$ for
all $x\in\Xx$, and $\Graph^0=\tuple{\{v^0\},\emptyset,\lambda,
  \queue,\state}$, where $v^0$ is a \emph{task node},
$\lambda(v^0)=main$, $\state(v^0)=s^0_{\text{main}}$ and
$\queue(v^0)=\imath$, $(iii)$ $\SSigma=\Sigma\dotcup\{\e\}$ and $(iv)$
$\big((\Graph,\Data),a,(\Graph',\Data')\big)\in\Longrightarrow$ iff
one of the following holds:
\begin{description}
\item[Async. dispatch:] $a=\dispa(q,\gamma)$, $\Data'=\Data$, and
  there are $\delta=(s,a,s')\in \Delta$ and
  $\Graph''\in\step(\delta)(\Graph)$ s.t.:
  $\Graph'=\enqueue(q,\gamma)(\Graph'')$.
\item[Sync. dispatch:] $a=\disps(q,\gamma)$, $\Data'=\Data$ and there
  are $\delta=(s,a,s')\in \Delta$ and
  $\Graph''\in\step(\delta(\Graph))$
  s.t.: $\Graph'=\letwait(v,v')\big(\enqueue(q,\gamma)(\Graph'')\big)$
  where $v$ is the node whose $\state$ has changed during the $\step$
  operation, and $v'$ is the fresh node that has been created by the
  $\enqueue$ operation. That is, a queue vertex $v'$ labeled by
  $\gamma$ is added to $q$ and a \emph{wait} edge is added between the
  node $v$ representing the task that performs the \emph{synchronous}
  dispatch, and $v'$, as the dispatch is \emph{synchronous}.
\item[Test:] $a=g\in\guardson{\Xx}$, $\Data'=\Data$, $\Data\models g$,
  and there is $\delta=(s,a,s')\in \Delta$ s.t.
  $\Graph'\in\step(\delta)(\Graph)$.
\item[Assignment:] $a=x\gets v\in\assignon{\Xx}$, $\Data'(x)=v$, for
  all $x'\neq x$: $\Data'(x)=\Data(x)$ and there is
  $\delta=(s,a,s')\in \Delta$
  s.t. $\Graph'\in\step(\delta)(\Graph)$.
\item[Scheduler action:] $a=\e$, $\Data'=\Data$ and:
  \begin{itemize}
  \item either there is a final vertex $v$
    s.t. $\Graph'=\Graph\setminus v$;
  \item or there is $q\in\CQID$ s.t. $\head(q,\Graph)\neq\bot$ and
    $\Graph'=\dequeue(q)(\Graph)$. That is, the scheduler schedules a
    block (represented by $v$) from a concurrent queue.
  \item or there is $q\in \SQID$ s.t.  $\head(q,\Graph)=v$, $v$ is
    \emph{unblocked}, as well as $\Graph'=\letwait(\head(q,G''),v)(G'')$ and
    $G''=\dequeue(q)(\Graph)$. That is, the scheduler schedules a
    block (represented by $v$) from the serial queue $q$. As the queue
    is serial, a \emph{wait} edge is inserted between the next waiting
    block in $q$ (now represented by $\head(q,G'')$) and $v$.
  \end{itemize}

 \end{description}

 A \emph{run} $\rho$ of a \qdas is an alternating sequence $c_0
 a_1 c_1 a_2\dots a_n c_n$ of configurations and actions where
 $(c_i,a_{i+1},c_{i+1})\in\Longrightarrow$ for all $0\leq i < n$ and
 $c_0=c^0$. A run is \emph{finite} if this sequence is
 finite.
 A configuration $c$ is \emph{reachable} in $\Aa$ iff there
 exists a finite run $c_0 a_1 c_1 a_2\dots a_n c_n$ of $\Aa$ s.t. $c_n=c$. We
 denote by $\Reach(\Aa)$ the set of all reachable configurations
 of~$\Aa$.

 The decision problem on \qdas we mainly consider in this work is the
 \emph{Parikh coverability problem}: given a \qdas $\Aa$ with set of
 locations $S$ and a function $f:S\mapsto \NN$, it asks whether
 there is $c=(\Graph,\Data)\in\Reach(\Aa)$
 s.t. $f\mleq\Parikh(\Graph)$. When the answer to this question is `yes',
 we say that $f$ is \emph{Parikh-coverable} in $\Aa$. It is well-known
 that meaningful verification questions can be reduced to this
 problem. For instance, consider a \emph{mutual exclusion} question,
 asking whether it is possible to reach, in a \qdas $\Aa$, a
 configuration in which at least two tasks are executing the same
 block $\gamma$ and are in the same control state $s$. If yes, the
 mutual exclusion (of control state $s$) is violated. This can be
 encoded into an instance of the Parikh coverability problem, where
 $f(s)=2$ and $f(s')=0$ for all $s'\neq s$, and would allow,
 for example, to verify if there are more than one block of type
 \texttt{increase} running in
 Example\,\ref{ex:gcd-matrixmult}.

 In addition, we look at the \emph{(universal) termination problem}:
 given a \qdas $\Aa$, it asks whether all executions of $\Aa$ are
 finite, i.e., there is no infinite run of $\Aa$.
 Regarding Example\,\ref{ex:gcd-matrixmult}, this
 permits to test whether the \texttt{main} task terminates, i.e.,
 all dispatched blocks terminate.

\section{From the Parikh coverability problem to Termination\label{sec:parikh-cover-probl}}

Before regarding the termination problem, we first study
in this section the Parikh coverability problem from a
computational point of view. As expected, this problem is undecidable
in general. However, when restricting the types of queues and
dispatches that are allowed, it is possible to retain decidability. In
these cases, we characterize the complexity of the problem. Formally,
we consider the following subclasses of \qdas. A \qdas $\Aa$ with set
of transitions $\Delta$, set of serial queues $\SQID$ and set of
concurrent queues $\CQID$, is \emph{synchronous} iff there exists no
$(s,a,s')\in\Delta$ with $a\in\{\cfont{dispatch\_a}\}\times \QID
\times \Gamma$; it is \emph{asynchronous} iff there exists no
$(s,a,s')\in\Delta$ with $a\in\{\cfont{dispatch\_s}\}\times \QID
\times \Gamma$; it is \emph{concurrent} iff $\SQID=\emptyset$ and
$\CQID\neq\emptyset$; it is \emph{serial} iff $\CQID=\emptyset$ and
$\SQID\neq\emptyset$; it is \emph{queueless} iff
$\CQID=\SQID=\emptyset$.

\paragraph{\bf Queueless \qdas:} In a queueless \qdas, there is no dispatch
possible, so the only task that can execute at all time is the
\texttt{main} one. Thus, configurations of queueless \qdas can be
encoded as tuples $(s,\Data)$, where $s$ is a state of
\texttt{main}, and $\Data$ is a valuation of the variables.
Hence queueless \qdas are essentially \lts with variables over
a finite data domain, thus:
\begin{proposition}\label{prop:queueless}
  The Parikh coverability is \pspacecomplete
  for queueless \qdas.
\end{proposition}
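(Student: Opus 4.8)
The plan is to prove both bounds separately, exploiting the structural observation already recorded before the statement: since $\CQID=\SQID=\emptyset$, the action alphabet contains no dispatch action, so along any run the \ctg neither grows nor shrinks and always consists of the single \main-task vertex $v^0$ created in $c^0$. Consequently, every reachable configuration is completely determined by the pair $(\state(v^0),\Data)$ with $\state(v^0)\in S_{\text{main}}$ and $\Data\in\valof{\Xx}$, which can be stored in space $O(\log|S_{\text{main}}| + |\Xx|\cdot\log|\DD|)$, i.e.\ polynomial in the size of $\Aa$. For the \emph{upper bound}, I would first note that for every reachable \ctg $\Graph$ the image $\Parikh(\Graph)$ is just the $0/1$ indicator of $\state(v^0)$; hence $f\mleq\Parikh(\Graph)$ can hold only if $f$ is supported on a single state $s^*$ with $f(s^*)\le 1$. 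If $f\equiv 0$ the answer is trivially ``yes'' (witnessed by $c^0$), and if $f$ has two positive entries or any entry $\ge 2$ the answer is ``no''. In the remaining case the question collapses to plain state reachability: can \main reach $s^*$ from $s^0_{\text{main}}$ under the Test/Assignment semantics? I would decide this by a nondeterministic walk that keeps only the current $(\state(v^0),\Data)$, guesses the next $\delta\in\Delta$, and in polynomial time checks the guard (for a Test) or applies the update (for an Assignment). This is an \textsc{NPSpace} procedure, so Savitch's theorem yields membership in \pspace.

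For the \emph{lower bound}, I would reduce from acceptance of a fixed deterministic, $p(n)$-space-bounded Turing machine $M$ on input $w$, a canonical \pspacecomplete reachability problem. Given $(M,w)$, I construct a queueless \qdas whose variables encode a configuration of $M$: one variable $c_i$ per tape cell $i\le p(|w|)$ ranging over the tape alphabet, one head-position variable ranging over $\{1,\dots,p(|w|)\}$, and the control state of $M$ kept in the states of \main. A single move of $M$ is simulated by a short gadget of transitions, since each transition carries only one action: a Test transition reads the scanned symbol via a guard of the form ``head $=i$ and $c_i=a$'' (branching over the polynomially many pairs $(i,a)$), followed by Assignment transitions that overwrite the cell, move the head, and update the control state. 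Reaching the state encoding acceptance is then equivalent to $M$ accepting $w$, which I phrase as the Parikh-coverability instance $f=\mathbf{1}_{s^*}$ for the corresponding \main-state $s^*$. The entire construction is polynomial, so \pspace-hardness follows.

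The main obstacle is not conceptual but lies in this hardness direction: the action language is deliberately weak --- guards are conjunctions of atoms $x=d$ / $x\neq d$ and assignments touch a single variable --- so one must carefully decompose each Turing-machine step into several guard/assignment micro-steps and branch over the head position, all while keeping the encoding polynomial in $|M|+|w|$. This is routine but is the place where attention is required; the upper bound, by contrast, is immediate once one observes that Parikh coverability here degenerates to single-state reachability in a succinctly represented, exponential-size configuration space.
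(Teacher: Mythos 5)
Your proof is correct and follows essentially the same route as the paper, which simply observes that queueless \qdas degenerate to transition systems over the pairs $(\state(v^0),\Data)$ and invokes the standard \pspace-completeness of reachability for finite-state systems with finite-domain variables; your explicit \textsc{NPSpace}-plus-Savitch upper bound and the reduction of coverability to single-state reachability make this precise. The only (cosmetic) divergence is in the hardness source: the paper's intended lower bound goes through emptiness of the intersection of $n$ finite automata (Kozen), whereas you encode a polynomial-space Turing machine directly into the variables --- both are routine and polynomial, so nothing is lost.
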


\paragraph{\bf Synchronous \qdas:}

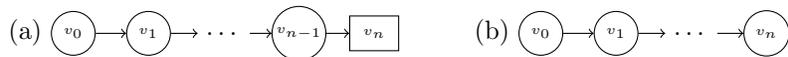
\begin{figure}[!t]
  \centering(a)
  \begin{tikzpicture}[baseline=-0.5ex,every node/.style={inner sep=3pt}]
    \foreach \x / \z in {0/0,1/1}
      \draw(1*\x,0) node[circle,draw] (\x) {\tiny{$v_\z$}};
    \draw (2,0) node (2) {\dots};
    \draw(3,0) node[circle,draw,inner sep=1pt] (3) {\tiny{$v_{n-1}$}};
    \draw(4,0) node[rectangle,inner sep=5pt,draw] (4) {\tiny{$v_n$}};
    \foreach \x / \y in {0/1,1/2,2/3,3/4}
    \draw (\x) edge [->] (\y);
  \end{tikzpicture}\hspace*{1cm}(b)
\begin{tikzpicture}[baseline=-0.5ex,every node/.style={inner sep=3pt}]
    \foreach \x / \z in {0/0,1/1}
      \draw(1*\x,0) node[circle,draw] (\x) {\tiny{$v_\z$}};
    \draw(3,0) node[circle,draw] (3) {\tiny{$v_n$}};
    \draw (1*2,0) node (2) {\dots};
    \foreach \x / \y in {0/1,1/2,2/3}
    \draw (\x) edge [->] (\y);
  \end{tikzpicture}
  \caption{The two possible forms of reachable {\ctg}s in a synchronous \qdas}
  \label{fig:config-sync}
\end{figure}

In synchronous \qdas, there is no concurrency in the sense there is at
most one running task that can fire an action at all times. All the
other tasks have necessarily performed a \emph{synchronous} dispatch
and are thus blocked. More precisely, in every reachable configuration
$(\Graph,\Data)$ of a synchronous \qdas, $\Graph$ is of one of the
forms depicted in Fig.~\ref{fig:config-sync}
(i.e. $v_0,\dots,v_{n-1}\in V_T$ and either $v_n\in V_T$ or $v_n\in
V_C$). When the current \ctg is of the form
Fig.~\ref{fig:config-sync}(a), the only possible action is that the
scheduler starts running $v_n$'s block and we obtain a graph of the
form Fig.~\ref{fig:config-sync}(b). In the case where the \ctg is of
the form (a), either $v_n$ terminates, which removes $v_n$ from the
\ctg, or $v_n$ executes an internal action, which does not change the
shape of the \ctg, or $v_n$ does a synchronous call, which adds a call
vertex as successor of $v_n$ which will be directly
scheduled. W.l.o.g., we assume in the following that for synchronous
\qdas the combined action of \disps and scheduling the dispatched
block is atomic.

                                For a \ctg $\Graph$ and $w\in S^*$, we write
 $\Graph\triangleright w$ iff for all $0\leq i\leq n$:
$w_i=\state(v_i)$ and the empty \ctg is mapped to the empty word $\e$.
Given a synchronous \qdas $\Aa$ with set of local
states $S$ as before, we can build a pushdown system with data
$\Pp_\Aa$ such that, at all times, the current location of $P_\Aa$ encodes the
current location of the (single) running block in $\Aa$, and the stack
content records the sequence of synchronous dispatches, as described
above. A guard or assignment in $\Aa$ is kept as is in $\Pp_\Aa$. A
synchronous dispatch $(s,\disps(q,\gamma),s')$ in $\Aa$ is simulated
by a push of $s'$ (to record the local state that has to be reached
when the callee terminates) and moves the current state of $\Pp_\Aa$ to
the initial state of $\gamma$. The termination of a block is simulated
by a pop (and we encode the termination of
$main$ in testing the stack's emptiness).

\begin{restatable}{proposition}{propsyncqdaspds}
\label{prop:sync_qdas_pds}  Given a synchronous \qdas $\Aa$, then we can construct  a
  pushdown system with data $\Pp_\Aa$ such that the following holds:
  for any
  run $\rho=c_0a_1c_1\dots a_nc_n$ of $\Aa$,
  there exists a run $\pi=x_0a_1x_1\dots a_nx_n$ in $\Pp_\Aa$ such that
  for all $c_i=(\Graph_i,\Data_i)$ and $x_i=(s_i,w_i,\Data'_i)$
  we have $\Data_i=\Data_i'$ and $\Graph_i\triangleright w_i$ ($0\leq i \leq n$),
  and vice versa.
\end{restatable}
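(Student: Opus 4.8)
The plan is to establish the correspondence between runs of the synchronous \qdas $\Aa$ and runs of the constructed pushdown system $\Pp_\Aa$ by induction on the length of the run, proving both directions (the ``and vice versa'') simultaneously through a strong invariant. First I would make the construction of $\Pp_\Aa$ fully explicit: its control states are $S = \bigdotcup_{\gamma} S_\gamma$ (the local states of all blocks), augmented with an emptiness-test facility; its stack alphabet is also $S$; and its data variables are exactly $\Xx$ over $\DD$. The transitions mirror the four kinds of \qdas actions. Guards and assignments become the identical guard/assignment transitions in $\Pp_\Aa$, leaving the stack untouched. A synchronous dispatch transition $(s,\disps(q,\gamma),s') \in \Delta$ becomes a push of $s'$ together with a move of the control state from $s$ to $s^0_\gamma$; this records that upon the callee's termination, control must resume at $s'$. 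Termination of a (non-\main) block in state $f_\gamma$ becomes a pop that reads the recorded return state $s'$ off the top of the stack and moves control to $s'$; termination of \main is encoded as an emptiness test on the stack.

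The heart of the argument is the invariant, formalized by the relation $\Graph \triangleright w$ from the excerpt. I would prove by induction on $i$ that for every prefix of a run of $\Aa$ there is a matching prefix of a run of $\Pp_\Aa$ with $\Data_i = \Data_i'$ and $\Graph_i \triangleright w_i$, where in addition the current control state $s_i$ of $\Pp_\Aa$ equals the state $\state(v_n)$ of the unique unblocked task vertex of $\Graph_i$ (or is the designated ``main-terminated'' state when $\Graph_i$ is empty). The base case is immediate: the initial \ctg $\Graph^0$ consists of the single task node for \main in state $s^0_{\main}$, which matches the initial configuration of $\Pp_\Aa$ with empty stack and word $w_0 = s^0_{\main}$. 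For the inductive step I would do a case analysis on the type of the transition $a_{i+1}$. The key structural fact, already observed in the surrounding text, is that every reachable \ctg of a synchronous \qdas has the chain form of Fig.~\ref{fig:config-sync}: a path $v_0,\dots,v_n$ with $v_0,\dots,v_{n-1}$ blocked task vertices (each waiting synchronously on its successor) and $v_n$ the single unblocked vertex. Under the \emph{w.l.o.g.} atomicity assumption made just before the statement, a \disps step immediately followed by scheduling adds one new unblocked task vertex at the tail and pushes the corresponding return state, so the word grows by one letter on the right exactly as the stack grows by one symbol --- preserving $\Graph \triangleright w$ and the control-state correspondence. Termination removes the tail task vertex and pops, again preserving the invariant. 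Guard and assignment steps neither change the \ctg shape nor the stack and act identically on $\Data$, so the invariant is trivially maintained.

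The converse direction ($\Pp_\Aa$-runs to $\Aa$-runs) follows from the same invariant read backwards: since the construction is essentially a bijection on transition types and the chain-shape of reachable \ctg{}s guarantees that each stack configuration of $\Pp_\Aa$ corresponds to a unique well-formed \ctg (the stack content together with the current control state determines $w$, and $w$ determines $\Graph$ up to isomorphism via $\triangleright$), every transition available in $\Pp_\Aa$ lifts to an available transition in $\Aa$ with the same label and matching data valuation. The main obstacle I anticipate is bookkeeping the correspondence cleanly across the synchronous-dispatch case: one must be careful that the push records precisely the \emph{return} state $s'$ rather than the callee's entry state, and that the atomicity assumption genuinely lets us treat \disps-plus-scheduling as a single step without losing any intermediate \ctg that the Parikh image would need to distinguish --- justifying that assumption (i.e. that no reachable Parikh image is missed by collapsing these two micro-steps) is where the argument requires the most care. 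Once the invariant is stated with the auxiliary control-state-equals-$\state(v_n)$ clause, the remaining cases are routine verifications that each operation on \ctg{}s (namely $\enqueue$, $\dequeue$, $\step$, $\letwait$) translates to the matching stack operation.
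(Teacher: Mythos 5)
Your proposal matches the paper's own proof: the same construction (push the return state $s'$ on a synchronous dispatch while jumping to $s^0_\gamma$, pop on block termination, an emptiness test for the termination of \main), the same induction on run length with a case analysis over dispatch, test/assignment, and termination, and the same reliance on the chain shape of reachable {\ctg}s plus the atomicity assumption on dispatch-and-schedule. The only cosmetic difference is that you track the control state as a separate invariant clause, whereas the paper folds it into the encoded word as $x = w\cdot y$; this changes nothing substantive.
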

\inputproof{proofs/lem:sync_qdas_pds.tex}

The previous proposition allows to derive results on the reachability
problem. However, we are interested in the Parikh coverability
problem. Let $f$ be a Parikh image of $\Aa$. Then, by
Proposition~\ref{prop:syncqdas_parikh_sim}, looking for a reachable
configuration of $\Aa$ that covers $f$ amounts to finding a reachable
configuration $(s_i,w_i,\Data_i)$ of $\Pp_\Aa$ s.t. the Parikh image
$P$ of $w_i$ is s.t.  $f\mleq P$ (as the \ctg is encoded by the stack
content $w_i$). To achieve this, we augment $\Pp_\Aa$ with a
\emph{widget} that works as follows. In any location of $\Pp_\Aa$, we
can jump non-deterministically to the widget. Then, the widget pops
all the values from the stack, and checks that at least $f(s)$ symbols
$s$ are present on the stack. The widget jumps to an accepting state
iff it is the case.  We call $\Pp_{\Aa,f}$ the resulting
\pds. Clearly, one can build such a widget for all $f$, and this
effectively reduces the Parikh coverability problem of \qdas to the
location reachability problem of \pds. Moreover, for all $f$, the
widget is of size exponential in $\vert S\vert$ and exponential
in the binary encoding of
 $max_{s\in S}f(s)$. Hence, building $\Pp_{\Aa,f}$
requires exponential time:

\begin{restatable}{proposition}{propsyncqdasparikhsim}
  \label{prop:syncqdas_parikh_sim}
  Given a synchronous \qdas $\Aa$ with states $S$ and a function
  \mbox{$f:S\rightarrow \NN$}, then one can generate a \pds
  $\Pp_{\Aa,f}$ of size exponential in $\Aa$ and a state $s$ of
  $\Pp_{\Aa,f}$, s.t. $\Pp_{\Aa,f}$ reaches $s$ iff $f$ is Parikh
  coverable in $\Aa$.
\end{restatable}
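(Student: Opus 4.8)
The plan is to take the pushdown system with data $\Pp_\Aa$ furnished by Proposition~\ref{prop:sync_qdas_pds} and bolt onto it a finite-control \emph{widget} that inspects the stack, so that reaching a distinguished location of the augmented system becomes equivalent to Parikh-coverability of $f$ in $\Aa$. The key observation, already implicit in Proposition~\ref{prop:sync_qdas_pds}, is that the relation $\Graph\triangleright w$ puts the vertices of the chain $\Graph$ in state-preserving bijection with the positions of $w$; hence for every reachable configuration $(\Graph,\Data)$ of $\Aa$ and its counterpart $(s,w,\Data)$ of $\Pp_\Aa$ we have $\Parikh(\Graph)(s')=\veve{w}_{s'}$ for every $s'\in S$. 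Consequently $f\mleq\Parikh(\Graph)$ holds exactly when $\veve{w}_{s'}\geq f(s')$ for all $s'$, so it suffices to recognise, by reading the stack, those configurations whose stack word has Parikh image dominating $f$.

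First I would describe the widget. It keeps in its finite control a \emph{capped profile}, i.e.\ a map $g\colon S\to\NN$ with $g(s')\leq f(s')$ for every $s'$; there are $\prod_{s'\in S}(f(s')+1)$ such profiles. From any location of $\Pp_\Aa$ the system may jump nondeterministically into the widget, entering with the all-zero profile. Each widget step pops one symbol $s'$ from the stack and replaces the current profile $g$ by $g'$, where $g'(s')=\min(g(s')+1,f(s'))$ and $g'(s'')=g(s'')$ for $s''\neq s'$; the data valuation is ignored throughout. Because each counter is capped at $f(s')$, the profile records exactly whether at least $f(s')$ copies of $s'$ have already been popped, and only finitely many profiles arise. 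When the stack becomes empty -- detected by the same bottom-of-stack test that is already used to encode termination of $\main$ -- the widget moves to the target location iff the current profile satisfies $g(s')=f(s')$ for all $s'$, i.e.\ iff every symbol occurred at least $f(s')$ times; this target location is the state $s$ promised by the statement.

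For correctness I would argue both directions via Proposition~\ref{prop:sync_qdas_pds}. If $f$ is Parikh-coverable, witnessed by a reachable $(\Graph,\Data)$ with $f\mleq\Parikh(\Graph)$, then there is a reachable $(s,w,\Data)$ of $\Pp_\Aa$ with $\Graph\triangleright w$, whence $\veve{w}_{s'}\geq f(s')$ for all $s'$; jumping into the widget and popping all of $w$ drives every capped counter to its bound, so the target is reached. Conversely, any run reaching the target entered the widget from some reachable $(s,w,\Data)$ of $\Pp_\Aa$, and the widget accepts only after emptying the stack with $g=f$, so $\veve{w}_{s'}\geq f(s')$ for every $s'$; Proposition~\ref{prop:sync_qdas_pds} then yields a reachable $(\Graph,\Data)$ of $\Aa$ with $\Parikh(\Graph)(s')=\veve{w}_{s'}\geq f(s')$, i.e.\ $f\mleq\Parikh(\Graph)$. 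Destroying the stack inside the widget is harmless, since we only need to reach the target, not to restore a configuration of $\Aa$. For the size bound, $\Pp_\Aa$ is polynomial in $\Aa$ by the construction underlying Proposition~\ref{prop:sync_qdas_pds}, while the widget contributes $\prod_{s'\in S}(f(s')+1)\leq(\max_{s'\in S}f(s')+1)^{\veve{S}}$ locations, exponential in $\veve{S}$ and in the binary encoding of $\max_{s'\in S}f(s')$, and its transitions are generated in time polynomial in this count; hence $\Pp_{\Aa,f}$ has size exponential in $\Aa$ and $f$. The only genuine difficulty is performing the count inside the finite control of a pushdown system: unbounded counting is impossible, but since coverability only demands \emph{at least} $f(s')$ occurrences, capping each counter at $f(s')$ keeps the control finite, at the unavoidable cost of an exponential blow-up.
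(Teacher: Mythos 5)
Your overall strategy coincides with the paper's: nondeterministically hand control to a finite-state widget that pops the stack and, with counters saturated at $f$, checks that every $s'\in S$ occurs at least $f(s')$ times (the paper's automaton $\Ff_f$ counts \emph{down} from $f$ to zero rather than up to $f$, which is immaterial). There is, however, a concrete gap in your key identity $\Parikh(\Graph)(s')=\veve{w}_{s'}$. In the construction of $\Pp_\Aa$, the state of the \emph{currently running} task is kept in the control location of the pushdown system, and the stack records only the return states of the blocked callers; the correspondence that actually holds is $\Graph\triangleright w\cdot y$ where $y$ is the current control state, so $\Parikh(\Graph)(s')=\veve{w\cdot y}_{s'}$, which exceeds $\veve{w}_{s'}$ by one at $s'=y$. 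Since your widget is entered by a silent jump, it counts only the stack and so undercounts by exactly this one occurrence; the reduction then loses completeness. For instance, take $f(s^0_{\main})=1$ and $f(s')=0$ elsewhere: $f$ is covered already by the initial configuration, yet your widget sees an empty stack, ends with the all-zero profile, and never reaches the target. The paper repairs this by making the entry into the widget a $\push$ of the current control state, so that the running task is deposited on the stack and counted with the rest; you need the same, or equivalently you must initialise the profile with $g(y)=\min(1,f(y))$ at the moment of the jump.

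A secondary, easily fixed omission: $\Pp_\Aa$ is a pushdown system \emph{with data}, whereas the statement (and the \dexptime upper bound derived from it via \ptime emptiness of dataless \pds) requires a plain \pds. The paper first folds the valuations of $\Xx$ into the control locations, and this data elimination --- not only the widget --- contributes to the exponential blow-up; your size analysis attributes the exponential solely to the $\prod_{s'\in S}(f(s')+1)$ widget states and should account for this extra factor.
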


As testing emptiness of a pushdown system without data is
\ptimecomplete~\cite{bouajjani-a-1997-135-a}, the
Parikh coverability problem is in \dexptime for \emph{synchronous}
\qdas (with both types of queues). A matching lower bound is obtained
by reducing the reachability question of \pds with data (see
Proposition~\ref{prop:reach_pds_data}). This reduction requires only
one \emph{concurrent} queue, so the Parikh reachability problem is
\dexptime-hard for \emph{synchronous concurrent} \qdas. Hence we
derive the following:

\begin{theorem}\label{thm:syncqdas}
  The Parikh coverability problem is \dexptimecomplete for synchronous
  and for synchronous concurrent \qdas.
\end{theorem}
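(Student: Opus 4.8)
The plan is to prove the two matching bounds separately and then observe that they pin down both classes named in the statement. For the \textbf{upper bound} I would start from Proposition~\ref{prop:syncqdas_parikh_sim}, which already reduces, in exponential time, any instance $(\Aa,f)$ of the Parikh coverability problem for a synchronous \qdas to a location-reachability instance in a \pds with data $\Pp_{\Aa,f}$ of size exponential in $\Aa$. The only remaining step is to eliminate the data. Since the variables in $\Xx$ range over the finite domain $\DD$, I would take the synchronous product of the control states of $\Pp_{\Aa,f}$ with the set $\valof{\Xx}$ of valuations, of cardinality $\size{\DD}^{\size{\Xx}}$, so as to track the variables inside the control state; this yields a plain pushdown system $\Pp'_{\Aa,f}$ that is still of size exponential in $\Aa$. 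Location reachability (equivalently, emptiness) for a plain pushdown system is solvable in \ptime in the size of the system~\cite{bouajjani-a-1997-135-a}, hence running this algorithm on the exponential-size $\Pp'_{\Aa,f}$ costs exponential time overall. As Proposition~\ref{prop:syncqdas_parikh_sim} is stated for synchronous \qdas with \emph{both} kinds of queues, this places the problem in \dexptime for the whole synchronous class, and a fortiori for synchronous concurrent \qdas.

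For the \textbf{lower bound} I would reduce the reachability problem for \pds with data, which is \dexptime-hard by Proposition~\ref{prop:reach_pds_data}, to Parikh coverability for synchronous concurrent \qdas. The reduction is essentially the converse of the encoding underlying Proposition~\ref{prop:sync_qdas_pds}: from a \pds with data $\Pp$ I would build a synchronous \qdas that uses a \emph{single} concurrent queue, in which a push is simulated by a synchronous dispatch (which, under the atomicity convention for \disps adopted just before Proposition~\ref{prop:sync_qdas_pds}, immediately schedules the callee and blocks the caller on a wait edge), and a pop is simulated by the termination of the running block, which unblocks the waiting caller. The finite control state of $\Pp$ is carried in a dedicated global variable over an enlarged finite domain, the stack symbols are encoded by block identities (one block per symbol) together with the return location stored in the caller's post-dispatch state, and the data variables of $\Pp$ are shared directly via $\Xx$. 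All of this stays polynomial in $\Pp$, so the hardness transfers. Finally, I would express the target control-state reachability of $\Pp$ as a Parikh goal $f$ asking for a single task in the \qdas state encoding that control state, using a dedicated marker state so that no blocked caller is ever counted there.

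Combining the two bounds gives \dexptimecomplete for synchronous concurrent \qdas; since the hard instances above already lie in this subclass while the \dexptime membership holds for \emph{all} synchronous \qdas, the same completeness result follows for the full synchronous class with both queue types. I expect the main obstacle to be the lower-bound reduction: an arbitrary \pds may push any stack symbol while jumping to any control state, whereas every \qdas block has a single initial state and every dispatch must respect the FIFO/concurrent scheduling and the well-formedness constraints on {\ctg}s. The delicate points are therefore $(i)$ encoding the pair (pushed symbol, next control state) into the block identity and the global variable without blowing up the construction, and $(ii)$ arguing that the Parikh image faithfully witnesses control-state reachability -- that is, that the \emph{unique} running task, and never a blocked one, occupies the marked target state -- which I would justify using the shape of reachable {\ctg}s described in Fig.~\ref{fig:config-sync}.
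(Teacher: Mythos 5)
Your proposal is correct and follows essentially the same route as the paper: the upper bound via Proposition~\ref{prop:syncqdas_parikh_sim} (reduction to an exponential-size pushdown system whose emptiness is testable in \ptime, with the finite-domain data folded into the control states), and the lower bound by reducing reachability of \pds with data (Proposition~\ref{prop:reach_pds_data}) to a synchronous \qdas with a single concurrent queue in which a push becomes a synchronous dispatch, a pop becomes task termination, and the control state lives in a global variable — exactly the construction the paper gives in its appendix (Proposition~\ref{prop:sync_pds_qdas}). The delicate points you flag, including witnessing the target control state through the Parikh image of the unique unblocked task, are handled the same way there.
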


Let us take a closer look on the dispatches that happen in runs of
synchronous \qdas that have only \emph{serial} queues.
Here, each task except the \texttt{main} task blocks the queue it is
started
from. Hence, any other block dispatched to these already blocked
queues deadlocks. Thus, all reachable \ctg have at most
$\vert SQID\vert +2$ vertices. Hence, the pushdown systems used in
all previous constructions have bounded stack height, and we can
apply test on a finite transition system.
 The lower bound can be
derived from Proposition\,\ref{prop:queueless}.
by testing the emptiness of the intersection
of $n$ finite processes, that is $\pspace$-complete~\cite{kozen-d-1977-254-a}.

\begin{theorem}\label{thm:serialsyncqdas}
  The Parikh coverability problem is \pspacecomplete for serial
  synchronous \qdas.
\end{theorem}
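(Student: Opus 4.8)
The plan is to establish membership in \pspace\ by proving a structural bound on the size of reachable {\ctg}s and to obtain \pspace-hardness by reusing the construction behind Proposition~\ref{prop:queueless}. For the upper bound I would first show that in a serial synchronous \qdas every reachable \ctg $\Graph$ satisfies $\vert V\vert\leq\vert SQID\vert+2$. Since the system is synchronous, every reachable configuration has the chain shape of Fig.~\ref{fig:config-sync}: a nested sequence $v_0,\dots,v_n$ of synchronous callers with $v_0=\main$ in which only the bottom vertex is unblocked. I would track, for each running task $v_i$ with $i\geq 1$, the serial queue from which it was dequeued. Because every queue is serial, well-formedness requirement~(4) permits at most one task vertex per queue, and the \enqueue\ operation attaches a wait edge (the set $E_2$) from any freshly dispatched block to the task already running on that queue. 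Hence a synchronous dispatch onto a queue that already carries a running task produces a call vertex that is permanently blocked---its caller waits for it while it waits behind the running task---so the call chain can never grow through that queue. Consequently the running tasks below \main\ occupy pairwise distinct serial queues, so there are at most $\vert SQID\vert$ of them; adding \main\ and the at most one pending call vertex $v_n$ of Fig.~\ref{fig:config-sync}(a) yields the bound.

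Given this bound I would conclude \pspace\ membership directly. Each reachable configuration $(\Graph,\Data)$ is encodable in space polynomial in $\Aa$: the \ctg\ has at most $\vert SQID\vert+2$ vertices, each recorded with its block in $\Gamma$, its queue in \QID\ and its state in $S$, together with a valuation $\Data\in\valof{\Xx}$; equivalently, by Proposition~\ref{prop:sync_qdas_pds} the pushdown system $\Pp_\Aa$ has stack height at most $\vert SQID\vert+1$ and thus degenerates into a finite transition system with data. The Parikh covering test $f\mleq\Parikh(\Graph)$ is then decidable on each such polynomial-size configuration---and fails outright unless $\sum_{s\in S}f(s)\leq\vert SQID\vert+2$. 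A nondeterministic search that guesses a run of $\sem{\Aa}$ one configuration at a time therefore uses only polynomial space and can verify the covering condition on the fly; by Savitch's theorem this places the problem in \pspace.

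For the lower bound I would reuse the reduction underlying Proposition~\ref{prop:queueless}, which encodes the intersection emptiness problem for $n$ finite automata---a \pspace-complete problem~\cite{kozen-d-1977-254-a}---as a Parikh coverability instance: the current states of all $n$ automata are stored in the finite-domain variables \Xx, the \main\ block nondeterministically reads input letters and updates each automaton through guards and assignments, and it moves to a distinguished state $s$ precisely when all automata accept, so that the function $f$ with $f(s)=1$ is coverable iff the intersection is nonempty. Since a \emph{serial} \qdas\ requires $\SQID\neq\emptyset$, I would simply add one serial queue that is never dispatched to, which leaves the behaviour unchanged; the whole reduction stays polynomial and yields \pspace-hardness.

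The step I expect to be the main obstacle is the structural invariant of the first paragraph. Turning the informal ``dispatches to already-blocked queues deadlock'' intuition into a proof requires combining the serial well-formedness condition~(4) with the precise wait-edge bookkeeping of \enqueue\ and of the scheduler's serial-dequeue rule, followed by an induction along the call chain showing that the occupied serial queues are pairwise distinct---this is exactly what forbids unbounded growth of the \ctg.
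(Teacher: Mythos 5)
Your proposal is correct and follows essentially the same route as the paper: the structural bound $\vert SQID\vert+2$ on reachable {\ctg}s (each non-\main task blocks the serial queue it was dequeued from, so a dispatch to an occupied queue only creates a permanently blocked call vertex and the chain cannot grow), the resulting collapse of the pushdown construction to a bounded-stack/finite system explored nondeterministically in polynomial space, and \pspace-hardness inherited from the queueless case via intersection emptiness of $n$ finite automata. Your explicit appeal to well-formedness condition~(4) and the dummy-serial-queue fix for the lower bound merely make precise steps the paper leaves implicit.
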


\paragraph{\bf Concurrent asynchronous \qdas:}

Let us now establish a relationship between \emph{concurrent
asynchronous {\qdas}} and Petri nets that proves that the Parikh
coverability problem is \dexpspace-complete. We first show how to
reduce the \qdas Parikh coverability problem to the Petri net
coverability problem.
                    Given a concurrent asynchronous \qdas $\Aa$, we construct a Petri net $N_\Aa$ as
follows: The places of $N_\Aa$ are $(\Xx\times\DD)\cup S$.
Each place $s\in S$  counts how many blocks are currently running and are in state $s$. Each
place $(x,d)$ encodes the fact that variable $x$ contains value $d$ in
the current valuation.  Remark that we have no place to encode the
contents of the queue, as the dispatch of block $\gamma$ directly
creates a new token in $s^0_\gamma$. This encoding is, however,
correct with respect to to the \emph{Parikh coverability problem}, as $\Parikh(G)$
does not distinguish between a block $\gamma$ that is waiting in a
queue, and a task executing $\gamma$ in its initial state. Thus:

\begin{restatable}{proposition}{propfromqdastopn}
  \label{prop:from-qdas-to-pn}
  For all concurrent asynchronous \qdas $\Aa$ with set of location
  $S$, we can build, in polynomial time, a Petri net $N_\Aa$ s.t. $f$ is
  Parikh-coverable in $\Aa$ iff $m\in\Cover(N_\Aa)$, where $m$ is the
  marking s.t. for all $s\in S$: $m(s)=f(s)$ and for all $p\in
  P\setminus S$: $m(p)=0$.
\end{restatable}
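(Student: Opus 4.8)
The plan is to establish a correspondence between the operational semantics of a concurrent asynchronous \qdas and the firing semantics of the Petri net $N_\Aa$, and then verify that this correspondence transports Parikh coverability to marking coverability. First I would fix the translation of transitions. For each \qdas transition $\delta=(s,a,s')\in\Delta$ I build one (or a few) Petri net transition(s): an internal action contributes a transition that consumes one token from place $s$ and produces one in $s'$; a guard $g$ is handled by transitions that additionally read (consume and restore) the appropriate $(x,d)$ tokens, one transition per valuation of the tested variables satisfying $g$; an assignment $x\gets v$ consumes one token from the current $(x,d)$ place (for each $d$) and produces one in $(x,v)$ while moving the control token from $s$ to $s'$; and crucially an asynchronous dispatch $a=\dispa(q,\gamma)$ becomes a transition consuming a token from $s$ and producing tokens in both $s'$ and $s^0_\gamma$. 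The last clause is where the ``the dispatch directly creates a token in $s^0_\gamma$'' remark is cashed out: since $\Aa$ is concurrent, there is no serialization constraint and no wait edge ever blocks a dequeue, so a call vertex in a concurrent queue is always eventually dequeueable and becomes a task vertex in its initial state $s^0_\gamma$; the Parikh image does not distinguish these two situations, so we may collapse the enqueue/dequeue into the single token production.

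Next I would make the initial marking match: $m_0$ has one token in $s^0_{\textit{main}}$, one token in each $(x,d_0)$ for $x\in\Xx$, and zero elsewhere, mirroring the initial configuration $c^0$. I would then prove the two directions of the simulation by induction on run length. For the forward direction, given a run of $\Aa$ reaching $(\Graph,\Data)$, I exhibit a firing sequence of $N_\Aa$ reaching a marking $m$ with $m(s)=\Parikh(\Graph)(s)$ for all $s\in S$ and $m(x,d)=1$ iff $\Data(x)=d$; each \qdas step maps to the corresponding Petri transition, with scheduler dequeue steps mapping to no transition (they do not change the Parikh image, only turning a call vertex in $s^0_\gamma$ into a task vertex in the same state) and final-vertex removal mapping to a transition that consumes a token from $f_\gamma$. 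Conversely, any firing sequence of $N_\Aa$ is mimicked by a \qdas run: a token in $s$ witnesses an \emph{unblocked} task in state $s$ (in a concurrent-only system every task is unblocked, as asynchronous dispatches create no wait edges), so every enabled Petri transition corresponds to an enabled \qdas action. Both inductions preserve the invariant that the data places encode a genuine valuation (exactly one token among $\{(x,d)\mid d\in\DD\}$ for each $x$), which must be checked as part of the hypothesis.

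With the simulation in hand, the coverability statement is immediate: $f$ is Parikh-coverable in $\Aa$ iff some reachable $(\Graph,\Data)$ satisfies $f\mleq\Parikh(\Graph)$, iff some reachable marking $m'$ of $N_\Aa$ satisfies $m'(s)\ge f(s)=m(s)$ for all $s\in S$ and $m'(p)\ge 0=m(p)$ trivially for $p\in P\setminus S$, which is exactly $m\mleq m'$, i.e.\ $m\in\Cover(N_\Aa)$. Finally, the construction is visibly polynomial: the places are $(\Xx\times\DD)\cup S$, and the number of transitions is linear in $|\Delta|$ up to the blow-up from enumerating guard-satisfying valuations, which is bounded by $|\DD|^{|\Xx|}$ but stays polynomial in the \emph{size} of $\Aa$ under the standard convention that guards are listed explicitly.

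I expect the main obstacle to be making the ``no queue place is needed'' argument airtight, i.e.\ showing that collapsing the enqueue and the subsequent dequeue into one token-producing transition is sound with respect to Parikh coverability. The subtlety is that a pending call vertex and a running task vertex in the initial state have the same $\state$-label $s^0_\gamma$ and hence contribute identically to $\Parikh(\Graph)$, but I must confirm that in a concurrent-only, asynchronous setting there is never a well-formedness obstruction (no serial-queue wait edge, no synchronous wait edge) that could prevent the dequeue from ever firing and thereby allow the real \qdas to reach fewer configurations than $N_\Aa$ suggests; establishing that every concurrent-queue head is eventually unblocked, and that this does not affect the Parikh image either way, is the crux of the soundness proof.
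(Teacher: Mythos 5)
Your proposal matches the paper's proof essentially step for step: the same net with places $(\Xx\times\DD)\cup S$, the same encoding relation between markings and configurations, the same bidirectional induction on run length (with scheduler dequeues treated as Parikh-invariant no-ops and dispatches collapsed into a single token-producing transition, justified by the absence of wait edges in a concurrent asynchronous \qdas), and the same final derivation of the coverability equivalence. The only point to tighten is the guard translation: enumerating one transition per satisfying valuation of the tested variables can be exponential in the number of atoms, so you should instead check the atoms of a conjunction one by one through intermediate control places to keep the construction polynomial.
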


Let us now reduce the Petri net coverability problem to the \qdas
Parikh coverability problem. Let $N=\tuple{P,T,m_0}$ be a Petri net. We
associate to $N$ the concurrent asynchronous \qdas
${\Aa}_N=\tuple{\CQID, \emptyset, \Gamma, \mathtt{main}, \Xx, \Sigma,
  (\Ts_\gamma)_{\gamma\in\Gamma}}$, on the finite domain
$\DD=\{0,1\}$, where $\CQID=\{C\}$, $\Gamma=\{main,trans\}\cup P$,
$\Xx=\{v_p\mid p\in P\}$ and $(\Ts_\gamma)_{\gamma\in\Gamma}$ is given
by the pseudo-code in Fig.~\ref{fig:simulossyPN} (this construction is
an extension of a construction found in \cite{ganty-p-2010--a}).  We
assume that, for $\gamma\in\{\mathtt{trans},\mathtt{main}\}$
$\mystate{\ell}{\gamma}$ is the location of $\gamma$'s \lts that is
reached when the control reaches line $\ell$. Let $G=\tuple{V, E,
  \lambda, \queue, \state}$ be a \ctg for $\Aa_N$, and let $m$ be a
marking of $N$. Then, we say that \emph{$G$ encodes $m$}, written
$G\rhd m$ iff $(i)$
$\Parikh(G)(\mystate{14}{trans})=\Parikh(G)(\mystate{8}{main})=1$,
$(ii)$ for all $p\in P$: $\Parikh(G)(s^0_p)=m(p)$ and $(iii)$ for all
$p\in P$, for all $s\in S_p\setminus\{s^0_p\}$:
$\Parikh(G)(s)=0$. Thus, intuitively, a \ctg $G$ encodes a marking $m$
iff \texttt{main} is at line 8, \texttt{trans} is at line 14, $m(p)$
counts the number of \texttt{p} blocks that are either in $C$ or
executing but at their initial state, and there are no \texttt{p}
blocks that are in state $s_p^{mid}$ or $s_p^{fin}$.

\begin{figure}[!t]
  \centering
  \begin{minipage}[t]{.45\linewidth}
    \begin{lstlisting}[numberblanklines=false]
def main():
  for each $p\in P$:
    $v_p$ := 0
    select $k_p\in\{0,\ldots,m_0(p)\}$
    for i = 0...$k_p$:
      dispatch_a($C$, p())
  dispatch_a($C$, trans())
  while(true): do nothing (*@\mbox{ }\\[-0.5ex]@*)
block p(): // For all $p\in P$
  while($v_p = 0$): do nothing
  $v_p$ := 0
\end{lstlisting}
\end{minipage}
\begin{minipage}[t]{.45\linewidth}
  \begin{lstlisting}[firstnumber=last]
block trans():
  while(true):
    select $t=(I_t,O_t)\in T$
    for each $p\in P$ s.t. $I_t(p)=1$:
      $v_p$ := true
    while($\exists p\in P$: $v_p=1$): do nothing
    for each $p\in P$ s.t. $O_t(p)=1$:
      dispatch_a($C$, p())
    \end{lstlisting}
  \end{minipage}
\caption{Encoding of Petri net coverability $\tuple{P, T, m_0}$ by a \qdas}
  \label{fig:simulossyPN}
\end{figure}

The intuition behind the construction is as follows. Each run of the
\qdas $\Aa_N$ starts with an initialization phase, where \texttt{main}
initializes all the $v_p$ variables to $0$ and dispatches, for all
$p\in P$, $k_p$ blocks \texttt{p} with $k_p\leq m_0(p)$, then
dispatches a call to \texttt{trans}. At that point, the only possible
action is that the scheduler dequeues all the blocks. All the
\texttt{p} tasks are then blocked, as they need that $v_p=1$ to
proceed and terminate. Then, \texttt{trans} cyclically picks a
transition $t$, sets to $1$ all the variables $v_p$ s.t.  $t$ consumes
a token in $p$, and waits that all the $v_p$ variables return to
$0$. This can only happen because \emph{at least} $I_t(p)$ \texttt{p}
tasks have terminated, for all $p\in P$. So, when \texttt{trans}
reaches line 19, the encoded marking has been decreased by \emph{at
  least} $I_t$. Remark that more than $I_t(p)$ \texttt{p} tasks could
terminate, as they run concurrently, and the lines 11 and 12 do not
execute atomically. Then, \texttt{trans} dispatches one new \texttt{p}
block iff $t$ produces a token in $p$. This increases the encoded
marking by $O_t$, so the effect of one iteration of the main
\texttt{while} loop of \texttt{trans} is to simulate the effect of
$t$, plus a possible token loss. Hence, the resulting marking is
guaranteed to be in $\Cover(N)$ (but maybe not in $\Reach(N)$). This
is formalized by the following proposition:

\begin{restatable}{proposition}{propfrompntoqdas}\label{prop:from-pn-to-qdas}
  For all Petri nets  $N$, we can build, in polynomial time, a concurrent
  asynchronous \qdas $\Aa_N$ s.t.  $m\in\Reachloss(N)$ iff there
  exists $(G,\Data)\in\Reach(\Aa_N)$ with $G\rhd m$.
\end{restatable}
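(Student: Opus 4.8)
The plan is to prove the two implications of the equivalence by reading each completed iteration of the \texttt{while}-loop of \texttt{trans} as a \emph{lossy} firing of $N$: a firing of $t$ removes \emph{at least} $I_t(p)$ tokens from every input place $p$ and adds exactly $O_t(p)$ to every output place. The correctness of the reduction then rests on the classical fact that the set of markings reachable by such lossy firings from (a marking below) $m_0$ is exactly the downward closure of $\Reach(N)$, i.e.\ $\Cover(N)$. I will assume without loss of generality that every place is an input of at least one transition; otherwise one adds, for each place $p$, a ``drain'' transition with input $\{p\}$ and empty output, which leaves $\Cover(N)$ unchanged since it can only reach markings below already reachable ones.

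First I would fix the shape of the relevant runs of $\Aa_N$. In an initialisation phase the scheduler lets \texttt{main} run uninterrupted: it dispatches $k_p\le m_0(p)$ copies of each block \texttt{p} and one copy of \texttt{trans} into the concurrent queue, and then parks forever at $\mystate{8}{main}$. The scheduler may then dequeue every pending block, so each \texttt{p}-task blocks at its initial state $s^0_p$ (the guard $v_p=0$ holds) and \texttt{trans} reaches $\mystate{14}{trans}$. This is precisely a configuration $G$ with $G\rhd m^{(0)}$ for $m^{(0)}=(k_p)_{p\in P}$. More generally, a configuration is an \emph{encoding configuration} whenever $\Parikh(G)(\mystate{14}{trans})=1$ and all $v_p=0$: then \texttt{trans} sits strictly between two iterations, no \texttt{p}-task is in state $s_p^{mid}$ or $s_p^{fin}$, and $G\rhd m$ holds for the marking $m$ counting the \texttt{p}-tasks parked at $s^0_p$, whether still queued or already running.

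For the direction from right to left, I would argue by induction on the number $k$ of iterations of \texttt{trans} completed along the run that reaches the given $G\rhd m$. Let $t_1,\dots,t_k$ be the transitions successively selected, let $m^{(i)}$ be the marking encoded after $i$ iterations, and let $\hat m^{(i)}$ be the marking obtained by firing $t_1\cdots t_i$ \emph{without loss} from $m_0$. The invariant to maintain is $m^{(i)}\mleq\hat m^{(i)}$: it holds for $i=0$ because $k_p\le m_0(p)$, and it is preserved because one iteration consumes some $c_p\ge I_{t_i}(p)$ tokens of each input place and produces exactly $O_{t_i}(p)$, while the lossless firing consumes exactly $I_{t_i}(p)$; monotonicity of Petri nets guarantees $t_i$ stays enabled in the larger marking $\hat m^{(i-1)}$. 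Hence $m=m^{(k)}\mleq\hat m^{(k)}\in\Reach(N)$, so $m\in\Cover(N)$.

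For the converse, take $m\in\Cover(N)$, i.e.\ $m\mleq m_n$ for some run $m_0\to\cdots\to m_n$ firing $t_1\cdots t_n$. I would make \texttt{main} dispatch exactly $m_0(p)$ copies of each \texttt{p} and then simulate this run \emph{faithfully}: when \texttt{trans} sets $v_p$ to $1$ for an input place of $t_i$, the scheduler activates exactly one parked \texttt{p}-task and runs it alone, so that it resets $v_p$ to $0$ before any other task can leave $s^0_p$; the encoded marking then follows $m_0,m_1,\dots,m_n$ exactly. Finally I would drain the surplus, firing the drain of each place $p$ exactly $m_n(p)-m(p)$ times and waking a single \texttt{p}-task each time, to land in an encoding configuration with $G\rhd m$. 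The main obstacle is the control of the asynchronous race between a \texttt{p}-task's guard test on $v_p$ and its non-atomic reset $v_p:=0$: because several tasks may read $v_p=1$ before it is reset, more than one can terminate per assignment, and this is exactly what realises the over-consumption of the lossy semantics --- yet the forward simulation must be able to force \emph{exactly one} reaction when faithfulness is required and \emph{exactly the desired number} when draining. Establishing that the scheduler's nondeterminism can always be resolved to realise these precise counts, without creating spurious deadlocks (e.g.\ \texttt{trans} waiting forever on an input place with no available \texttt{p}-task) and while keeping the {\ctg} well-formed, is the delicate core of the argument.
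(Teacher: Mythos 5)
Your proposal rests on the same backbone as the paper's proof: both directions go through the characterization of $\Cover(N)$ by lossy firings, and both read one completed iteration of the \texttt{while}-loop of \texttt{trans} as one lossy firing of the selected transition, the loss being realized by the non-atomic race between the guard on $v_p$ and the reset $v_p:=0$. Your soundness direction (reachable encoding configuration implies coverable marking) is essentially the paper's, except that you compare against the lossless firing of the same transition sequence instead of maintaining membership in $\Cover(N)$ directly; both variants work. Where you genuinely depart is in the construction and the completeness direction. The paper simulates a lossy run step by step and discards tokens only by over-consumption at the \emph{input places of the transition just fired}; since $G\rhd m$ demands the exact count $\Parikh(G)(s^0_p)=m(p)$, this cannot reach an encoding of a marking obtained by losing a token at a place from which no subsequently fired transition consumes (the corresponding \texttt{p}-task is stuck at $s^0_p$ forever, e.g.\ when a transition produces into a place that is the input of no transition). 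Your drain transitions repair exactly this: they leave $\Cover(N)$ unchanged, make every surplus token removable, and let your ``simulate faithfully, then drain the excess'' strategy establish completeness for all Petri nets. This is a genuine improvement over the paper's argument, not a mere convenience, and is worth stating as such.

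Two caveats. First, the ``delicate core'' you flag is resolved exactly as you sketch and as the paper writes out: the scheduler lets the intended number of \texttt{p}-tasks pass the guard one after another \emph{before} any of them executes $v_p:=0$, and no deadlock can arise since an input place of a (lossily) enabled transition always holds at least one vertex at $s^0_p$ that the scheduler can dequeue and advance; you should discharge this rather than leave it as an open obstacle. Second, your per-iteration claim that an iteration consumes $c_p\ge I_{t_i}(p)$ fresh tokens at each input place is not literally true: a task that passed the guard in an earlier iteration may linger at $s^{mid}_p$ and perform the reset in a later one, so that later iteration consumes no fresh token at $p$. The bound holds only amortized over the run (each iteration selecting a transition with $I_t(p)=1$ requires one reset, and each task resets at most once, so the cumulative number of departures from $s^0_p$ dominates $\sum_j I_{t_j}(p)$). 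The paper's own proof elides the same point, so this is a shared wrinkle rather than a defect specific to your route, but your invariant should be restated in the cumulative form to be airtight.
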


\begin{theorem}\label{thm:concasyncqdas}
  The Parikh coverability problem
  is \dexpspace-complete for concurrent asynchronous \qdas.
\end{theorem}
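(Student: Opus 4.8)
The plan is to prove both bounds by leaning on the two polynomial reductions already established and on the fact that Petri net coverability is \dexpspace-complete~\cite{esparza-j-1998-374-a}. Membership will come from the \qdas-to-\petri translation (Proposition~\ref{prop:from-qdas-to-pn}) and hardness from the \petri-to-\qdas translation (Proposition~\ref{prop:from-pn-to-qdas}).

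\textbf{Upper bound.} Given a concurrent asynchronous \qdas $\Aa$ with locations $S$ and a Parikh query $f:S\rightarrow\NN$, I would first apply Proposition~\ref{prop:from-qdas-to-pn} to build, in polynomial time, the Petri net $N_\Aa$, and take the marking $m$ defined by $m(s)=f(s)$ for $s\in S$ and $m(p)=0$ for $p\in P\setminus S$. By that proposition, $f$ is Parikh-coverable in $\Aa$ iff $m\in\Cover(N_\Aa)$. Since Petri net coverability is decidable in \dexpspace and the construction of $N_\Aa$ together with $m$ is polynomial, the whole procedure runs in \dexpspace.

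\textbf{Lower bound.} I would reduce from Petri net coverability. Let $N=\tuple{P,T,m_0}$ be a Petri net and $m$ a target marking, so that deciding $m\in\Cover(N)$ is \dexpspace-hard. Apply Proposition~\ref{prop:from-pn-to-qdas} to build, in polynomial time, the concurrent asynchronous \qdas $\Aa_N$, and define the Parikh query $f$ over the locations of $\Aa_N$ by $f(s^0_p)=m(p)$ for each $p\in P$, $f(\mystate{8}{main})=f(\mystate{14}{trans})=1$, and $f(s)=0$ for every other location. The claim is that $f$ is Parikh-coverable in $\Aa_N$ iff $m\in\Cover(N)$. The forward direction is the easy one: if $m\in\Cover(N)$ then some reachable $m'$ with $m\mleq m'$ exists, so by Proposition~\ref{prop:from-pn-to-qdas} there is a reachable $(\Graph,\Data)$ with $\Graph\rhd m'$; this $\Graph$ satisfies $\Parikh(\Graph)(s^0_p)=m'(p)\geq m(p)$, has one \texttt{main} task at line $8$ and one \texttt{trans} task at line $14$, and zeros elsewhere, whence $f\mleq\Parikh(\Graph)$.

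\textbf{Main obstacle.} The converse implication of the hardness reduction is where I expect the real work to lie. Suppose $f\mleq\Parikh(\Graph)$ for some reachable $(\Graph,\Data)$. Since \texttt{main} is the unique initial task and exactly one \texttt{trans} block is ever dispatched, the counts $\Parikh(\Graph)(\mystate{8}{main})$ and $\Parikh(\Graph)(\mystate{14}{trans})$ are bounded by $1$, so the constraints force them to equal $1$, pinning \texttt{main} to line $8$ and \texttt{trans} to the top of its loop at line $14$. I would then argue that at such a configuration no \texttt{p} block can be at an intermediate state $s_p^{mid}$ or $s_p^{fin}$, so that $\Graph$ is in encoding form, say $\Graph\rhd m''$ with $m''(p)=\Parikh(\Graph)(s^0_p)\geq m(p)$; by Proposition~\ref{prop:from-pn-to-qdas} this gives $m''\in\Cover(N)$, and downward closure of $\Cover(N)$ together with $m\mleq m''$ yields $m\in\Cover(N)$. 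The delicate point is precisely showing that pinning \texttt{trans} to line $14$ excludes any \texttt{p} block from the intermediate states: this follows from the simulation invariant underlying Proposition~\ref{prop:from-pn-to-qdas}, namely that the \texttt{while} loop at line $19$ forces every triggered \texttt{p} block (those whose guard variable $v_p$ was set) to have terminated before \texttt{trans} can return to line $14$. Combining the two directions gives a polynomial many-one reduction establishing \dexpspace-hardness, which together with the upper bound yields \dexpspace-completeness.
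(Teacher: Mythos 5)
Your overall architecture is exactly the paper's: the upper bound via Proposition~\ref{prop:from-qdas-to-pn} together with the \dexpspace membership of \petri coverability, and the lower bound via Proposition~\ref{prop:from-pn-to-qdas} with the Parikh query $f$ you define. You also correctly identified the one non-trivial point, namely that a reachable configuration covering $f$ need not be of the encoding form $\Graph\rhd m''$. Unfortunately, the invariant you propose to close that gap is false.

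Pinning \texttt{trans} to line~14 does \emph{not} exclude \texttt{p} blocks from the intermediate and final states $s_p^{mid}$ and $s_p^{fin}$. The waiting loop of \texttt{trans} only blocks until every $v_p$ has returned to $0$, and $v_p$ is reset by the \emph{first} \texttt{p} task that executes its assignment. Since the guard test on $v_p$ and the subsequent assignment $v_p:=0$ are not atomic, several \texttt{p} tasks may pass the guard while $v_p=1$; once one of them resets $v_p$, \texttt{trans} is free to finish its iteration and return to line~14 while the others still sit in $s_p^{mid}$ --- this is precisely the token-loss phenomenon the construction exploits, as the paper itself remarks. Tasks in $s_p^{fin}$ are likewise removed only lazily by the scheduler. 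Hence the covering configuration need not satisfy clause~$(iii)$ of the definition of $\rhd$, and Proposition~\ref{prop:from-pn-to-qdas} cannot be invoked as a black box. The fix --- and what the paper actually proves in the appendix --- is a stronger simulation lemma: for \emph{every} reachable configuration in which \texttt{trans} is at line~14, the marking $M(\Graph)$ defined by $M(\Graph)(p)=\Parikh(\Graph)(s^0_p)$, which counts only the \texttt{p} blocks still in their initial state and ignores whatever sits in $s_p^{mid}$ or $s_p^{fin}$, belongs to $\Cover(N)$. Since your $f$ only constrains the locations $s^0_p$ (besides the two control locations), this weaker conclusion already yields $m\mleq M(\Graph)\in\Cover(N)$ and thus $m\in\Cover(N)$ by downward closure; your converse direction should be restated in terms of $M(\Graph)$ rather than the full encoding relation~$\rhd$.
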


\paragraph{\bf Asynchronous Serial \qdas:}

Let us show that for the class
of \qdas with one serial queue, and where asynchronous dispatches are
allowed, the Parikh coverability problem is \emph{undecidable}. We
establish this by a reduction from the control-state  reachability
problem in a \fifo system which is known to be
undecidable~\cite{brand-d-1983-323-a}.

Intuitively, we use the serial queue to
model the unbounded, reliable fifo queue where sending a message $m$
is encoded as asynchronously dispatching a block $\gamma_m$. This
block $\gamma_m$ contains the control-flow of
receiving $m$, i.e., that will resume the \fifo system's execution
directly after receiving $m$. The \fifo system's global state is
guarded in a global variable. Receiving a certain message $m$
is encoded as terminating the currently running task and assuring
(via a global variable) that the succeeding task's type is the one of
the expected message.

\begin{theorem}\label{the:async-seri-undec}
  The Parikh coverability problem is undecidable for asynchronous
  \qdas with at least one serial queue.
\end{theorem}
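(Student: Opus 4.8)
The plan is to reduce from the control-state reachability problem for \fifo systems (known undecidable by~\cite{brand-d-1983-323-a}). Given a \fifo system with a single reliable channel and control states $Q$, I will construct an asynchronous \qdas with exactly one serial queue $q\in\SQID$ that faithfully simulates it. The global control state of the \fifo system is stored in a dedicated global variable $\state_{\text{ctrl}}\in\Xx$ ranging over $Q$ (recall $\DD$ is finite, so this is legitimate). The alphabet of the channel is encoded by a family of blocks $\{\gamma_m\mid m\in M\}$, one per message symbol. The central idea, as sketched in the text, is that the serial queue plays the role of the \fifo channel: because a serial queue dequeues its head only once the previously scheduled task has terminated, and because asynchronous dispatch appends to the tail in \fifo order, the sequence of pending blocks in $q$ mirrors exactly the word stored in the channel.

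The simulation runs as follows. A \texttt{main}-like block repeatedly performs the ``send'' actions directly: to simulate a transition of the \fifo system that, from control state $p$, sends message $m$ and moves to $p'$, I guard on $\state_{\text{ctrl}}=p$, then \emph{asynchronously} dispatch $\gamma_m$ to $q$, and assign $\state_{\text{ctrl}}\gets p'$. A ``receive'' of $m$ moving from $p$ to $p'$ is simulated by the block $\gamma_m$ itself: when it is eventually dequeued from $q$ and scheduled as a task, it checks $\state_{\text{ctrl}}$ is compatible, updates it to $p'$, and then terminates so that the serial queue may release the next message. The key design constraint is that at most one message-block is running at any time, which is precisely guaranteed by the serial-queue semantics (well-formedness condition~4 and the \dequeue/\letwait scheduler action for $q\in\SQID$). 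Because only the head of $q$ can ever execute, the receiving order respects the sending order: this is the crux that enforces true \fifo behaviour rather than an unordered multiset.

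First I would make the send/receive encoding precise as \lts transitions on the blocks, ensuring that between a dispatch and the corresponding schedule no two message-blocks are simultaneously active. Next I would establish the simulation invariant: in every reachable configuration $(\Graph,\Data)$ of the \qdas, the word formed by the blocks in $q$ (call vertices plus the at-most-one running task dequeued from $q$, read in \fifo order via the unique simple path of condition~3) equals the current channel contents of the corresponding \fifo configuration, and $\Data(\state_{\text{ctrl}})$ equals its control state. I would prove both directions of the step-correspondence by induction on run length, using the \enqueue and \dequeue operations defined earlier. Finally, I would reduce control-state reachability to Parikh coverability: to ask whether control state $p_{\text{target}}$ is reachable, I add a fresh block $\gamma_{\text{win}}$ with a distinguished state $s_{\text{win}}$, dispatchable only when $\state_{\text{ctrl}}=p_{\text{target}}$, and set the Parikh target $f(s_{\text{win}})=1$ with $f(s)=0$ elsewhere; then $f$ is Parikh-coverable iff $p_{\text{target}}$ is reachable in the \fifo system.

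The main obstacle, and the point requiring the most care, is enforcing genuine \fifo semantics of \emph{receives} rather than a lossy or reorderable behaviour. Asynchronous dispatch to a serial queue is clean for the \emph{order} of sends, but I must verify that the scheduler cannot interleave receives in a way that violates \fifo, and in particular that a received block truly blocks the queue head (via the \letwait edge inserted on serial dequeue) until it terminates, so the next message is consumed strictly afterwards. Equally, I must rule out unintended nondeterminism: the scheduler may also delay scheduling indefinitely, so the invariant must be stated as a relation between reachable \qdas configurations and reachable \fifo configurations that tolerates ``messages in flight but not yet received,'' and I must check that no \qdas run can fabricate a control state unreachable in the \fifo system. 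Showing this direction (\qdas-reachable $\Rightarrow$ \fifo-reachable) is where the argument is most delicate, since the concurrent, scheduler-driven semantics of \qdas is richer than the deterministic stepping of a \fifo system.
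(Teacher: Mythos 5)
Your overall strategy matches the paper's: both reduce from control-state reachability of \fifo systems \cite{brand-d-1983-323-a}, encode the channel content as the content of the single serial queue with one block $\gamma_m$ per message, keep the control state in a global variable, and use the serial-queue discipline (at most one dequeued task at a time, released only on termination) to enforce \fifo receive order. The final Parikh-coverability target ($f(s_{\text{win}})=1$) is also equivalent to the paper's.

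However, there is a genuine soundness gap in your division of labor. You let a permanently active \texttt{main}-like task perform all sends (guard on $\mathtt{state}=p$, dispatch $\gamma_m$, assign $\mathtt{state}\gets p'$) while the currently dequeued $\gamma_{m'}$ task performs receives on the \emph{same} variable. In the \qdas model each of guard, dispatch and assignment is a separate \lts transition, so the scheduler may interleave a complete receive of the running $\gamma_{m'}$ \emph{between} \texttt{main}'s guard and its assignment. Concretely, take the \fifo system with transitions $(p_0,!m',p)$, $(p,!m,p')$, $(p,?m',p'')$, $(p',?m,c)$: state $c$ is unreachable (from $(p,m')$ one reaches only $(p',m'\cdot m)$, where $?m$ is blocked by head $m'$, or $(p'',\varepsilon)$), yet in your \qdas \texttt{main} can pass the guard $\mathtt{state}=p$, then $\gamma_{m'}$ receives and terminates setting $\mathtt{state}\gets p''$ and emptying the queue, then \texttt{main} resumes, enqueues $\gamma_m$ and sets $\mathtt{state}\gets p'$, after which $\gamma_m$ is dequeued with $\mathtt{state}=p'$ and reaches $c$. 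So the direction you yourself flag as delicate (\qdas-reachable $\Rightarrow$ \fifo-reachable) actually fails. The paper avoids this by having a \emph{single} active simulator at any time: after initialization \texttt{main} idles forever, and the one dequeued block drives the entire simulation, performing sends by asynchronous dispatch and performing a receive by writing the expected message name into a global \texttt{head} variable and \emph{terminating}; the next dequeued block then checks that its own identity matches \texttt{head} before continuing. You would need either to adopt this architecture or to add an explicit mutual-exclusion/handshake protocol on the shared control variable for your construction to be correct.
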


\paragraph{\bf Concurrent \qdas:}
Let us show that, once we allow both
synchronous and asynchronous dispatches in a \emph{concurrent} \qdas,
the Parikh coverability problem becomes undecidable.
For that purpose, we reduce the reachability problem of two counter
systems.

The crux of the construction is the use of
variables, i.e., global memory, to implement a rendez-vous
synchronization. Given two distinct tasks, one can use their nested
access to two lock variables to guard a shared data variable by
assuring that a value written to the variable must be read before it
is overwritten.

Let us give the construction's intuition: Each counter is
encoded similarly to the construction for synchronous \qdas as
pushdown stack over a singleton alphabet, i.e., a sequence of nested
synchronous dispatched blocks, these are controlled via rendez-vous
 from the \main task that in the beginning
asynchronously dispatched the two counters.

\begin{theorem}\label{thm:concqdasundec}
  The Parikh coverability problem is undecidable for concurrent \qdas that
  use both synchronous and asynchronous dispatches.
\end{theorem}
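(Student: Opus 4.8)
The plan is to reduce from the halting/reachability problem of a two-counter (Minsky) machine, which is undecidable, following the intuition already sketched in the excerpt. I would encode each of the two counters as a stack over a singleton alphabet, exactly as in the synchronous \qdas construction: a counter value $k$ is represented by a chain of $k$ nested synchronously-dispatched blocks (each push corresponds to a \disps, each pop to a task termination). The \main task first asynchronously dispatches two ``counter'' blocks (one per counter) onto the concurrent queue, so that both counters live as independent running tasks that can each maintain their own private stack of synchronous calls. The control-flow of the two-counter machine (the finite-state program with its $\incr$, $\decr$, and $\zerotest$ instructions) is driven from \main, which must coordinate the two counter-tasks.

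The crux, as the excerpt emphasises, is implementing a \emph{rendez-vous} synchronisation between \main and a given counter-task using only the shared global variables over a finite domain. Since reading and writing a variable are separately atomic but a read-modify-write is not, I would use a pair of nested lock variables to guarantee handshaking: the idea is that \main writes a request into a shared variable and then blocks on a lock, the counter-task reads the request, performs the corresponding stack operation (a synchronous push for $\incr$, a termination for $\decr$, or a stack-emptiness-style test for $\zerotest$), writes back an acknowledgement, and only then releases \main. The nesting of the two locks ensures that the value written by one party must be consumed by the other before it can be overwritten, so no instruction is lost or duplicated and the two tasks truly alternate in lock-step. I would show that every reachable global valuation together with the two stack heights faithfully mirrors a reachable configuration of the two-counter machine, and conversely.

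The main obstacle I expect is the faithful simulation of the \emph{zero test}. Unlike $\incr$ and $\decr$, testing a counter for zero requires the counter-task to determine whether its private stack is empty, and to communicate the boolean outcome back to \main through the rendez-vous channel; the counter-task can detect emptiness because, as in the synchronous \pds encoding, the bottom of the stack corresponds to the initial state of the counter block (reachable only when no nested synchronous call is pending). Getting this exactly right — so that a spurious interleaving cannot make \main observe ``zero'' while a nested call is still outstanding, nor ``non-zero'' when the stack is in fact empty — is where the careful use of the locking discipline is essential. Once the rendez-vous is shown to be atomic in this sense, I would finally reduce reachability of a designated control state of the two-counter machine to the Parikh coverability problem: I take the target $f$ to require one token in the \main-state that is entered precisely when the machine reaches its halting location, so that $f$ is Parikh-coverable in the constructed concurrent \qdas iff the two-counter machine reaches that location, yielding undecidability.
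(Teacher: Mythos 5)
Your proposal follows essentially the same route as the paper's proof: a reduction from two-counter machine reachability in which \main asynchronously dispatches two counter tasks onto the concurrent queue, each counter value is a chain of nested synchronous dispatches (push $=$ \disps, pop $=$ termination), \main drives the control flow via a rendez-vous implemented with two nested lock variables plus a message variable, and the zero test is answerable only by the bottom-of-stack block (the paper's $null(i)$), which is unblocked exactly when the counter is empty. The final step---covering a designated \main location reached iff the machine reaches its target state---also matches the paper, so the approach is correct and not materially different.
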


\paragraph{\bf Termination Problem:}

We use the previous constructions to directly
lift the undecidability results from the Parikh coverability problem to the termination problem.
The close connection of synchronous \qdas with \pds (with data)
allows to directly derive an \dexptime algorithm for the termination problem from the emptiness testing of
Büchi \pds~\cite{esparza-j-2000-232-a}. Up to our knowledge,
no completeness result is known for the latter problem, thus leaving a gap to the directly derivable \pspace-hardness via finite systems.
The result for  asynchronous concurrent
\qdas directly follows from Petri nets~\cite{lipton,rackoff}.

\begin{theorem}\label{thm:termination}
  The termination problem is  \pspacecomplete for synchronous serial
  \qdas, it is in \dexptime and \pspace-hard for synchronous \qdas,
  and it is \dexpspacecomplete for asynchronous concurrent \qdas.
  It is undecidable for asynchronous serial \qdas, and \qdas that
  use both synchronous and asynchronous dispatches.
\end{theorem}

\section{Extending QDAS with Fork/Join\label{sec:forkjoin}}

We return to the introductory matrix multiplication example. The crux of the
algorithm is the parallel for-loop that \emph{forks} a finite number of subtasks and waits
for their termination (\emph{join}).  The latter had to be implemented via a global
semaphore which $(i)$~restricts the number of forkable tasks by the underlying
finite value domain, and $(ii)$~needs to be properly guarded by the programmer for access outside fork and join.
In the following we thus
want to extend \qdas by an explicit fork/join construct (which also exists in GCD).
Further, the given matrix multiplication algorithm depended on an a priori fix size
for the factor matrices, however,
in practice, one wants to verify the algorithm for any possible (correct) input
of any size. Thus, we need to consider the verification of extended
\qdas  where the number of forked tasks is parametrized by the
input.

As fork/join behaviour relies on asynchronously dispatching tasks
on a concurrent queue, we ignore in the following
synchronous dispatches and serial queues, thus also partially
avoiding the
previous basic undecidability results. Note that asynchronous
concurrent \qdas can be regarded as over-approximations of all other
classes of \qdas.

\paragraph{\bf QDAS extended by fork/join}
An \emph{\qdas extended by fork/join (\eqdas)}
is a tuple $\langle \CQID, \emptyset,
  \Gamma, main, \Xx, \Sigma, (\Ts_\gamma)_{\gamma\in\Gamma}\rangle$
  that is equivalent to a \qdas except that we replace in $\Sigma$
  the synchronous dispatch by the following action:
  $\{\forkjoin\}\times \CQID\times \Gamma \times (\NN\cup\{\ast\})$.
  The \emph{parameter} of a \forkjoin action is the last value of the
  tuple.
  An \eqdas is \emph{$\ast$-free} if in all $\Ts_\gamma$ for
  $\gamma\in\Gamma$ the parameter of the \forkjoin action is
  not~$\ast$.

  The semantics of an \eqdas
  is given analogous to standard \qdas as
  transition system $\langle C, c^0,$ $ \Sigma, \Longrightarrow \rangle$
  where we additionally extend the transition relation
  $\Longrightarrow$ given by tuples
$\big((\Graph,\Data),a,(\Graph',\Data')$
by the following case:
\begin{description}
  \item[Fork/join:] $a=\forkjoin(q,\gamma,p)$ with
    $p\in(\NN\cup\{\ast\})$,
    $\Data'=\Data$ and there
  are $\delta=(s,a,s')\in \Delta$,  and
  $\Graph''\in\step(\delta(\Graph))$
  such that:
  if $p=\ast$ then we choose non-deterministically an $n\in\NN$, else
  $n=p$, so that
  $\Graph'=\Graph''_n$ where
  $\Graph''_0=\Graph''$ and for $0<i\leq n$ we define
  $\Graph''_{i+1}=\letwait(v,v_{i+1}')\big(\enqueue(q,\gamma_{i+1})(\Graph''_i)\big)$
  where $v$ is the node whose $\state$ has changed during the $\step$
  operation, and $v'_{i+1}$ is the fresh node that has been created by the
  $\enqueue$ operation.
\end{description}
Intuitively, a \forkjoin action appends a sequence of blocks to a
queue by additionally adding a wait edge to each newly create node.
Hence, the join is modeled by a separate action that is taken by the
scheduler after deleting the wait edges.

The \emph{extended Parikh
coverability problem} asks, given an \eqdas $\Aa$ with locations
$\Ss$ and a mapping $f:\Ss\rightarrow\NN$, whether
there exists $c=(\Graph,\Data)\in\Reach(\Aa)$ with
$f\mleq\Parikh(\Graph)$. The \emph{extended termination problem}
asks, given an \eqdas $\Aa$ whether
there is no infinite run possible in $\Aa$.

As \forkjoin actions with parameter $1$ are semantically
equivalent to a synchronous dispatch action, we can directly
reduce the two counter machine simulation from the proof of
Theorem\,\ref{thm:concqdasundec}  to \eqdas.

\begin{theorem}\label{thm:forkjoin_undec}
  Both  the extended Parikh coverability and extended termination problem are undecidable.
\end{theorem}

Consequently, we focus on two distinct over-approximations for \eqdas in
the following that allow us to give  approximative answers to our
verification problems.

\paragraph{\bf $\ast$-free \eqdas:}
Given an \eqdas $\Aa$ that is $\ast$-free. We construct
a Petri net $N_\Aa^\times$ by extending the previous
construction from asynchronous concurrent \qdas to Petri nets as
follows: As in the \eqdas semantics we split a single \forkjoin
action of a block $\gamma$ on a queue $q$ with parameter $n\in\NN$
into $(i)$~a fork transition that creates $n$ new tokens in
$s_\gamma^0$, and $(ii)$~a subsequent join transition that depends on
taking $n$ tokens from the place representing $f_\gamma$.
Analogous to the proof of Proposition\,\ref{prop:from-qdas-to-pn} we can show
the following:

\begin{restatable}{proposition}{propastfreeapprox}
  \label{prop:astfreeapprox}
  For all $\ast$-free \eqdas with set of location $\Ss$, we can build
  in polynomial time a Petri net $N_\Aa^\times$ st. $f$ is
  Parikh-coverable in $\Aa$ if $m\in\Cover(N_\Aa^\times)$, where $m$
  is the marking s.t. for all $s\in S$: $m(s)=f(s)$ and for all $p\in
  P\setminus S$: $m(p)=0$. Further, if
  $N_\Aa^\times$ terminates, then $\Aa$ is guaranteed to terminate.
\end{restatable}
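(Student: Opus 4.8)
The plan is to prove the worded implication---that $m\in\Cover(N_\Aa^\times)$ entails Parikh-coverability of $f$ in $\Aa$---by simulating a covering run of the net with a run of the \eqdas, reusing the exact correspondence of Proposition~\ref{prop:from-qdas-to-pn} for the base transitions and treating only the two new fork/join transitions explicitly. First I would pin down $N_\Aa^\times$: it keeps the places $(\Xx\times\DD)\cup S$ and all transitions of the base net (guards, assignments, internal steps, asynchronous dispatches), and for each \eqdas transition $\delta=(s,\forkjoin(q,\gamma,n),s')$ it adds a fork transition moving the caller token from $s$ to an auxiliary waiting place while depositing $n$ tokens on $s^0_\gamma$, followed by a join transition that removes that waiting token together with $n$ tokens from $f_\gamma$ and puts a token on $s'$. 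The target marking is $m$ with $m(s)=f(s)$ on $S$ and $0$ elsewhere, so covering $m$ means reaching some $m'$ with $m\mleq m'$, and $m|_S=f$.

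Second I would take a net run $\pi$ reaching such an $m'$ and build a run $\rho$ of $\Aa$ ending in a configuration $(G,\Data)$ with $\Parikh(G)(s)=m'(s)$ for all $s\in S$; since $m\mleq m'$ and $m|_S=f$, this yields $f\mleq\Parikh(G)$, i.e.\ Parikh-coverability. Guard, assignment, internal, and dispatch firings translate verbatim as in Proposition~\ref{prop:from-qdas-to-pn}; a fork firing translates to the dispatch part of the matching \forkjoin action followed by $n$ scheduler dequeues turning the fresh call vertices into task vertices in $s^0_\gamma$; a join firing translates to the scheduler deleting the caller's $n$ wait edges and removing the corresponding final vertices. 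The delicate point is that in $N_\Aa^\times$ a join may consume $n$ tokens of $f_\gamma$ produced by a \emph{different} fork, whereas in $\Aa$ the scheduler must remove precisely the caller's own forked tasks. I would reconcile this by a reattribution argument: all tasks spawned for a block $\gamma$ are interchangeable (identical transition system, identical start state $s^0_\gamma$) and the global-variable trace depends only on the sequence of task steps, which I preserve verbatim; hence I may relabel which concrete \ctg vertex realizes each net token so that at every join the caller's own $n$ vertices are the ones sitting in $f_\gamma$.

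The crux, and the step I expect to be the main obstacle, is upgrading this relabeling to a genuinely \emph{well-nested} schedule. In $\pi$ a join can fire early---before the caller's own forked tasks have reached $f_\gamma$---by borrowing finished tokens from a concurrently pending fork, and such a firing has no literal counterpart in $\Aa$, where the caller stays blocked until \emph{its} tasks finish. I would handle this by a commutation/reordering argument on $\pi$: since the final marking is fixed by the multiset of fired transitions (the state equation), and independent net transitions commute while same-block tokens are fungible, $\pi$ can be permuted into a firing sequence of the same multiset in which each join consumes exactly the tokens deposited by its own fork. Feasibility is witnessed by $\pi$ itself---at each join the required number of $\gamma$-tokens demonstrably reached $f_\gamma$---so a greedy pairing of each fork with its own join (formally a Hall-type/flow condition on the bipartite creation--consumption graph respecting the fork-before-join order) always succeeds; the subtle part is arguing that the moves never invalidate an intervening variable-dependent step, which I would discharge by permuting only independent firings and appealing to token fungibility for the conflicting $\gamma$-steps. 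The well-nested sequence then translates transition-by-transition into the desired run $\rho$.

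For termination I would argue in the easy, over-approximating direction: every run of $\Aa$ embeds into a run of $N_\Aa^\times$ (base transitions exactly, each \forkjoin split into its fork and its join), so an infinite run of $\Aa$ gives an infinite run of $N_\Aa^\times$; contrapositively, termination of $N_\Aa^\times$ implies termination of $\Aa$. The same embedding immediately yields the reverse coverability inclusion (if $f$ is coverable in $\Aa$ then $m\in\Cover(N_\Aa^\times)$), so together with the argument above the correspondence is in fact exact for Parikh-coverability.
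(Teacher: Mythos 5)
You flagged the right crux, but the discharge you propose for it fails, and in fact the strong direction you set out to prove (net-coverability implies \qdas-coverability), together with your closing exactness claim, is \emph{false} for the shared-place net that both you and the paper describe. Concretely, take $\DD=\{0,1\}$, $d_0=0$, one concurrent queue $q$, a block $\gamma$ with the single transition $s^0_\gamma\xrightarrow{x=1}f_\gamma$, and \main with the chain $u_0\xrightarrow{\dispa(q,\gamma)}u_1\xrightarrow{x\gets 1}u_2\xrightarrow{x\gets 0}u_3\xrightarrow{\forkjoin(q,\gamma,1)}u_4\xrightarrow{x\gets 1}u_5$. In $N_\Aa^\times$ the token of the \emph{dispatched} $\gamma$-task can advance to $f_\gamma$ during the $x=1$ window; the join then borrows that token, so the caller reaches $u_5$ and the marking $m$ with $m(u_5)=1$ is in $\Cover(N_\Aa^\times)$. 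In $\Aa$, however, the forked child is created only after $x\gets 0$ (forced by \main's control flow), the only later $x\gets 1$ sits behind the blocked caller, so the child is stuck at $s^0_\gamma$ forever, \main never unblocks, and $\Parikh(G)(u_5)=0$ in every reachable configuration. Your commutation/Hall-pairing argument cannot repair this: in every feasible permutation of this multiset of firings the $\gamma$-step must fire while $(x,1)$ is marked, hence strictly before the fork (which needs the $u_3$-token produced by $x\gets 0$), and firing $u_4\to u_5$ before the $\gamma$-step would close a causal cycle through the join. The dependency runs through the shared variable places, so these firings are not independent, and token fungibility cannot reattribute a token created \emph{before} a fork to that fork's children. ``Feasibility witnessed by $\pi$ itself'' is exactly the fallacy: tokens demonstrably reached $f_\gamma$, but not tokens that any identity assignment can make the caller's own.

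Compared with the paper: the paper gives no proof beyond ``analogous to Proposition~\ref{prop:from-qdas-to-pn}'', and the analogy that actually survives the addition of joins is only the \eqdas-to-net simulation half (in the base lemma the net-to-\qdas half worked because every net transition was directly mimicable; with joins, token identity matters and, as above, the analogy breaks). Read together with the ``over-approximation'' framing and with the termination clause (net terminates $\Rightarrow$ $\Aa$ terminates), the intended content is the sound-abstraction direction --- \qdas-coverable implies net-coverable, so a negative answer on $N_\Aa^\times$ transfers to $\Aa$; the ``if'' in the statement is best read with the roles of the two sides swapped. That is precisely the easy embedding you dispose of in one sentence at the end, so you prove the proposition's real content almost in passing while spending the bulk of the effort on an unprovable converse. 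Two caveats even there: with your fork parking the caller in an auxiliary waiting place, a configuration where the caller sits blocked at $s'$ (the $\step$ operation moves it to $s'$ immediately) has $\Parikh(G)(s')$ strictly above the net marking, so run-embedding alone does not yield marking dominance --- let the fork mark $s'$ directly alongside a private pending-join token instead; and in the embedding you must not map the removals of forked final vertices to the generic termination transition, or the join's $f_\gamma$ tokens get consumed. Your termination argument (an infinite run of $\Aa$ induces an infinite run of $N_\Aa^\times$) is correct as stated.
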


As coverability and termination are decidable for Petri nets, we can
decide extended Parikh coverability and extended termination
on this over-abstraction.

\paragraph{\bf \eqdas with $\ast$ parametrized fork/join:}
Given an \eqdas $\Aa$ that is not $\ast$-free, we construct
a Petri net $N_\Aa^\ast$ as follows
starting from the construction for
asynchronous concurrent \qdas: For \forkjoin actions
whose parameter is not $\ast$, we proceed as in the above
construction for $\ast$-free \eqdas. However, we need to model
the forking of an arbitrary number of blocks when the parameter of
the \forkjoin action equals $\ast$. For this, we use Petri nets
extended with $\omega$-arcs. An outgoing arc of a transition labeled
with $\omega$ adds an arbitrary number of tokens to the corresponding
place, thus, we translate the fork of block $\gamma$ into an
$\omega$-transition leading to place $s_\gamma^0$. The join is
approximated by a transition that non-deterministically chose to
advance the original workflow, ignoring not already terminated forked
tasks. Thus by extending the proof of
Proposition\,\ref{prop:from-qdas-to-pn}:

\begin{restatable}{proposition}{propwithastapprox}
  \label{prop:withastapprox}
  For all \eqdas with set of location $\Ss$, we can build
  in polynomial time a Petri net $N_\Aa^\ast$ st. $f$ is
  Parikh-coverable in $\Aa$ if $m\in\Cover(N_\Aa^\ast)$, where $m$ is the
  marking s.t. for all $s\in S$: $m(s)=f(s)$ and for all $p\in
  P\setminus S$: $m(p)=0$. Further, if $N_\Aa^\ast$ terminates,
  then $\Aa$ is guaranteed to terminate.
\end{restatable}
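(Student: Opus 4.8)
The plan is to prove the stated implication---that $m\in\Cover(N_\Aa^\ast)$ forces $f$ to be Parikh-coverable in $\Aa$---by the backward half of the correspondence behind Proposition~\ref{prop:from-qdas-to-pn}, turning a covering run of the $\omega$-net into a run of the \eqdas $\Aa$. First I would reuse the marking-correspondence of the concurrent asynchronous case, which I abbreviate $G\rhd m$: the places are $(\Xx\times\DD)\cup S$, the count $m(s)$ equals the number of vertices in state $s$, a token in $s^0_\gamma$ witnesses either a $\gamma$-labelled call vertex still queued or a task running $\gamma$ in its initial state, and the places $(x,d)$ mirror the valuation $\Data$. I would extend $\rhd$ by the two auxiliary places that sequence each $\forkjoin$ into a fork- and a join-phase. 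With this relation fixed, every base (non-$\forkjoin$) transition of $N_\Aa^\ast$ is simulated one-for-one in $\Aa$ exactly as in Proposition~\ref{prop:from-qdas-to-pn}, keeping the valuation in step, so it remains to replay the fork- and join-transitions and then to read a covering \ctg off the final marking $m'$ with $m\mleq m'$.

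The fork-transitions are the part the $\omega$-arcs handle cleanly. A fork-transition for a concrete parameter $n$ puts $n$ tokens into $s^0_\gamma$ and is matched by the step $\forkjoin(q,\gamma,n)$; an $\omega$-firing for a $\ast$-parameter deposits some concrete number $k\in\NN$ of tokens into $s^0_\gamma$ and is matched by $\forkjoin(q,\gamma,\ast)$ with its non-deterministic choice resolved to $n=k$. In either case $\Aa$ enqueues the corresponding call vertices on the concurrent queue $q$, each with a wait edge from the forking task, and the scheduler may dequeue them into running tasks in state $s^0_\gamma$. Since both the $\omega$-arc and the $\ast$-parameter range over all of $\NN$, every concrete fork the net performs is available to $\Aa$, so $G\rhd m$ is preserved throughout the fork-phase.

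The join-transitions carry the real difficulty and I expect them to be the main obstacle. In $N_\Aa^\ast$ the join is only approximated: it advances the main workflow without the forked children having reached $f_\gamma$, and---sharing the single place $f_\gamma$ among all instances of $\gamma$---it does not even record which final tokens belong to which fork. In $\Aa$, by contrast, the forking task is blocked by a wait edge to each of its own children and can move only after all of them have become final and been removed by the scheduler. To reconcile the two I would, before replaying the main's advance, schedule precisely this fork's children along the same local transition sequences the net used for the matching tokens until each reaches $f_\gamma$, whereupon the scheduler deletes the final vertices, erases the wait edges, and unblocks the forking task. The crux---and the reason only this one direction is asserted---is that a forked task need not be drivable to $f_\gamma$ under the valuation reached (it may diverge or block on a guard over $\Xx$), so the net's permissive join has in general no exact counterpart in $\Aa$. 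The argument must therefore lean on the slack of coverability: it suffices to reach any \ctg with $f\mleq\Parikh(\Graph)$, so I would stop the simulation at the first marking of the net run that already covers $m$ and, rather than forcing an unrealizable join, witness each demanded count $f(s)$ by forked tasks scheduled to exactly state $s$. Showing that the states charged by $m$ can always be realised this way---without ever having to complete a blocking join---is the heart of the proof.

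For the termination clause I would argue the contrapositive by the opposite, forward simulation, which is routine. An infinite run of $\Aa$ embeds into an infinite run of $N_\Aa^\ast$ by sending each base step to its net transition, each $\forkjoin$ of chosen multiplicity $k$ to an $\omega$-firing that produces $k$ tokens, and each completed join to the permissive join-transition, which is always enabled once the fork-phase token is present. Hence if every run of $N_\Aa^\ast$ is finite then so is every run of $\Aa$. This over-approximating embedding is immediate once the step-wise base correspondence used for the coverability part is available.
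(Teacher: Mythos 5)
Your proposal proves the wrong direction of the coverability clause, and that direction is in fact false. The net $N_\Aa^\ast$ is an \emph{over}-approximation precisely because its join transition ``non-deterministically chose[s] to advance the original workflow, ignoring not already terminated forked tasks''; accordingly, the content of Proposition~\ref{prop:withastapprox} (as the surrounding text, the over-approximation framing, and the parallel termination clause make clear) is the soundness direction: if $f$ is Parikh-coverable in $\Aa$ \emph{then} $m\in\Cover(N_\Aa^\ast)$, so that $m\notin\Cover(N_\Aa^\ast)$ certifies non-coverability of $f$. The backward simulation you attempt (net run $\to$ \eqdas run) cannot be salvaged by ``stopping at the first marking that covers $m$'': consider an \eqdas whose \main performs $\forkjoin(q,\gamma,1)$ where $\gamma$ loops forever in its initial state, after which \main moves to a fresh location $s_{bad}$. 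In $\Aa$ the wait edge blocks \main forever, so $f$ with $f(s_{bad})=1$ is not Parikh-coverable; in $N_\Aa^\ast$ the permissive join fires and marks $s_{bad}$, so the corresponding $m$ is covered. Every covering marking of the witnessing net run lies strictly beyond the unrealizable join, so there is nothing earlier to stop at, and no choice of which forked tasks to drive to which states can help. You yourself flag exactly this step (``showing that the states charged by $m$ can always be realised \dots is the heart of the proof'') and leave it open --- rightly so, since no such argument exists. Note that your difficulty is not with the base steps: for those the two-way correspondence of Lemma~\ref{lemma:from-qdas-to-pn} does hold; the permissive join alone breaks the net-to-system direction, which is precisely why the paper weakens the ``iff'' of Proposition~\ref{prop:from-qdas-to-pn} to a one-sided statement here.

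The paper's own proof is the opposite, forward simulation: it extends the system-to-net half of the proof of Proposition~\ref{prop:from-qdas-to-pn}, preserving the encoding $m\rhd(G,\Data)$, where a $\forkjoin$ whose multiplicity is resolved to $k$ is matched by an $\omega$-firing producing exactly $k$ tokens in $s^0_\gamma$, and a join completed in $\Aa$ (all children final and removed) is matched by the always-enabled permissive join transition. This single embedding yields both clauses at once: a reachable $(G,\Data)$ with $f\mleq\Parikh(G)$ gives a reachable marking of $N_\Aa^\ast$ covering $m$, and an infinite run of $\Aa$ gives an infinite run of $N_\Aa^\ast$ (whence termination of the net implies termination of $\Aa$). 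Your final paragraph on termination constructs exactly this embedding and is correct; had you applied the same embedding to the coverability clause --- reading the proposition's ``if'' as ``only if'' --- your argument would essentially coincide with the paper's. As written, however, the coverability half of your proposal pursues an unprovable implication and establishes nothing.
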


We have recently shown that the termination problem is decidable for
Petri nets with $\omega$-arcs~\cite{geeraerts-g-2012--a}.
Hence, also  extended termination is decidable on the previous
abstraction.

With respect to coverability, we can replace the $\omega$-arcs of
$N_\Aa^\ast$ by a non-deterministic loop that adds an arbitrary
number of tokens to the original arc's target place. Note that this
simple trick does not work for verifying termination. Consequently,
we can use the known algorithms for coverability on this polynomially
larger standard Petri net, and hence  the extended Parikh
coverability problem is decidable on this abstraction.

\section{Conclusion \& Outlook}
We introduce the, up to our knowledge, first formal model that grasps
the core of \gcd, and that allows to derive basic results on the
decidability of verification question thereupon. Due to the obvious
undecidability issues of the model, we currently focus on several
under- and over-approximative approaches (e.g., language bounded
verification, graph minor based abstractions, novel Petri net
extensions~\cite{geeraerts-g-2012--a}) as well as enhancements for
additional \gcd features like task groups, priorities, and timer
events.

\newcommand{\nb}{}

\clearpage
\appendix

\section{Proof for Section~\ref{sec:prelims}}
\propreachpdsdata*
\begin{proof}
  For the upper bound, we generate a reachability-equivalent
  \pds (without data)
  by encoding all possible data valuations into the pushdown system's
  states. This leads to an exponential blowup of the state space.
  The lower bound can be derived from the reduction
of the emptiness test of the intersection of a context-free language
with
$n$ regular languages that is known to be \dexptime-hard (hardness follows
easily by a reduction from linearly bounded alternating Turing machines;
a closely related problem, the reachability of pushdown systems with
checkpoints, is shown to be \dexptime-hard
in (*).\end{proof}

{\small
(*)~Javier Esparza, Anton\'{\i}n Ku\v{c}era, and Stefan Schwoon:
\textsl{Model checking  LTL with regular valuations for pushdown systems},
in { Information and Computation}, 186(2):355--376, 2003.
}

\section{Proofs of Section\,\ref{sec:parikh-cover-probl}}

\paragraph{\bf Synchronous \qdas:}

Let $\Aa$ be a synchronous \qdas with a set of locations $S$,
a set of rules $\Delta$, a set of final states $F$, and
set of queues $\SQID$. Let $\Graph$ be a \ctg of one of the forms given
in Fig.~\ref{fig:config-sync}, and let $w=w_0w_1\cdots w_n$ be a word
in $S^*$. Then, \emph{$\Graph$ is encoded by $w$},
written $\Graph\triangleright w$, iff for all $0\leq i\leq n$:
$w_i=\state(v_i)$ and the empty \ctg is mapped to the empty word $\e$.

Given a synchronous \qdas $\Aa=\tuple{ \CQID, \SQID, \Gamma, main,
  \Xx, \Sigma, (\Ts_\gamma)_{\gamma\in\Gamma}}$ with set of local
states $S$ as before, we build a pushdown system with data
$\Pp_\Aa=\tuple{Y,\Xx,y^0,S,\Sigma_\Pp,\Delta_\Pp}$ where:
\begin{myitemize}
\item the set of states is $Y=S\cup\{\e\}$ and the initial state is
  $y^0=s^0_{\main}$
\item $\Sigma_\Pp =\left(\{\push,\pop\}\times S \right) \cup
  \{\emptystack\} \cup \guardson{\Xx}\cup\assignon{\Xx}$
\item a tuple $(y,a,y')$ is a transition rule in $\Delta_\Pp\subseteq
  Y\times \Sigma_\Pp \times Y$ iff
  \begin{itemize}
    \renewcommand{\labelitemi}{\bf$\sim$}
  \item $a\in\guardson{\Xx}\cup\assignon{\Xx}$ and
    $(y,a,y')\in\Delta$
  \item $a=\push(s')$, $(s,\disps(q,\gamma),s')\in\Delta$ and
    $y'=s^0_\gamma$
  \item $a=\pop(s)$, $y\in F$ and $y'=s$
  \item $a=\emptystack$, $y=f_{main}$, and $y'=\e$.
  \end{itemize}
\end{myitemize}
Thus, at all times, the current location of $P_\Aa$ encodes the
current location of the (single) running block in $\Aa$, and the stack
content records the sequence of synchronous dispatches, as described
above. A guard or assignment in $\Aa$ is kept as is in $P_\Aa$. A
synchronous dispatch $(s,\disps(q,\gamma),s')$ in $\Aa$ is simulated
by a push of $s'$ (to record the local state that has to be reached
when the callee terminates) and moves the current state of $P_\Aa$ to
the initial state of $\gamma$. The termination of a block is simulated
by a pop (and we use the $\emptystack$ action for the termination of
$main$).

\propsyncqdaspds*
\begin{proof}
  We assert that the semantics of $\Pp_\Aa$ is the usual semantics for
  pushdown systems with data, i.e., an infinite transition system with
  configurations $c=(y,w,d)\in Y\times S^* \times \DD^\Xx$.
  Thus, we can interpret configurations also as follows:
  $(x,d)\in S^*\times \DD^\Xx$ with $x=w\cdot y\in S^*\cdot (S\cup\{\e\}$.

  Let $(\Graph,\Data)\in\Reach(\Aa)$ be reachable by a run
  $(\Graph_0,\Data_0)a_1(\Graph_1,\Data_1)a_2\dots a_n(\Graph_n,\Data_n)$.
  Then we can induce a run $(x_0,\DData_0)a_1(x_1,\DData_1)a_2\dots
  a_n(x_n,\DData_n)$ in $\Pp_\Aa$ such that $\Data_i=\DData_i$ and
  $\Graph_i\triangleright x_i$ for $0\leq i \leq n$.

  By construction of $\Pp_\Aa$, $x_0\triangleright G_0$ and $\Data_0=\DData_0$.
  We now assume that there exists a prefix of the \qdas's run of length $0\leq
  j\leq n$ of the form
  $(\Graph_0,\Data_0)\dots
  (\Graph_j,\Data_j)$ such that there exists a run of the pushdown system
  $(x_0,\DData_0)\dots(x_j,\DData_j)$ that fullfills the induction hypothesis.
  We now consider the outcome of a \qdas transition labeled $a_{j+1}$.
  We know that $\Graph_j$ must be a path of vertices $v_0\dots v_n$ connected
  by wait edges.

\begin{description}
\item[Sync. dispatch:] dispatching a block $\gamma$ on queue $q$
  leads to $(\Graph_{j+1},\Data_{j+1})$ with $\Data_{j}=\Data_{j+1}$
  and $\Graph_{j+1}$ is a path graph $v_0 v_1\dots v_n v_{n+1}$ with new
  distinct vertex $v_{n+1}$ where $\state(v_{n+1})=v^0_\gamma$.
  We mapped the dispatch rule to a $\push$ of the current state to the pushdown
  and jumping to the new initial state, i.e., we go from $(x_j,\DData_j)$ to
  $(x_{j+1},\DData_{j+1})$ where $\DData_j=\DData_{j+1}$ and
  $x_{j+1}=x_j\cdot s^0_\gamma$. Obviously, $\Graph_{j+1}\triangleright x_{j+1}$.
\item[Test/Assignment:]
  $\Graph_{j+1}$ equals $\Graph_{j}$ except for $\state_j(v_n)=s$ and
  $\state_{j+1}(v_n)=s'$ and a possible change of $\Data_{j+1}$ according
  to the underlying data action. Executing the same action on $\Pp_\Aa$
  assures that $\DData_{j+1}=\Data_{j+1}$ and changing the control state of
  the pushdown only changes $x_j=w\cdot s$ to $x_{j+1}=w\cdot s'$; thus,
  $\Graph_{j+1}\triangleright x_{j+1}$.
\item[Termination:]
      To apply the action $\Graph_j$ consists of a (non-empty) path ending in
      $v$ with $\state_j(v)\in F$ and $\Graph_{j+1}=\Graph_{j}\setminus v$,
      and $\Data_j=\Data_{j+1}$. Note that $\Graph_{j+1}$ could be possibly
      empty.
      Given a $(x_j,\DData_j)$ according to the induction hypothesis, then
      we have to consider two cases: either $x_j=w_j\cdot y_j$ with $w_j\in S^+$
      and $y_j\in S$ (i.e., there is at least one element on the stack), or
      $x_j=y_j\in S$ (i.e., stack is empty). In the second case, we know that
      $x_j\in S_\main$ and by the induction hypothesis, that $x_j=s^0_\main$
      and $\Graph_j$ a path of length 1.
      Now, $\Pp_\Aa$ takes the $\emptystack$ transition leading to the (bottom)
      state $\e$, i.e., $x_{j+1}=\e$, hence $\Graph_{j+1}$ is empty and
      $\Graph_{j+1}\triangleright \e$.
      If the stack is not empty, then we can take a \pop transition
      such that $x_{j+1}=w\in S^+$ for $x_j=w\cdot s$, hence
      $\Graph_{j+1}\triangleright
      x_j$. Obviously $\Data_{j+1}=\Data_j=\DData_j=\DData_{j+1}$.
 \end{description}
  (Recall that we asserted dispatch and scheduling/dequeueing to be atomic, so we
  do not need to consider other actions of the scheduler.)

  The reverse direction follows analogously as the previous inductive
  construction used necessary \emph{sufficient} steps.
  \qed
\end{proof}

\propsyncqdaspds*
\begin{proof}
  We assert that the semantics of $\Pp_\Aa$ is the usual semantics for
  pushdown systems with data, i.e., an infinite transition system with
  configurations $c=(y,w,d)\in Y\times S^* \times \DD^\Xx$.
  Thus, we can interpret configurations also as follows:
  $(x,d)\in S^*\times \DD^\Xx$ with $x=w\cdot y\in S^*\cdot (S\cup\{\e\}$.

  Let $(\Graph,\Data)\in\Reach(\Aa)$ be reachable by a run
  $(\Graph_0,\Data_0)a_1(\Graph_1,\Data_1)a_2\dots a_n(\Graph_n,\Data_n)$.
  Then we can induce a run $(x_0,\DData_0)a_1(x_1,\DData_1)a_2\dots
  a_n(x_n,\DData_n)$ in $\Pp_\Aa$ such that $\Data_i=\DData_i$ and
  $\Graph_i\triangleright x_i$ for $0\leq i \leq n$.

  By construction of $\Pp_\Aa$, $x_0\triangleright G_0$ and $\Data_0=\DData_0$.
  We now assume that there exists a prefix of the \qdas's run of length $0\leq
  j\leq n$ of the form
  $(\Graph_0,\Data_0)\dots
  (\Graph_j,\Data_j)$ such that there exists a run of the pushdown system
  $(x_0,\DData_0)\dots(x_j,\DData_j)$ that fullfills the induction hypothesis.
  We now consider the outcome of a \qdas transition labeled $a_{j+1}$.
  We know that $\Graph_j$ must be a path of vertices $v_0\dots v_n$ connected
  by wait edges.

\begin{description}
\item[Sync. dispatch:] dispatching a block $\gamma$ on queue $q$
  leads to $(\Graph_{j+1},\Data_{j+1})$ with $\Data_{j}=\Data_{j+1}$
  and $\Graph_{j+1}$ is a path graph $v_0 v_1\dots v_n v_{n+1}$ with new
  distinct vertex $v_{n+1}$ where $\state(v_{n+1})=v^0_\gamma$.
  We mapped the dispatch rule to a $\push$ of the current state to the pushdown
  and jumping to the new initial state, i.e., we go from $(x_j,\DData_j)$ to
  $(x_{j+1},\DData_{j+1})$ where $\DData_j=\DData_{j+1}$ and
  $x_{j+1}=x_j\cdot s^0_\gamma$. Obviously, $\Graph_{j+1}\triangleright x_{j+1}$.
\item[Test/Assignment:]
  $\Graph_{j+1}$ equals $\Graph_{j}$ except for $\state_j(v_n)=s$ and
  $\state_{j+1}(v_n)=s'$ and a possible change of $\Data_{j+1}$ according
  to the underlying data action. Executing the same action on $\Pp_\Aa$
  assures that $\DData_{j+1}=\Data_{j+1}$ and changing the control state of
  the pushdown only changes $x_j=w\cdot s$ to $x_{j+1}=w\cdot s'$; thus,
  $\Graph_{j+1}\triangleright x_{j+1}$.
\item[Termination:]
      To apply the action $\Graph_j$ consists of a (non-empty) path ending in
      $v$ with $\state_j(v)\in F$ and $\Graph_{j+1}=\Graph_{j}\setminus v$,
      and $\Data_j=\Data_{j+1}$. Note that $\Graph_{j+1}$ could be possibly
      empty.
      Given a $(x_j,\DData_j)$ according to the induction hypothesis, then
      we have to consider two cases: either $x_j=w_j\cdot y_j$ with $w_j\in S^+$
      and $y_j\in S$ (i.e., there is at least one element on the stack), or
      $x_j=y_j\in S$ (i.e., stack is empty). In the second case, we know that
      $x_j\in S_\main$ and by the induction hypothesis, that $x_j=s^0_\main$
      and $\Graph_j$ a path of length 1.
      Now, $\Pp_\Aa$ takes the $\emptystack$ transition leading to the (bottom)
      state $\e$, i.e., $x_{j+1}=\e$, hence $\Graph_{j+1}$ is empty and
      $\Graph_{j+1}\triangleright \e$.
      If the stack is not empty, then we can take a \pop transition
      such that $x_{j+1}=w\in S^+$ for $x_j=w\cdot s$, hence
      $\Graph_{j+1}\triangleright
      x_j$. Obviously $\Data_{j+1}=\Data_j=\DData_j=\DData_{j+1}$.
 \end{description}
  (Recall that we asserted dispatch and scheduling/dequeueing to be atomic, so we
  do not need to consider other actions of the scheduler.)

  The reverse direction follows analogously as the previous inductive
  construction used necessary \emph{sufficient} steps.
  \qed
\end{proof}

\begin{restatable}{lemma}{lengenparikautomata}\label{lem:genparikautomata}
  Given a finite set $S$ and a function $f:S\rightarrow \NN$,
  then there exists a finite automaton $\Ff_f$ with
  alphabet $S$ of size exponential in $\vert S\vert$ and polynomial in
  (in the binary encoding of)
   $max_{s\in S}f(s)$
  such that $\Ll(\Ff_f)=\{w \in S^* : |w|_s\geq f(s) \text{ for all } s\in
  S\}$.
\end{restatable}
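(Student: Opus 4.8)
The plan is to realize $\Ff_f$ as the synchronous product of $\vert S\vert$ threshold counters, one per letter. For each $s\in S$ I introduce a small DFA $\Ff_s$ over the alphabet $S$ with state set $\{0,1,\dots,f(s)\}$ and initial state $0$; on reading the letter $s$ it increments the counter, saturating once $f(s)$ is reached (formally $i\mapsto\min(i+1,f(s))$), while on every other letter it self-loops. Its unique accepting state is $f(s)$. I then set $\Ff_f=\prod_{s\in S}\Ff_s$, i.e. the automaton whose states are the tuples $Q=\prod_{s\in S}\{0,\dots,f(s)\}$, whose initial state is the all-zero tuple, which feeds each input letter to all components simultaneously, and which accepts exactly in the single state $(f(s))_{s\in S}$ (the product of the per-letter accepting singletons).

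First I would prove the invariant that, after reading a word $w\in S^*$, the $s$-component of the state reached by $\Ff_f$ equals $\min(|w|_s,f(s))$. This is an immediate induction on $|w|$: reading a letter $t$ leaves every component $s\neq t$ unchanged, and moves the $t$-component from $\min(|w|_t,f(t))$ to $\min(|w|_t+1,f(t))=\min(|wt|_t,f(t))$. Given the invariant, $w$ is accepted iff $\min(|w|_s,f(s))=f(s)$ for every $s\in S$, i.e. iff $|w|_s\ge f(s)$ for all $s$. Hence $\Ll(\Ff_f)=\{w\in S^*: |w|_s\ge f(s)\text{ for all }s\in S\}$, which is exactly the required language.

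For the size I count $\vert Q\vert=\prod_{s\in S}(f(s)+1)\le(\max_{s\in S}f(s)+1)^{\vert S\vert}$, with $\vert S\vert\cdot\vert Q\vert$ transitions, so $\Ff_f$ is exponential in $\vert S\vert$ and polynomial in the value $\max_{s\in S}f(s)$ (equivalently, exponential in its binary length, matching the bound used in the proof of Proposition~\ref{prop:syncqdas_parikh_sim}). There is no deep obstacle in this argument; the only point demanding care is the size accounting, and in particular the decision to \emph{saturate} each counter at $f(s)$ rather than keep counting: this cap is both what keeps $Q$ finite and precisely what correctness allows, since once a threshold is met the exact count is irrelevant.

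Finally I would record why this dependence on $\max_{s\in S}f(s)$ cannot be dropped, so that the counter construction is essentially optimal. Already for $\vert S\vert=1$, the words $s^{0},s^{1},\dots,s^{f(s)}$ are pairwise distinguishable for the target language: for $i<j\le f(s)$ the suffix $s^{f(s)-j}$ sends $s^{j}$ into the language and $s^{i}$ outside it. Thus there are at least $f(s)+1$ Myhill--Nerode classes, and any automaton recognizing $\Ll(\Ff_f)$ needs at least $\max_{s\in S}(f(s)+1)$ states. Consequently the stated size is the best achievable, and a truly \emph{poly-logarithmic} dependence on $\max_{s\in S}f(s)$ is impossible, so the size in the statement should be read as polynomial in the value (exponential in the binary encoding).
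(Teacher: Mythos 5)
Your construction is essentially the paper's own, just run in the opposite direction: the paper uses $S$-indexed vectors of counters initialized to $f(s)$ that decrement to $0$ on each occurrence of $s$ (flooring at $0$) and accept at the zero vector, while you count up from $0$ with saturation at $f(s)$ and accept at $(f(s))_{s\in S}$; the two automata are identical under $q\mapsto f-q$, and your size accounting and reading of the bound as polynomial in the \emph{value} of $\max_{s\in S}f(s)$ match the paper's use of the lemma. The Myhill--Nerode optimality remark is a correct bonus not present in the paper, but the core argument is the same.
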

\begin{proof}
  Given a set $S$ and a function $f:S\mapsto \NN$. Let $k=max_{s\in S}f(s)$
  (which must exists as $S$ is finite). Then  $\Ff_f$ is the finite
  automaton $\tuple{Q,S,q^0,\Delta,q^f}$ with
  states $Q=S\times\{0\dots k\}$ (interpreted as an $S$-indexed vector of values
  in $0\dots k$), an action alphabet $S$, the initial state is $q^0$ where
  $q^0(s)=f(s)$, the finial state is $q^f$ where $q^f(s)=0$.
  The transitions of $\Ff_f$ are defined as follows:
  $(q,s,q')\in\Delta$ iff $q'(s)=q(s)-1$ for $q(s)>1$, else $q'(s)=q(s)$,
  and for all $t\in S\setminus\{s\}$ we have $q'(t)=q(t)$. Thus
  each transition labeled by an action $s$ reduces the ``counter'' $q(s)$
  by one until zero and once arrived at zero, the counter $q(s)$
  remains zero for any further $s$ action. Further, the control structure
  of $\Ff_f$ is acyclic (except for the loops at $q^f$), thus
  each run can visit each state in $Q\setminus \{q^f\}$.

  If $w=a_1\dots a_n\in\Ll(\Aa)$ then it was accepted by a run $q_0a_1q_1\dots
  a_nq_n$ where $q_0=q^0$ and $q_n=q^f$. Due to our construction of $\Delta$, it
  holds for $w=a_1\dots a_n$ that $|w|_s\geq q_0(s)=f(s)$ for all $s\in S$.
    If $w\notin \Ll(\Aa)$ then there exists a run $q_0 a_1q_1\dots a_nq_n$
  where $q_0=q^0$ and for $q_n\neq q^f$ it holds that there exists at least one
  $s\in S$ such that $q_n(s)>0$, each transition $(q_{i-1},a_i,q_i)$ assures that
  $q_{i-1}(s)\geq q_i(s)$, hence $|w|_s<f(s)$ for at least one $s\in S$.\qed
\end{proof}

\propsyncqdasparikhsim*
\begin{proof}[Prop.\,\ref{prop:syncqdas_parikh_sim}]
  First, we construct the \pds with data $\Pp_\Aa$ and states $S$
  as mentioned before. Then, we
  translate the \pds with data to a bisimilar \pds without data
  $\widehat{\Pp_\Aa}=\tuple{\widehat{Y},\widehat{y^0},
  \widehat{\Phi},\widehat{\Sigma}, \widehat{\Delta}}$
  by encoding all possible valuations of variables into the {\pds}'s states by the
  standard product construction, i.e.,
  $\widehat{Y}=S\times\left(\Xx\times\DD\right)$. Given $y\in\widehat{Y}$,
  let $S(y)\in S$ denote the original state component.
  Note: $\widehat{\Pp_\Aa}$ is at most
  exponentially larger as $\Pp_\Aa$ and  this construction does not change the
  pushdown system's behaviour with respect to the stack but only internal
  actions.

  Second, from the function $f$, we construct the automaton
  $\Ff_f=\tuple{Q,S,q^0,\Delta_\Ff,q^f}$ analogous to
  Lemma\,\ref{lem:genparikautomata}.

  \newcommand{\Aaf}{{\Aa,f}}
  Finally, we define the \pds
  $\Pp_{\Aaf}=\tuple{Y,y^0,\Phi,\Sigma,\Delta_\Aaf}$
  as follows
  \begin{myitemize}
    \item states are $Y=\widehat{Y}\dotcup Q$ (assuring
       disjointness by relabeling when necessary)
    \item $y^0=\widehat{y^0}$ is the initial state
    \item $\Phi=\widehat{\Phi}$ is the stack alphabet
      (where $\widehat{\Phi}=S$ due to the above construction)
    \item $\Sigma=\widehat{\Sigma} \cup\{\e\}$
    \item a tuple $(y,a,y')$ is a rule in $\Delta_\Aaf\subseteq
      Y\times\Sigma\times Y$ iff one of the following
      holds
      \begin{myitemize}
        \item $(y,a,y')\in\widehat{\Delta}$ (include all transition rules of
          $\widehat{\Pp_\Aa}$);
        \item $a=\pop(s)$ for $s\in\Phi$ and $(q,s,q')\in\Delta_\Ff$ (include
          rules of $\Ff_f$ and change an $s$ action to {\pop(s)} for $s\in S$);
        \item $y\in \widehat{Y}$, $a=\push(z)$ for $z=S(y)$,
          and $y'=q^0$ (connect all states in
          $\widehat{Y}$ with the initial state of $\Ff_f$, additionally stocking
          the current ``state''-component on the stack).
      \end{myitemize}
  \end{myitemize}
  Note that $\Pp_{\Aaf}$ is of size exponential with respect to both the \qdas
  and $f$ due to serial composition.

  We now have to show that if there is a run in $\Pp_\Aaf$ that reaches
  the state $q^f$, then there exists configuration $c=(\Graph,\Data)$ of
  $\Aa$ such that $f\mleq\Parikh(G)$.

  Assert that there exists a run of $\Pp_\Aaf$ reaching $q^f$, then
  it must be of the following form
  $\langle x_0, a_1, x_1, \dots, a_k,
  x_k, a_{k+1}, x_{k+1}, a_{k+2},\dots, a_n, x_n\rangle$
  where $x_i=(y_i,w_i)\subseteq Y\times S^*$ are the corresponding infinite
  transition systems configurations.
  Further, $y_0=y^0$, $y_n=q^f$, $y_{k+1}=q^0$, and $\langle y_1\dots
  y_k\rangle$ is a
  subrun that only uses states in $\widehat{Y}$ as well as transitions in
  $\widehat{\Pp_\Aa}$; $\{y_{k+1},\dots,y_{n}\}\subseteq Q$ and the corresponding
  transitions are derived from $\Delta_\Ff$, as well as $a_{k+1}=\push(S(y_k))$.

  Let us take a closer look on the first part of the run:
  $\langle y_0, a_1,\dots, a_n, x_k\rangle$ is equivalent to a run of
  $\widehat{\Pp_\Aa}$ that reaches a configuration $x_k$. The latter is,
  following Propositions\,\ref{prop:sync_qdas_pds} and \ref{prop:pds_sync_qdas}, similar to a run of the
  original \qdas $\Aa$ that reaches a configuration $c=(\Graph,\Data)$ where
  $\Graph\triangleright y_k\cdot S(y_k)$. Thus, $c\in\Reach(\Aa)$.

  The transition $(x_k, \push(S(y_k)), x_{k+1})$ now transfers the
  encoding of $\Graph$ to the stack, i.e., $w_{k+1}=y_k\cdot S(y_k)$.
  All other information on data encoded in $y_k$ is lost in this step.

  Now, by Lemma\,\ref{lem:genparikautomata} we know that the subrun
  $\langle x_{k+1}, a_{k+2}, \dots, a_n, x_n\rangle$ leading to the final
  state of $\Ff_f$ assures that
  $|w_{k+1}|_s\geq f(s)$ for all  $s\in S$. Hence, for the previously found
   $c=(\Graph,\Data)\in\Reach(\Aa)$ it holds that $f\mleq\Parikh(G)$.\qed
\end{proof}

Let us take a closer look on the dispatches that happen in runs of
synchronous \qdas that have only \emph{serial} queues. Assume a run of
such a \qdas, and suppose the first dispatch performed along this run
(by \texttt{main}) is $\disps(q,\gamma)$. As the dispatch is
synchronous, \texttt{main} is blocked, and the scheduler has to
dequeue $\gamma$ to let the system progress. Cleraly, if $\gamma$
performs a synchronous dispatch $\disps(q,\gamma')$ to the same queue
$q$, we reach a deadlock. Indeed, the task running $\gamma$ is blocked
by the synchronous dispatch of $\gamma'$, but we need to wait for the
termination of $\gamma$ to be able to dequeue $\gamma'$ from $q$
(because $q$ is serial). So, $\gamma$ has to dispatch its blocks to
other queues. For the same reason, we also reach a deadlock if a block
called by $\gamma$ performs a synchronous dispatch into $q$. We
conclude that, in all reachable \ctg, the following holds for all
queues: either the queue contains one block and there is no running
task from this queue, or the queue is empty, and there is at most one
running task from this queue.  Hence, all the reachable \ctg have at
most $\vert SQID\vert +2$ vertices.  Thus, the pushdown systems used
in all previous constructions have bounded stack height and we can
apply the emptiness test on a finite state system when proving
Proposition\,\ref{prop:syncqdas_parikh_sim}.  The lower bound can be
derived from Proposition\,\ref{prop:queueless}. Thus we can derive:

\propsyncqdasparikhsim*

\subsubsection{From \pds to \qdas}

Given a \pds $\Pp$, we construct a synchronous \qdas $\Aa_\Pp$ as shown in
Figure\,\ref{fig:sim_pds}. The underlying idea is the inverse of
the above simulation: we map a \push action of a letter $\phi$ to synchronous
dispatch call of a block $\phi$ and simulate the stack contents in the \ctg such
that we can only map a \pop action to a task's termination if we match the
topmost letter of the stack, encoded in the block name.

\begin{figure}[t!]
  \centering
  \begin{minipage}{0.5\linewidth}
      \begin{lstlisting}
global state := $x^0$
global c_queue q

def $\phi$(): // for each $\phi\in\Phi\dotcup\{\main\}$

  while(true):
    select $(s,a,s')\in\Delta_\Pp$ where state=$s$

    if $a=\push(\phi')$ :
      state := $s'$
      dispatch_s(q,$\phi'$)

    if $a=\pop(\phi)$ and $\phi=\phi'$ :
      state := $s'$
      terminate
      \end{lstlisting}
    \end{minipage}
 \begin{minipage}[t]{0.5\linewidth}
\begin{tikzpicture}
  [zstd/.style={state,font=\tiny},anchor=west]

  \draw (-0.5,1.2) node[anchor=west] (dummy){\ctg in $\Reach(\Aa_\Pp)$:};
  \draw (0,0.5) node[zstd] (00) {$\main$};
  \draw (00)+(1,0) node[zstd] (01) {$\phi_1$};
  \draw (01)+(1,0) node[font=\tiny,anchor=west] (02) {\dots};
  \draw (02)+(0.5,0) node[zstd] (03) {$\phi_k$};
  \draw[dashed] (00) edge[->] (01)
    (01) edge[->] (02)
    (02) edge[->] (03);

    \begin{pgfonlayer}{background}
      \draw node[fit=(00) (03),fill=gray!20] (c1) {};
      \draw (c1.north east)
        node[anchor=south east,inner sep=0pt,font=\tiny]
        {stack};
    \end{pgfonlayer}
\end{tikzpicture}
\end{minipage}

    \caption{From a pushdown system  to a \qdas: \main and
    $\phi$ for $\phi\in\Phi$ \label{fig:sim_pds}}
    \vspace{-4ex}
\end{figure}

The control state of the \pds is stored in the variable \texttt{state} and
the behaviour of the control structure of \Pp is encoded as non-determinstic
choice (line~$7$) that assures that reaching the dispatch and termination actions
(lines~$11$/$15$) demands that the selected transition rule harmonizes with the
current change of the variable \texttt{state} from $s$ to $s'$ and that a
$\push(\phi')$ action is only possible if the currently running task is labeled
by the blockname $\phi'$ (line~$13$).

A reachable configuration of $\Aa_\Pp$
is given by $(\Graph,\Data)$ where $\Graph$ is---as discussed before---a
path of vertices $v_0v_1\dots v_k$. As before, synchronous dispatch calls assure
there is no more than one task
active at the same time. Given $c=(\Graph,\Data)\in\Reach(\Aa_\Pp)$
and a configuration $y=(x,w)\in X\times \Phi^*$ that is reachable in $\Pp$; then
$c$ is represented by $y$, written $c \triangleright y$, iff
$\Data(\texttt{state})=x$ and for $w=w_1\dots w_k$
$\lambda(v_i)=w_i$ for $1\leq i\leq k$ and $\lambda(v_0)=\main$. Hence, the state
of the \pds is stored in the variable \texttt{state}, and the path $v_1\dots v_k$
encodes in the underlying task's blocks the stack content, where the empty stack
is represented by a single vertex labeled by \main.

\begin{restatable}{proposition}{propsyncpdsqdas}
\label{prop:sync_pds_qdas}  Given a pushdown system $\Pp$, then we can generate  a
  synchronous \qdas $\Aa_\Pp$ such that the following holds:
  for any
  run $\pi=y_0a_1y_1\dots a_ny_n$ in $\Pp$ there exists a
  run $\rho=c_0a_1c_1\dots a_nc_n$ of $\Aa_\Pp$ such that
  for all $c_i\triangleright x_i$
  ($0\leq i \leq n$),
  and vice versa.
\end{restatable}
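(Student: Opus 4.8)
The plan is to prove the two simulation directions by a single induction on the length of the run, mirroring the argument of Proposition~\ref{prop:sync_qdas_pds} with the roles of \pds and synchronous \qdas exchanged. The invariant to maintain is exactly the relation $c_i\triangleright y_i$ defined above: at each step the \ctg $\Graph_i$ is a wait-edge path $v_0v_1\cdots v_k$ whose bottom vertex $v_0$ is labeled $\main$, whose vertices $v_1,\dots,v_k$ spell the current stack content $w_1\cdots w_k$, whose unique unblocked vertex is the top vertex $v_k$ (labeled $w_k$), and for which $\Data_i(\texttt{state})$ equals the current \pds control state $x_i$. That $\Graph_i$ always has this shape, with exactly one running task, follows from the fact that every action of $\Aa_\Pp$ is either a (blocking) synchronous dispatch or a termination, together with the assumed atomicity of dispatch-and-schedule for synchronous \qdas.

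First I would settle the forward direction, turning a \pds run $\pi=y_0a_1y_1\cdots a_ny_n$ into a \qdas run. For the base case, $c_0$ is the initial configuration consisting of a single running $\main$-task with $\Data_0(\texttt{state})=x^0$, representing the empty stack, which satisfies $c_0\triangleright y_0$ for $y_0=(x^0,\e)$. For the inductive step, assume $c_j\triangleright y_j$ and consider $a_{j+1}$. If $a_{j+1}$ comes from a rule $(s,\push(\phi'),s')$ enabled at $y_j$, then the running block of $\Graph_j$ takes the same rule through its nondeterministic \texttt{select} (its guard $\texttt{state}=s$ holds by the invariant), sets $\texttt{state}:=s'$, and executes $\disps(q,\phi')$; by atomicity the freshly enqueued block $\phi'$ is scheduled immediately, appending a new running vertex labeled $\phi'$ to the path, so $c_{j+1}\triangleright y_{j+1}$. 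If $a_{j+1}$ comes from a $\pop$ rule, enabledness in the \pds forces the top symbol to be some $\phi'$; by the invariant the running block is exactly the block named $\phi'$, so its self-referential guard ``$a=\pop(\phi)$ with $\phi=\phi'$'' is satisfiable. That block sets $\texttt{state}:=s'$ and terminates, which deletes $v_k$ and unblocks $v_{k-1}$, matching the shortened stack and again giving $c_{j+1}\triangleright y_{j+1}$.

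For the converse direction, I would observe that $\Aa_\Pp$ is constructed so that every one of its transitions is gated by a matching \pds rule: the nondeterministic \texttt{select} chooses some $(s,a,s')\in\Delta_\Pp$ with $\texttt{state}=s$, after which the only enabled continuations are the $\disps$ branch (a push) or the terminate branch (a pop of the running block's own name). Hence projecting a \qdas run onto its selected rules yields a legal \pds run, and the same invariant shows it visits exactly the represented configurations; this establishes the ``vice versa'' claim.

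The step I expect to be the main obstacle is making the $\pop$ case watertight in both directions. Concretely, I must argue that the block's self-referential guard ``pop my own name'' captures the \pds top-of-stack test exactly, so that $(i)$ the \qdas admits no spurious pop of a symbol that is not currently on top, and $(ii)$ every legal \pds pop has a running block ready to fire it. Both facts rely on the invariant that the unique unblocked vertex always carries the top stack symbol, which in turn rests on the atomicity of dispatch-and-schedule keeping $\Graph_i$ a single wait path with one running task throughout; the degenerate empty-stack case, where the path collapses to the lone $\main$ vertex, is the only remaining piece of bookkeeping.
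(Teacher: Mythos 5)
Your proposal is correct and follows essentially the same route as the paper's own proof: the same construction (push simulated by a synchronous dispatch of a block named after the pushed symbol, pop by termination guarded by the running block's own name), the same encoding relation $c\triangleright y$ via a single wait-edge path rooted at \main whose vertex labels spell the stack, and the same induction on run length with the reverse direction obtained by observing that every enabled \qdas step is gated by a selected \pds rule (treating dispatch-and-schedule as atomic). The point you flag as the main obstacle --- that the self-referential pop guard exactly matches the top-of-stack test --- is precisely the argument the paper makes via its line-13 check, so no gap remains.
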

\begin{proof}
  Given a run $\rho=y_0 a_1 y_1 \dots a_k y_k$ of the \pds $\Pp$.
  W.l.o.g. let us consider in the following underlying sequence
  of configurations and fired transition rules
  $y_0 \delta_1 y_1 \dots \delta_k y_k$ where
  $\delta_i=(x_i,a_i,x_i')\in\Delta_\Pp$ for $1\leq i \leq k$.

  We show inductively how $\Aa_\Pp$ generates a run that
  simulates $\rho$.

  For the initial configuration of $\Pp$ $y_0=(x^0,\e)$
  and the initial configuration $c_0=(\Graph,\Data)$
  with $\Graph$ consists of a single node $v_0$ with
  $\lambda(v_0)=\main$ and $\Data(state)=x^0$ it holds
  that $c_0 \triangleright y_0$.

  Now assert that the \pds $\Pp$ reached configuration $y_i$ ($0\leq i \leq k$)
  such that $\Aa_\Pp$ simulated the prefix of the run until
  $c_i=(\Graph_i,\Data_i)$ with
  $c_i\triangleright y_i$. Assert that $\Graph_i$ is a path $v_0 v_1\dots v_l$.
  We do a case-by-case analysis with respect
  to $\delta_{i+1}=(x,a,x')$ that leads to $y_{i+1}$:
  \begin{myitemize}
    \item only the task corresponding to $v_l$ is active and
      the only way to exit its \texttt{while} loop is
      via the lines $11$ and $15$, that assure that line $7$ selected
      $\delta=(x,a,x')\in\Delta_p$ with $\Data_i(state)=x$, and that we
      set $\Data_{i+1}(state)=x'$;;
    \item if $a=\push(\phi)$ for $\phi\in\Phi$, then
      we fire the synchronous dispatch that leads to $\Graph_{i+1}=
      v_0\dots v_l v_{l+1}$ with $\lambda(v_{l+1})=\phi$, thus
      $(\Graph_{i+1},\Data_{i+1})\triangleright y_{i+1}$;
    \item if $a=\pop(\phi)$ for $\phi\in\Phi$ and we left the
      while loop then $\lambda(v_l)= \phi$  (by line $13$), and
      $\Graph_{i+1}$ equals $v_0 \dots v_{l-1}$, thus
      $(\Graph_{i+1},\Data_{i+1})\triangleright y_{i+1}$.
  \end{myitemize}

  The reverse direction follows analogously by considering lines $10;11$
  and $14;15$ as atomic actions (i.e., setting the \texttt{state} variable
  and changing the call graph of the \qdas).
\end{proof}
\subsection{Asynchronous Concurrent \qdas}

\propfromqdastopn*

The proof of the proposition relies on the following lemma, showing
that $N_\Aa$ can simulate precisely the sequence of Parikh images that
are reachable in $\Aa$. Let $(G,\Data)$ be a configuration of $\Aa$,
and let $m$ be marking of $N_\Aa$. We say that \emph{$m$ encodes
  $(G,\Data)$}, written $m\rhd (G,\Data)$ iff: $(i)$ for all
$x\in\Xx$: $m(x,\Data(x))=1$, $(ii)$ for all $x\in\Xx$: for all
$d\in\DD\setminus\{\Data(x)\}$: $m(x,d)=0$ and $(iii)$ for all $s\in
S$ $m(s)=\Parikh(G)(s)$. Then:

\begin{lemma}\label{lemma:from-qdas-to-pn}
  Let $\Aa$ be a concurrent asynchronous \qdas with set of variables
  $\Xx$ and set of locations $S$, and let $N_\Aa$ be its associated
  \pn.  Then, for all $(G,\Data)\in\Reach(\Aa)$ there is
  $m\in\Reach(N_\Aa)$ s.t. $m\rhd (G,\Data)$ and for all
  $m\in\Reach(N_\Aa)$, there is $(G,\Data)\in\Reach(\Aa) $ s.t. $m\rhd
  (G,\Data)$.
\end{lemma}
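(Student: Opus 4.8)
The plan is to prove both inclusions by induction on the length of the run, exploiting a tight step-by-step correspondence between the transitions of $\Aa$ and those of $N_\Aa$. The structural fact that makes this work is that a concurrent asynchronous \qdas has neither synchronous dispatches (it is asynchronous) nor serial queues (it is concurrent), so no wait edge is ever created. Consequently every task vertex of a reachable \ctg is unblocked and may fire a \step, while every \emph{call} vertex necessarily carries the initial state $s^0_\gamma$ of its block (only \step changes a state, and \step acts on task vertices). In particular, turning a call vertex into a task vertex via \dequeue changes neither $\Data$ nor $\Parikh(\Graph)$, so \dequeue is invisible to the encoding relation $\rhd$. This is exactly the point that lets the queue contents go unrepresented in $N_\Aa$.

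For the forward inclusion I would first check the base case: the initial \ctg $\Graph^0$ is a single task vertex for $main$ in $s^0_{main}$ with $\Data^0(x)=d_0$ everywhere, and the initial marking of $N_\Aa$ places one token in $s^0_{main}$ and one in each $(x,d_0)$, so $m_0\rhd(\Graph^0,\Data^0)$. For the inductive step, assume $m_i\rhd(\Graph_i,\Data_i)$ with $m_i\in\Reach(N_\Aa)$ and case on the \qdas action leading to $(\Graph_{i+1},\Data_{i+1})$. A \dequeue (the only concurrent scheduler move besides termination) leaves $\Parikh$ and $\Data$ unchanged, so I take $m_{i+1}=m_i$ and fire nothing. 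An asynchronous dispatch $\dispa(q,\gamma)$ fired from a task in state $s$ (moving it to $s'$ and creating a call vertex in $s^0_\gamma$) is matched by the corresponding dispatch transition of $N_\Aa$, enabled since $m_i(s)\geq 1$, which moves one token $s\to s'$ and adds one in $s^0_\gamma$. A guard $g$ is matched by the associated transition, enabled precisely because $m_i$ encodes $\Data_i$ and $\Data_i\models g$; an assignment $x\gets v$ is matched by a transition that additionally moves the token from $(x,\Data_i(x))$ to $(x,v)$; and the removal of a final vertex (in state $f_\gamma$) is matched by the transition that consumes a token from the place $f_\gamma$. In each case one verifies directly that $m_{i+1}\rhd(\Graph_{i+1},\Data_{i+1})$.

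The backward inclusion is where the real work lies, and I expect it to be the main obstacle. Running the induction on a run of $N_\Aa$, the difficulty is that a PN transition consumes a token from a place $s\in S$, whereas the matching \qdas move requires a \emph{running} (task) vertex in state $s$; but $m_i(s)=\Parikh(\Graph_i)(s)$ counts call \emph{and} task vertices. When $s$ is not an initial state this is harmless, since call vertices only ever sit in initial states; the delicate case is $s=s^0_\gamma$, where the token may be realized in $\Graph_i$ by a call vertex buried inside a concurrent queue. The fix is to prepend, before the matching move, a bounded sequence of auxiliary \dequeue steps: since $m_i(s^0_\gamma)\geq 1$ there is at least one vertex of type $\gamma$ in state $s^0_\gamma$, and repeatedly dequeuing the FIFO head of its queue (each such step preserving $\Data$ and $\Parikh$, hence the invariant) eventually exposes a task vertex in $s^0_\gamma$ to fire from. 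With a task vertex in state $s$ available, each PN transition is mirrored by the symmetric \qdas move (dispatch, guarded step, assignment, or removal of a final vertex), and one checks $m_{i+1}\rhd(\Graph_{i+1},\Data_{i+1})$. A secondary point to verify in both directions is that the construction of $N_\Aa$ correctly encodes guards over the $(x,d)$ places under the single-token invariant given by conditions (i)--(ii)---an atom $x=d$ as a read of $(x,d)$ and $x\neq d$ as a choice among the places $(x,d')$ with $d'\neq d$---and that producing a token in $s^0_\gamma$ is handled even in degenerate cases such as $s'=s^0_\gamma$; these are routine given the construction.
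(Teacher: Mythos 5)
Your proof is correct and follows essentially the same route as the paper's: a step-by-step induction in each direction, with the observation that dequeues leave $\Data$ and $\Parikh(\Graph)$ unchanged doing the work of reconciling PN tokens with call-versus-task vertices. The only (cosmetic) divergence is in the backward direction, where you insert \dequeue{} steps on demand to expose a task vertex in $s^0_\gamma$, whereas the paper strengthens the induction hypothesis to keep all queues empty by dequeuing eagerly after every dispatch; the two devices are interchangeable, and you correctly flag the same minor degeneracies (e.g.\ $s'=s^0_\gamma$) that the paper also glosses over.
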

\begin{proof}
  We prove the two statements separately.

  Let $(G,\Data)$ be a configuration in $\Reach(\Aa_N)$, and let
  $(G_0,\Data_0)a_0(G_1,\Data_1)a_1\cdots a_{n-1}(G_n,\Data_n)$ be a
  run s.t. $(G,\Data)=(G_n,\Data_n)$. Let us build, inductively, a run
  $m_0m_1\cdots m_k$ of $N_\Aa$ s.t. $m_k\rhd (G,\Data)$. The
  induction is on the length $n$ of the \qdas run.

  \textbf{Base case $n=0$.} It is easy to check that $m_0\rhd
  (G_0,\Data_0)$.

  \textbf{Inductive case $n=\ell$.} Let us assume that $m_0m_1\cdots
  m_j$ is a run of $N_\Aa$ s.t. $m_j\rhd (G_{\ell-1},\Data_{\ell-1})$,
  and let us show how to complete it, if needed. We consider several
  case depending on $a_{n-1}$. In the case where $a_{n-1}=\varepsilon$
  and the scheduler action consists in dequeueing a block from a queue,
  we have $\Parikh(G_{\ell-1})=\Parikh(G_\ell)$ and
  $\Data_\ell=\Data_{\ell-1}$. By induction hypothesis $m_j\rhd
  (G_{\ell-1},\Data_{\ell-1})$, hence $m_j\rhd (G_\ell, \Data_\ell)$,
  and we do not add elements to the run built so far. In the case
  where $a_{\ell-1}=\dispa(\gamma,q)$, we assume
  $(s,a_{\ell-1},s')\in\Delta$ is the corresponding \lts
  transition. Clearly,
  $\Parikh(G_\ell)(s')=\Parikh(G_{\ell-1})(s')+1$,
  $\Parikh(G_\ell)(s)=\Parikh(G_{\ell-1})(s)-1$,
  $\Parikh(G_\ell)(s^0_\gamma)=\Parikh(G_{\ell-1})(s^0_\gamma)+1$ and
  for all other location $s$:
  $\Parikh(G_\ell)(s)=\Parikh(G_{\ell-1})(s)$. It is easy to check
  that the \pn transition $t$ s.t. $I(t)(p)=1$ iff $p=s$ and
  $O(t)(p)=1$ iff $p\in\{s',s^0_\gamma\}$ is fireable from $m_j$ (as
  $m_j\rhd (G_{\ell-1},\Data_{\ell-1})$ by induction hypothesis) and
  yields the same effect, i.e. the marking $m$ with
  $m_j\xrightarrow{t}m$ is s.t. $m\rhd (G_\ell,\Data_\ell)$. All the
  other cases (test, assignment and task termination) are treated
  similarly.  \medskip

  Now, let $m_0m_1\cdots m_n$ be a run of $N_\Aa$ and let us build,
  inductively, a run $(G_0,\Data_0)a_0(G_1,\Data_1)a_1\cdots
  a_{k-1}(G_k,\Data_k)$ s.t. $m_n\rhd (G_k,\Data_k)$ \emph{and} all
  the queues are empty in $G_n$. The induction is on the length $n$ of
  the \pn run.

  \textbf{Base case $n=0$.} It is easy to check that $m_0\rhd
  (G_0,\Data_0)$.

  \textbf{Inductive case $n=\ell$.} Let us assume that
  $(G_0,\Data_0)a_0\cdots a_{j-1}(G_j,\Data_j)$ is a run of $\Aa$
  s.t. $m_{\ell-1}\rhd (G_j,\Data_j)$ and all the queues are empty in
  $G_j$. Let $t$ be the \pn transition
  s.t. $m_{\ell-1}\xrightarrow{t}m_\ell$ and let us show how we can
  extend the run of $\Aa$. We consider several cases. If $t$ is a
  transition that corresponds to an asynchronous dispatch, then there
  are $s$, $s'$, $\gamma$ and $q$ s.t. $I_t(p)=1$ iff $p=s$ and
  $O_t(p)=1$ iff $p\in\{s', s^0_\gamma\}$. By definition of $N_\Aa$,
  there is a transition $(s,\dispa(\gamma,q),s')$ in $\Aa$. Moreover,
  $m_{\ell-1}(s)\geq 1$, since $t$ is fireable from $m_{\ell-1}$. As
  $m_{\ell-1}\rhd (G_j,\Data_j)$, the $(s,\dispa(\gamma,q),s')$ is
  fireable from $(G_j,\Data_j)$, and leads to a configuration
  $(G_{j+1}, \Data_{j+1})$, where a $\gamma$ block has been enqueued
  in $q$, hence $\Data_{j+1}=\Data_j$,
  $\Parikh(G_{j+1})(s)=\Parikh(G_j)(s)-1$,
  $\Parikh(G_{j+1})(s')=\Parikh(G_j)(s')+1$,
  $\Parikh(G_{j+1})(s^0_\gamma)=\Parikh(G_j)(s^0_\gamma)+1$ and for
  all other state $s''$: $\Parikh(G_{j+1})(s'')=\Parikh(G_j)(s'')$. It
  is easy to check that $m_\ell\rhd (G_{j+1},\Data_{j+1})$, however,
  queue $q$ contains a call to $\gamma$ in $G_{j+1}$ and is thus the
  only non-empty queue in this \ctg. Thus, from
  $(G_{j+1},\Data_{j+1})$, we execute the scheduler action that
  dequeues from $q$. This has no effect on the Parikh image of the
  \ctg. Thus, we reach $(G_{j+2},\Data_{j+2})$
  s.t. $\Data_{j+1}=\Data_{j+2}$, $\Parikh(G_{j+1})=\Parikh(G_{j+2})$,
  hence $m\ell\rhd (G_{j+2},\Data_{j+2})$ too, and all the queues are
  empty in $G_{j+2}$, which concludes the induction step. All the
  other cases are treated similarly.\qed
\end{proof}

We can now prove Proposition~\ref{prop:from-qdas-to-pn}:
\begin{proof}
  It is easy to check that the construction of $N_\Aa$, as described
  above, is polynomial. Then, assume $f$ is Parikh coverable in $\Aa$,
  i.e. there is $(G,\Data)\in \Reach(\Aa)$
  s.t. $f\mleq \Parikh(G)$. By Lemma~\ref{lemma:from-qdas-to-pn},
  there is $m'\in\Reach(N_\Aa)$ s.t. $m'\rhd (G,\Data)$. Hence, for
  all $s\in S$: $m'(s)=\Parikh(G)(s)$. So, for all $s\in S$:
  $m(s)=f(s)\leq \Parikh(G)(s)=m'(s)$. Hence, $m\mleq m'$ (as $m(p)=0$
  for all $p\not\in S$). Since $m'\in\Reach(N_\Aa)$, we conclude that
  $m\in\Cover(N_\Aa)$. On the other hand, assume $m\in\Cover(N_\Aa)$,
  with $m(p)=0$ for all $p\not\in S$, and let $f$ be s.t. for all
  $s\in S$: $f(s)=m(s)$. Since $m\in\Cover(N_\Aa)$, there is
  $m'\in\Reach(N_\Aa)$ s.t. $m\mleq m'$. By
  Lemma~\ref{lemma:from-qdas-to-pn}, there is
  $(G,\Data)\in\Reach(\Aa)$ s.t. $m'\rhd (G,\Data)$. Thus, by
  definition of $\rhd$, for all $s\in S$: $m'(s)=\Parikh(G)(s)$. Thus,
  since $m\mleq m'$ and by definition of $f$, we conclude that for all
  $s\in S$: $f(s)=m(s)\leq m'(s)=\Parikh(G)(s)$. Hence, $f$ is
  Parikh-coverable in $\Aa$.\qed
\end{proof}

\propfrompntoqdas*

The proof of Proposition~\ref{prop:from-pn-to-qdas} is split into two
lemmata, given hereunder. They rely on an alternate characterization of
$\Cover(N)$. That is, $m\in\Cover(N)$ iff $m$ is reachable by a
so-called \emph{lossy} run of $N$, i.e. a sequence of markings
$m_0'm_1'\cdots m_n'$ s.t. $m_0'\mleq m_0$ and for all $0\leq i\leq
n-1$: there is $\mbar_{i+1}$ and a transition $t_i$
s.t. $m_i'\xrightarrow{t_i}\mbar_{i+1}$ and
$m_{i+1}'\mleq\mbar_{i+1}$. Intuitively, a lossy run corresponds to
firing a transition of the PN, and then spontaneously losing some
tokens. The proof of these lemmata also assumes that each $p\in P$,
the \lts $\Ts_p=\tuple{\{s^0_P, s^{mid}_p,s^{fin}_p\}, s^0_p,
  \Sigma,\Rightarrow}$ is as depicted in Fig.~\ref{fig:lts-p}.

\begin{figure}
  \centering
  \includegraphics[scale=.5]{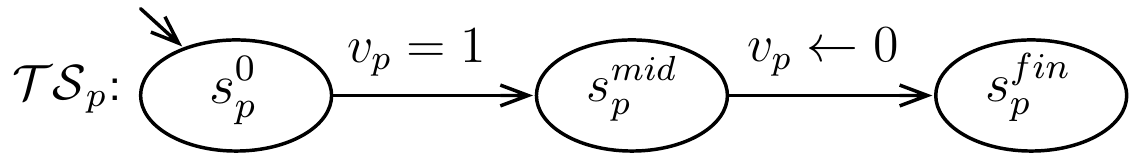}
  \caption{The \lts of bloc {\tt p}.}
  \label{fig:lts-p}
\end{figure}

\begin{lemma}
  Let $N=\tuple{P,T,m_0}$ be a \pn, and let $\Aa_N=\tuple{\CQID,
    \emptyset, \Gamma, \mathtt{main}, \Xx, \Sigma,
    (\Ts_\gamma)_{\gamma\in\Gamma}}$ be its corresponding \qdas.
  \textbf{If} $m\in\Cover(N)$ \textbf{then} there exists
  $(G,\Data)\in\Reach(\Aa_N)$ s.t. $G\rhd m$.
\end{lemma}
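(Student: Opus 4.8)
The plan is to invoke the lossy-run characterization of $\Cover(N)$ recalled above. As $m\in\Cover(N)$, fix a lossy run $m_0'm_1'\cdots m_n'=m$ with $m_0'\mleq m_0$ and, for each $0\le i<n$, a transition $t_i$ and a marking $\mbar_{i+1}$ with $m_i'\xrightarrow{t_i}\mbar_{i+1}$ and $m_{i+1}'\mleq\mbar_{i+1}$. I would then build, by induction on $i$, a run of $\Aa_N$ whose $i$-th milestone is a configuration $(\Graph_i,\Data_i)$ with $\Graph_i\rhd m_i'$; the milestone for $i=n$ witnesses the lemma. The encoding $\rhd$ is exactly the invariant I keep at every milestone: \texttt{main} idles at $\mystate{8}{main}$, \texttt{trans} rests at the head of its loop $\mystate{14}{trans}$, precisely $m_i'(p)$ blocks \texttt{p} sit in their initial state $s^0_p$ (whether still queued or already dequeued, both being labelled $s^0_p$), and no \texttt{p} block occupies $s^{mid}_p$ or $s^{fin}_p$.

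For the base case $i=0$ I run the initialization phase of \texttt{main}: it resets every $v_p$ to $0$ and, via the nondeterministic choice, dispatches exactly $k_p:=m_0'(p)$ copies of \texttt{p} to the concurrent queue — legal since $m_0'(p)\le m_0(p)$ — then dispatches \texttt{trans} and advances to $\mystate{8}{main}$. Letting the scheduler dequeue all pending blocks leaves every \texttt{p} task blocked at $s^0_p$ (all $v_p=0$) and \texttt{trans} at $\mystate{14}{trans}$, so the resulting \ctg encodes $m_0'$.

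For the inductive step I start from a configuration encoding $m_i'$ and realize one iteration of \texttt{trans}'s loop matching the lossy firing $m_i'\xrightarrow{t_i}\mbar_{i+1}$ together with the loss down to $m_{i+1}'$. Writing $t=t_i=(I_t,O_t)$: \texttt{trans} selects $t$ and sets $v_p:=1$ for each $p$ with $I_t(p)=1$. Because the firing is enabled at $m_i'$, a \texttt{p} task is available for every such $p$; I run one of them through its two steps $s^0_p\to s^{mid}_p\to s^{fin}_p$ — the second resetting $v_p$ to $0$ — before any further \texttt{p} task leaves $s^0_p$, so that exactly $I_t(p)$ tokens are consumed. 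Once all these $v_p$ are back to $0$, \texttt{trans} exits its wait loop, dispatches one fresh \texttt{p} block for each $p$ with $O_t(p)=1$, and returns to $\mystate{14}{trans}$; the \ctg now encodes $\mbar_{i+1}=m_i'-I_t+O_t$. Finally I account for the loss $\mbar_{i+1}-m_{i+1}'\succeq 0$ by letting, for each place $p$, exactly $(\mbar_{i+1}-m_{i+1}')(p)$ blocks \texttt{p} use the loss behaviour of $\Ts_p$ (Fig.~\ref{fig:lts-p}) that moves a task from $s^0_p$ to its final state $s^{fin}_p$, after which the scheduler removes them. This restores the invariant and yields $\Graph_{i+1}\rhd m_{i+1}'$.

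The main obstacle is the concurrency bookkeeping of the inductive step. Since a \texttt{p} task exits its wait loop and resets $v_p$ in two separate, non-atomic moves, several \texttt{p} tasks may read $v_p=1$ before any of them resets it; I therefore have to justify that the nondeterministic scheduler can always be driven to consume \emph{exactly} $I_t(p)$ tokens — by running a single consuming task to completion before the others advance — and that the iteration ends with no \texttt{p} task stranded in $s^{mid}_p$ or $s^{fin}_p$. Keeping genuine consumption (the path $s^0_p\to s^{mid}_p\to s^{fin}_p$, which alone resets $v_p$) separate from pure loss (the direct move to $s^{fin}_p$) is precisely what lets a single \texttt{trans} iteration reproduce an arbitrary lossy firing, including losses at places that $t$ neither consumes nor produces.
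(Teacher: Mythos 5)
Your overall strategy coincides with the paper's: the lossy-run characterization of $\Cover(N)$, induction on the length of that run, the same invariant at every milestone (\texttt{main} idling at $\mystate{8}{main}$, \texttt{trans} at $\mystate{14}{trans}$, exactly $m_i'(p)$ blocks \texttt{p} at $s^0_p$ and none elsewhere, all $v_p=0$, no edges), and the same base case via the nondeterministic choice $k_p=m_0'(p)$. The genuine gap is in how you discharge the loss $\mbar_{i+1}-m_{i+1}'$. You appeal to a ``loss behaviour of $\Ts_p$'' that moves a task directly from $s^0_p$ to $s^{fin}_p$, but the construction of block \texttt{p} in Fig.~\ref{fig:simulossyPN} provides no such transition: a \texttt{p} task can leave $s^0_p$ only by passing the guard $v_p\neq 0$ at line 11, i.e., only while \texttt{trans} has raised $v_p$ because it selected a transition with $I_t(p)=1$. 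The step of your induction that removes the lost tokens therefore rests on an edge the \qdas does not have.

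The paper realizes the loss through precisely the non-atomicity you set out to neutralize: for each $p$ in the preset of $t_i$ it lets $I_{t_i}(p)+\mbar_{i+1}(p)-m_{i+1}'(p)$ tasks \texttt{p} all pass the guard at line 11 while $v_p$ is still $1$, and only afterwards lets each of them execute the reset at line 12 and terminate; the surplus beyond $I_{t_i}(p)$ \emph{is} the loss. By serializing the consuming tasks so that ``exactly $I_t(p)$ tokens are consumed'' you discard the only loss mechanism the construction offers, and are then forced to invent one. Note also that the over-consumption trick only produces losses at places in the preset of $t_i$; you are right to worry about ``losses at places that $t$ neither consumes nor produces,'' but the resolution cannot be a loss edge that the definition of block \texttt{p} does not contain --- it must either come from an explicit extra transition in Fig.~\ref{fig:lts-p} (in which case you should say so and check that the converse lemma still tolerates spontaneous termination), or from rearranging the lossy run so that every loss is charged to the initial \texttt{select} in \main or to a later transition consuming from that place. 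As written, the inductive step does not go through on the construction as defined.
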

\begin{proof}
  Let $m$ be a marking from $\Cover(N)$.  and let $m_0'm_1'\cdots m_n'$
  be a lossy \pn run s.t. $m=m_n$. The proof is by induction on
  the length of the run. More precisely, we show that, for all $0\leq
  i\leq n$, there is a reachable configuration
  $(G_i,\Data_i)\in\Reach(\Aa_N)$ s.t.: for all $p\in P$:
  $\Data_i(v_p)=0$, $G_i=\tuple{V^i, E^i,
    \lambda^i,\queue^i,\state^i}$, $G_i\rhd m$, $m\mleq m_i'$ and
  $E^i=\emptyset$.

  \textbf{Base case: $m_0'$}. Let us consider the run of $\Aa_N$ that
  consists in: $(a)$ executing block \texttt{main} up to line 8, then
  $(b)$ emptying the queue $C$. The execution of $(a)$ has the effect
  that: $(i)$ all $v_p$ variables are initialized to $0$ and keep this
  value, $(ii)$ for all place $p$: \emph{at most} $m_0(p)$ copies of
  block \texttt{p} are asynchronously dispatched in queue $C$ and
  $(iii)$ one copy of block \texttt{trans} is dispatched in $C$. Then,
  the execution of $(b)$ creates one running task for each block that
  is present in $C$. Thus, the execution of $(a)$ followed by $(b)$
  reaches a configuration $(G_0,\Data_0)$ with
  $G_0=\tuple{V^0=V_T^0\dotcup V_C^0,E^0,\lambda^0, \queue^0,
    \state^0}$ s.t. $V_C^0=\emptyset$ (the queue has been emptied),
  for all $p$: $|\set{v\in V_T^0}{\lambda(v)=\mathtt{p}}|=m_0(p)$,
  $|\set{v\in V_T^0}{\lambda(v)=\mathtt{trans}}|=1$ and
  $E^0=\emptyset$ (the queue is empty and all the calls are
  asynchronous). Moreover, $\state$ is such that each task running a
  \texttt{p} block is still in its initial state $s^0_p$, hence
  $G_0\rhd m_0$. Similarly, the task running the \texttt{trans()}
  block is about to enter the \texttt{while} loop at line 14. Finally,
  as the variables have been initialized to $0$ and not modified, we
  have $\Data_0(v_p)=0$ for all $p\in P$.

  \textbf{Inductive case: $m_i$} Let us assume there exist
  $(G_{i-1},\Data_{i-1})\in\Reach(\Aa_N)$ that respects all the
  conditions given at the beginning of the proof (in particular
  $G_{i-1}\rhd m_{i-1}$). Let $t_i$ and $\mbar_i$ be the \pn transition
  and marking s.t. \smash{$m_{i-1}\xrightarrow{t_i}\mbar_i$} and
  $m_i\mleq\mbar_i$ and let us show that $\Aa_N$ can simulate it. This
  is achieved by the following sequence of actions in $\Aa_N$. First,
  the block executing \texttt{trans} enters the \texttt{while} loop at
  line 14 and selects $t_i$ as transition $t$. Then, it sets all the
  variables $v_p$ s.t. $I_{t_i}(p)=1$ to $1$. Thus, at that point
  $v_p$ contains $1$ iff $I_{t_i}(p)=1$, since all $v_p$ variables
  were equal to $0$ by induction hypothesis. Then, the task executing
  \texttt{trans} is blocked as it need to wait up to the point were
  all $v_p$ are equal to $0$. Since $G_{i-1}\rhd m_{i-1}$ by induction
  hypothesis, we know that there are, in $G_{i-1}$, $m_{i-1}(p)$ tasks
  executing block $\mathtt{p}$, for all $p\in P$. However, $t_i$ is
  fireable from $m_{i-1}$, and a loss of $\mbar_i-m_i$ token is still
  possible after the firing. Hence, $m_{i-1}(p)\geq
  (I_{t_i}(p)+\mbar_i(p)-m_i(p))$ for all $p$. Thus, for all $p$,
  there is at least $(I_{t_i}(p)+\mbar_i(p)-m_i(p))$ tasks executing
  $\mathtt{p}$ in $G_{i-1}$. Thus, we complete the run of $\Aa_N$ by
  letting, for all $p$, $(I_{t_i}(p)+\mbar_i(p)-m_i(p))$ \texttt{p}
  task execute lines 11 in turn one after the other. Then, letting
  them all execute line 12, and reach their final state (Remark that
  all the \texttt{p} task must first execute line 11 before one of
  them can execute line 12, as this sets $v_p$ to $0$ and would
  prevent other tasks to execute line 11). This is possible because
  none of those tasks are blocked, since the \ctg contains no edge, by
  induction hypothesis. At that point, $\Aa_N$ has reached a
  configuration $(G',\Data')$ s.t. $\Data'(v_p)=0$ for all $p\in P$
  (by line 12) and where $G'\rhd
  m_{i-1}-(I_{t_i}+\mbar_i-m_i)$. Moreover, $G'$ still respects all
  the other hypothesis as no new dispatch have been performed. Then,
  the simulation of $t_i$ proceeds by letting the \texttt{trans} task
  finish the current iteration of the main \texttt{while} loop. This
  consists in executing the \texttt{for} loop of line 19, which
  dispatches one \texttt{p} block in $C$ iff $O_{t_i}(p)=1$, i.e., the
  effect of $t_i$ is to add a token to $p$. Finally, the scheduler
  empties queue $C$ and creates tasks for all the blocks that have
  just been added to $C$. It also kills all the \texttt{p} tasks that
  have reached their final state. As a consequence, the configuration
  that is reached is $(G_i, \Data_i)$, where $G_i\rhd
  m_{i-1}-(I_{t_i}+\mbar_i-m_i)+O_{t_i} =
  (m_{i-1}-I_{t_i}+O_{t_i})-\mbar_i+m_i = \mbar_i-\mbar_i+m_i = m_i$
  and $\Data_i$ is s.t. $\Data_i(v_p)=0$ for all $p\in P$. Moreover,
  since the queue has been emptied by the scheduler, $G_i$ contains
  only task nodes and no edge, as all the calls are asynchronous. The
  task executing \texttt{trans} is still active and at line 14, and
  all the \texttt{p} tasks are in their initial state.\qed
\end{proof}

\begin{lemma}
  Let $N=\tuple{P,T,m_0}$ be a \pn, and let $\Aa_N=\tuple{\CQID,
    \emptyset, \Gamma, \mathtt{main}, \Xx, \Sigma,
    (\Ts_\gamma)_{\gamma\in\Gamma}}$ be its corresponding \qdas.
  \textbf{If} there are $(G,\Data)\in\Reach(\Aa_N)$ and $m$
  s.t. $G\rhd m$ \textbf{then} $m(G)\in\Reachloss(N)$.
\end{lemma}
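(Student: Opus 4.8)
The plan is to read off, from a \qdas run reaching $(G,\Data)$, a \emph{lossy} run of $N$ (in the sense recalled just above the lemma) whose final marking is exactly $m$; since $\Cover(N)=\Reachloss(N)$, this yields $m\in\Reachloss(N)$. First I would record the shape of the reachable configurations of $\Aa_N$: there is a single \texttt{main} task and, once dispatched, a single \texttt{trans} task, together with a multiset of \texttt{p}-blocks ($p\in P$), each either waiting in the concurrent queue $C$ in its initial state $s^0_p$, or running in a state of $\{s^0_p,s^{mid}_p,s^{fin}_p\}$; as every dispatch is asynchronous no wait edge is ever created, so every task is unblocked and free to step, and a scheduler dequeue merely reclassifies a call vertex as a task vertex without changing its state. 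The heart of the argument is the marking abstraction
\[ M(c)(p)=\bigl|\{v\in V:\lambda(v)=\mathtt{p},\ \state(v)\in\{s^0_p,s^{mid}_p\}\}\bigr|, \]
which counts a \texttt{p}-block as a token of $p$ until it performs its write $v_p:=0$ (i.e. until it reaches $s^{fin}_p$). By clauses $(ii)$ and $(iii)$ of $G\rhd m$ the final configuration has no block in $s^{mid}_p$, so $M(G)=m$.

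Next I would single out the \emph{checkpoints} $c^{(0)},\dots,c^{(K)}$ of the run, namely the configurations at which \texttt{trans} sits at $\mystate{14}{trans}$ about to perform its next \texttt{select}, taking $c^{(K)}$ to be the final configuration. Between $c^{(\ell)}$ and $c^{(\ell+1)}$, \texttt{trans} performs exactly one iteration of its loop, selecting some transition $t_\ell$. I would then prove that $M(c^{(0)})\mleq m_0$ and that $M(c^{(0)}),\dots,M(c^{(K)})=m$ is a lossy run of $N$. The base case is immediate: \texttt{trans} can run only after \texttt{main} has reached line 7, by which point \texttt{main} has dispatched at most $m_0(p)$ copies of each \texttt{p}; since no $v_p$ has yet been set to $1$, all these blocks are in $s^0_p$, so $M(c^{(0)})(p)\le m_0(p)$.

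For the step I would track, over one iteration with $t=t_\ell$, the only events that alter $M$: a write $v_p:=0$ ($s^{mid}_p\to s^{fin}_p$) lowers $M(p)$ by one, and each dispatch at line 19 raises some $M(p)$ by one, whereas reads $s^0_p\to s^{mid}_p$, dequeues, and removals of $s^{fin}_p$-blocks leave $M$ unchanged. At a checkpoint all $v_p=0$, and \texttt{trans} raises $v_p$ to $1$ exactly on the $p$ with $I_t(p)=1$; to clear the guard of line 17 it must wait until each such $v_p$ is reset, and every reset consumes a \emph{distinct} block already counted in $M(c^{(\ell)})$ (it was either a leftover at $s^{mid}_p$ or moved there from $s^0_p$ during the iteration). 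Writing $R(p)$ for the number of resets of $v_p$, I obtain $R\ge I_t$ componentwise and $R\mleq M(c^{(\ell)})$, so $t$ is enabled in $M(c^{(\ell)})$, and $M(c^{(\ell+1)})=M(c^{(\ell)})-R+O_t\mleq M(c^{(\ell)})-I_t+O_t=\mbar$, where $M(c^{(\ell)})\xrightarrow{t}\mbar$ — precisely a lossy step. Concatenating these steps gives a lossy run reaching $M(c^{(K)})=m$.

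The main obstacle, and the reason $M$ must also count the $s^{mid}_p$-blocks, is the non-atomicity of the read of the guard $v_p=1$ and the subsequent write $v_p:=0$ in block \texttt{p}. A block may read $v_p=1$ in one iteration yet postpone its write to a \emph{later} iteration, where it can clear \texttt{trans}'s line-17 guard without any fresh $s^0_p$-block being triggered. Had I counted only $s^0_p$-blocks as tokens, such a deferred block would drop out of the count too early and let a later iteration fire $t$ without decrementing $p$, so the extracted marking could exceed the lossy successor and soundness would fail. Keeping a block in the token count until it actually writes $v_p:=0$ charges every reset against a token genuinely present at the preceding checkpoint, which is exactly what keeps the inequalities $R\ge I_t$ and $M(c^{(\ell+1)})\mleq\mbar$ valid throughout, and hence establishes $m\in\Reachloss(N)$.
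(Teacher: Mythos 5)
Your proof is correct, and it follows the same skeleton as the paper's: restrict attention to the ``checkpoint'' configurations where \texttt{trans} sits at $\mystate{14}{trans}$, show the first one is $\mleq m_0$, and show that each loop iteration induces a lossy firing of the selected transition $t$, using the characterization $\Cover(N)=\Reachloss(N)$. The genuine difference is your marking abstraction. The paper sets $M(G)(p)=|\{v:\state(v)=s^0_p\}|$, counting only blocks still at their initial state, and its inductive step rests on the claim that when \texttt{trans} clears the guard at line 18, ``at least $I_t(p)$ \texttt{p} blocks have left their initial state'' \emph{during that iteration}. As you observe, that claim is not justified: several \texttt{p} blocks may read $v_p=1$ in one iteration (all moving to $s^{mid}_p$) while only one performs the write, and a leftover $s^{mid}_p$ block can then reset $v_p$ in a \emph{later} iteration without any block leaving $s^0_p$; in that situation $t$ need not even be enabled in the paper's $M(G_{\pi(i-1)})$, so the lossy step cannot be fired from it and the induction hypothesis is too weak. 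Your invariant $M(c)(p)=|\{v:\state(v)\in\{s^0_p,s^{mid}_p\}\}|$ keeps a block in the token count until it actually writes $v_p:=0$, which lets you charge every reset $R(p)$ against a distinct token present at the preceding checkpoint, yielding $I_t\mleq R\mleq M(c^{(\ell)})$ and hence both enabledness and $M(c^{(\ell+1)})\mleq\mbar$. Since your $M$ dominates the paper's pointwise and $\Reachloss(N)$ is downward closed, your stronger invariant implies the paper's claimed one; and since $G\rhd m$ forbids $s^{mid}_p$ vertices in the final configuration, both abstractions agree there and give $M(G)=m$. In short: same route, but your choice of invariant is the one that actually makes the induction go through, and it repairs a real gap in the paper's own argument.
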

\begin{proof}
  For a \ctg $G$ of $\Aa_N$ with set of vertices $V$, we denote by
  $M(G)$ the marking of $N$ s.t. for all $p\in P$: $M(G)(p)=|\set{v\in
    V}{\state(v)=s^0_p}|$. Thus, in the case where $G$ encodes a
  configuration s.t. \texttt{trans} is at line 14, \texttt{main} is at
  line 8, and all the \texttt{p} blocks are in their initial state,
  then $G\rhd M(G)$.

  In order to establish the lemma, we prove a stronger statement:
  every time we reach, along a run, a configuration $(G,\Data)$
  s.t. \texttt{trans} is at line 14, then $M(G)\in\Reachloss(N)$.
  Formally, let
  $\rho=(G_0,\Data_0)a_0(G_1,\Data_1)a_1(G_2,\Data_2)\cdots(G_n,\Data_n)$
  be a run of $\Aa_N$, where, for all $0,\leq i\leq n$:
  $G_i=\tuple{V_i,E_i,\lambda_i,\queue_i,\state_i}$. Let
  $\pi:\{0,\ldots, k\}\rightarrow \{0,\ldots,n\}$ be the monotonically
  increasing function s.t. $k\leq n$ and for all $0\leq j\leq n$:
  there exists $v\in V_j$ with $\state_i(v)=\mystate{14}{trans}$ iff
  there is $0\leq \ell\leq k$ with $k=\pi(\ell)$.  That is the
  sequence $\pi(1), \pi(2),\ldots, \pi(k)$ identifies the indexes of
  all the configurations of the run where \texttt{trans} is at line
  14. Let us show, by induction on $i$ that all the $M(G_{\pi(i)})$'s
  are reachable \emph{in the lossy semantics} of $N$.

  \textbf{Base case $i=0$} Let us show that $M(G_{\pi(0)})=m_0$, i.e.,
  that the first time \texttt{trans} reaches line 14, $M(G_{\pi(0)})$
  is the initial marking of $N$. Observe that the prefix of the run
  must have the following form. Initially, only the \texttt{main} block
  is executing: it first sets all the variables $v_p$ to $0$, then
  dispatches asynchronously at most $m_0(p)$ calls to each \texttt{p}
  block (for all $p\in P$), then finally dispatches an asynchronous
  call to \texttt{trans} and reaches line 8. Along this execution, the
  scheduler might decide to pick up some \texttt{p} blocks from
  $C$. However, as long as the scheduler has not scheduled the call to
  \texttt{trans}, the \ctg met along the run do not encode any
  marking, by definition of $\rhd$. When the scheduler starts a task
  to run the \texttt{trans} block, we thus reach a configuration
  $(G,\Data)$ where: $(i)$ the queue $C$ is empty, as dequeueing the
  \texttt{trans} block is possible only if all the \texttt{p} blocks
  have been dequeued, and no other dispatch has been performed; $(ii)$
  all the \texttt{p} tasks are blocked in their initial state as
  $\Data(v_p)=0$ for all $p\in P$; and $(iii)$ \texttt{main} is still
  blocked in the infinite loop at line $8$. Since the scheduler has
  just dequeued \texttt{trans} from $C$, $G$ is necessarily the first
  \ctg to encode a marking, so $G=G_{\pi(0)}$. Moreover, by the loop
  at line 4, it is clear that $G\rhd m$ with $m\mleq m_0$.

  \textbf{Inductive case $i=\ell\geq 1$} The induction hypothesis is
  that $M(G_{\pi(i-1})\in \Reachloss(N)$. Let us consider the
  $\rho'=(G_{\pi(\ell-1)},\Data_{\pi(\ell-1)})\cdots(G_{\pi(\ell)},\Data_{\pi(\ell)})
  $, i.e. the portion of $\rho$ that allows to reach
  $(G_{\pi(\ell)},\Data_{\pi(\ell)})$ from
  $(G_{\pi(\ell-1)},\Data_{\pi(\ell-1)})$. We consider two cases:
  \begin{enumerate}
  \item Either \textbf{trans} has not performed an iteration of its
    main \texttt{while} loop along $\rho'$. In this case, the only
    actions that can occur along $\rho'$ are scheduler actions
    consisting in dequeueing \texttt{p} blocks or the termination of
    some \texttt{p} tasks that where still in state $s^{mid}_p$. In
    both cases, this does not modify the value of $M(G)$, so
    $M(G_{\pi(i)})=M(G_{\pi(i-1})\in \Reachloss(N)$.
  \item Or \texttt{trans} has performed a complete iteration of its
    main \texttt{while} loop possibly interleaved with the dequeue
    of \texttt{p} blocks and the termination of \texttt{p}
    tasks. Since the dequeues and terminations have no influence on
    the value of $M(G)$ as argued above, let us focus on the effect
    of executing one iteration of the \texttt{while} loop. The
    iteration first selects a \pn transition $t$ and sets all the
    variables $v_p$ s.t. $I_t(p)=1$ to $1$. The reached configuration
    is then $(G,\Data)$ where $M(G)=M(G_{\pi(i-1)})$, as these
    operations do not manipulate \texttt{p} blocks or tasks. Then,
    \texttt{trans} is blocked by the test at line 18. As only
    \texttt{p} blocks can set $v_p$ variables to $0$, we are sure
    that, when \texttt{trans} reaches line 19, \emph{at least}
    $I_t(p)$ \texttt{p} blocks have left their initial state, for all
    $p\in P$. Thus, when \texttt{trans} is at line 19, the
    configuration is $(G',\Data')$, where for all $p\in P$:
    $M(G')(p)\leq M(G)(p)-I_t(p)=M(G_{\pi(i-1)})-I_t(p)$. Afterwards,
    \texttt{trans} terminates the iteration of the \texttt{while} loop
    by dispatching $O_t(p)$ \texttt{p} blocks for all $p\in P$, and
    reaches line 14, which finishes $\rho'$. Hence, we reach
    $(G_{\pi(i)},\Data_{\pi(i)})$, where for all $p\in P$:
    $M(G_{\pi(i)})(p)\leq M(G_{\pi(i-1)})-I_t(p)+O_t(p)$. Since
    $M(G_{\pi(i-1)})\in\Reachloss(N)$ by induction hypothesis, we
    conclude that $M(G_{\pi(i)})\in\Reachloss(N)$ too.\qed
  \end{enumerate}

\end{proof}

\subsection{Asynchronous Serial \qdas}
 We
establish the undecidability for asynchronous serial \qdas  by a reduction from the control-state  reachability
problem in a \fifo system. Let $F=\tuple{S_F,s_F^0,M,\Delta_F}$ be a
\fifo system and let $c\in S_F$ be a control state whose reachability
has to be tested. We build the asynchronous serial \qdas
$\Aa_F=\tuple{ \emptyset, \{q\}, \Gamma, \mathtt{main}, \Xx, \Sigma,
  (\Ts_\gamma)_{\gamma\in\Gamma}}$ on domain $\DD=M\cup
S_F\cup\{\e\}$, where $\Gamma=M\cup\{\e,\mathtt{main}\}$,
$\Xx=\{\mathtt{state},\mathtt{head}\}$ and the $\Ts_{\gamma}$ are
given by the pseudo code in Fig.~\ref{fig:fifo-to-qdas}.
\begin{figure}[t]
  \centering
    \begin{minipage}[t]{.45\linewidth}
      \begin{lstlisting}
global state, head
global s_queue q

def main():
  state := $s^0_F$
  head := $\e$
  dispatch_a(q, $\e$)
  while(true): do nothing
      \end{lstlisting}

      \begin{minipage}{0.8\textwidth}
        \begin{tikzpicture}[overlay,remember picture]
          \draw node[anchor=north west,draw,
            rectangle callout,callout relative
            pointer={(1.2,-0.08)},font=\scriptsize, fill=gray!5,text
            width=4.7cm]
          {
      \scriptsize
      Note that the reachability of
      a state $c$ of the \fifo system is explicitely coded
      into the control
      structure.};

        \end{tikzpicture}
    \end{minipage}

    \end{minipage}
    \begin{minipage}[t]{.5\linewidth}
      \begin{lstlisting}[firstnumber=last]
def m(): //for all $m\in M\cup\{\e\}$
  if (head$\neq m$): goto 20
  while(true):
    if (state = $c$): goto 21
    select $(s,a,s')\in\Delta_y$
    if ($s\neq$state): goto 20
    state := $s'$
    if ($a=!n$): dispatch_a(q, $n$)
    else if ($a=?n$):
      head := $n$
      terminate
  while(true): do nothing // wrong guess
  while(true): do nothing // $c$ is reached
      \end{lstlisting}
    \end{minipage}

    \begin{tikzpicture}
  [zstd/.style={state,font=\tiny},zstd2/.style={zstd,rectangle},anchor=west]

  \draw (-0.5,0.6) node[anchor=west] (dummy){\small \ctg type (a):};
  \draw (0,0) node[zstd2] (00) {$m_1$};
  \draw (00)+(0.7,0) node[zstd2] (01) {$m_2$};
  \draw (01)+(0.7,0) node[font=\tiny,anchor=west] (02) {\dots};
  \draw (02)+(0.7,0) node[zstd2] (03) {$m_n$};
  \draw (03)+(0.7,0) node[zstd] (04) {\main};
  \draw (00) edge[->] (01)
    (01) edge[->] (02)
    (02) edge[->] (03);

    \begin{pgfonlayer}{background}
      \draw node[fit=(00) (03),fill=gray!20] (c1) {};
      \draw (c1.north east)
        node[anchor=south east,inner sep=0pt,font=\tiny]
        {queue $q$};
    \end{pgfonlayer}
  \end{tikzpicture}\hspace{1cm}
\begin{tikzpicture}
  [zstd/.style={state,font=\tiny},zstd2/.style={zstd,rectangle},anchor=west]

  \draw (-0.5,0.6) node[anchor=west] (dummy){\small \ctg type (b):};
  \draw (0,0) node[zstd2] (00) {$m_1$};
  \draw (00)+(0.7,0) node[zstd2] (01) {$m_2$};
  \draw (01)+(0.7,0) node[font=\tiny,anchor=west] (02) {\dots};
  \draw (02)+(0.7,0) node[zstd2] (03) {$m_n$};
  \draw (03)+(0.7,0) node[zstd] (04) {$m$};
  \draw (04)+(0.4,0) node[zstd] (05) {\main};
  \draw (00) edge[->] (01)
    (01) edge[->] (02)
    (02) edge[->] (03)
    (03) edge[->,dashed] (04);

    \begin{pgfonlayer}{background}
      \draw node[fit=(00) (03),fill=gray!20] (c1) {};
      \draw (c1.north east)
        node[anchor=south east,inner sep=0pt,font=\tiny]
        {queue $q$};
    \end{pgfonlayer}
\end{tikzpicture}
    \caption{Fifo system encoding into a serial asynchronous
      \qdas / two types of \ctg in this case}
  \label{fig:fifo-to-qdas}
  \label{fig:shapes-aaf}
\vspace{-2ex}
\end{figure}

Intuitively, runs of $\Aa_F$ simulate the runs of $F$, by encoding the
current state of $F$ in variable \texttt{state} and the content of
$F$'s queue into the content of the serial queue \texttt{q}. More
precisely, it easy to check that, once \texttt{main} has reached
line~8, all the \ctg that are reached in $\Aa_F$ are of either shapes
depicted in Fig.~\ref{fig:shapes-aaf}, for $\{m_1,\ldots,
m_n,m\}\subseteq M\cup\{\e\}$. That is, there are at most two running
tasks: \texttt{main} and possibly one task running a $m$ block (for
$\texttt{m}\in M\cup\{\e\}$), that has to terminate to allow a further
dequeue from \texttt{q}. This is because \texttt{q} is a serial queue
and all the dispatches are asynchronous. When the \ctg is of shape
\textsf{(b)}, the duty of the running $m$ block is to simulate a run
of $F$. It runs an infinite \texttt{while} loop (line 11 onwards --
ignore the test at line 10 for the moment), that $(i)$ tests whether
$c$ has been reached (line 12) and jumps to line 20 if it is the case;
$(ii)$ guesses a transition $(s,a,s')$ of $F$; and $(iii)$ checks that
the guessed transition is indeed fireable from the current
configuration of $F$, and, if yes, simulate it. This consists in,
first testing that $s$ is the current state (line 14). If not, the
block jumps to the infinite loop of line 19, which ends the
simulation. Otherwise, the current state is update to $s'$, and the
channel operation is then simulated. A send of message $m$ is
simulated (line 16) by an asynchronous dispatch of block $m$ to
\texttt{q}. The simulation of a receive of $m$ from \texttt{q} is more
involved, as only the scheduler can decide to dequeue a block from
\texttt{q}, and this can happen only if the current running block
terminates (line 19). Still, we have to check that message $m$ is
indeed in the head of \texttt{q}. This is achieved by setting global
variable \texttt{head} to $m$, and letting the next dequeues block
check that itself encodes the value stored into \texttt{head}. This is
performed at line 10. If this test is not satisfied, the block jumps
to the infinite loop of line 20, and the simulation ends. Otherwise,
it proceeds with the simulation. Thus, in all reachable configurations
of $\Aa_F$, a block $m$ (with $m\in M\cup\{\e\}$ will reach line 21
iff $c$ is reachable in $F$. This effectively reduces the control
location reachability of \fifo systems to the Parikh coverability
problem of serial asynchronous \qdas.

The proof of Theorem~\ref{the:async-seri-undec} relies on the
next Lemma, that formalizes the relationship between reachable
configurations of $\Aa_F$ and reachable configurations of~$F$.

For all $\gamma\in\Gamma$, we denote by $\mystate{\ell}{\gamma}$ the
location of $\Ts_\gamma$ that corresponds to line $\ell$ in
Fig.~\ref{fig:fifo-to-qdas}. Then, we say that a configuration
$(\Graph,\Data)$ of $\Aa_F$ encodes a configuration $(s,w)$ of $F$,
written $(\Graph,\Data)\rhd(s,w)$ iff: $(i)$
$s=\Data(\mathtt{state})$, $(ii)$ $\Graph$ is of either shapes in
Fig.~\ref{fig:shapes-aaf} with $w=m_0m_1\cdots m_n$, $(iii)$
$\Parikh(G)(\mystate{8}{{\tt main}})=1$ and $(iv)$ there exists $m\in
M\cup\{\e\}$ s.t. $\Parikh(G)(\mystate{12}{m})=1$. That is, $s$ and
$w$ are encoded as described above, \texttt{main} is at line 8, and
the running $\mathtt{m}$ block is at line 12. Then:

\begin{lemma}\label{lem:from-fifo-to-qdas}
  Let $F$ be a FIFO system, let $c$ be a configuration of $F$, and let
  $\Aa_F$ be its associated \qdas. For all run
  $(s_0,w_0)(s_1,w_1)\cdots(s_n,w_n)$ of $F$ s.t. for all $0\leq i<n$:
  $s_i\neq c$, there exists $(G,\Data)\in\Reach(\Aa_F)$ s.t.
  $(G,\Data)\rhd(s_n,w_n)$. Moreover, for all
  $(G,\Data)\in\Reach(\Aa_F)$ and for all configuration $(s,w)$ of
  $F$: $(G,\Data)\rhd(s,w)$ implies $(s,w)\in\Reach(F)$
\end{lemma}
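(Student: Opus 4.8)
The plan is to prove the two implications by induction, exploiting the correspondence $\rhd$ between configurations of $F$ and the ``driver-at-line-12'' configurations of $\Aa_F$. For the completeness direction I would induct on the length $n$ of the run $(s_0,w_0)\cdots(s_n,w_n)$. In the base case $n=0$, \main executes its body, setting $\mathtt{state}:=s_F^0$ and $\mathtt{head}:=\e$, asynchronously dispatching the bootstrap block $\e$ to $q$, and reaching line~8; the scheduler then dequeues $\e$ from the serial queue, the head-test at line~10 succeeds (as $\mathtt{head}=\e$), and the new driver reaches line~12, so the reached configuration encodes $(s_F^0,\e)$. In the inductive step I use the hypothesis $s_{i-1}\neq c$ (available since $i-1<n$): the driver passes line~12, guesses the fired rule $(s_{i-1},a,s_i)$, passes the state-test at line~14, and sets $\mathtt{state}:=s_i$. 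If $a$ is a send ${!}m$ it dispatches $m$ to $q$ (line~16) and loops back to line~12, extending the queue content to $w_i=w_{i-1}\cdot m$; if $a$ is a receive ${?}m$, then \fifo-validity gives $w_{i-1}=m\cdot w_i$, so the driver sets $\mathtt{head}:=m$ and terminates (lines~18--19), is removed, and the scheduler dequeues the head of $q$ (labelled $m$), whose head-test at line~10 succeeds, leaving a new driver at line~12 with queue content $w_i$. In both cases the reached configuration encodes $(s_i,w_i)$.

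For the soundness direction I would first establish, by induction on the length of the $\Aa_F$-run, the structural invariant that every reachable $(\Graph,\Data)$ has $\Graph$ of one of the two shapes of Fig.~\ref{fig:shapes-aaf}: apart from \main, at most one driver task runs; the serial-queue wait edges block any further dequeue until that driver terminates; and the call vertices of $q$ spell a word over $M\cup\{\e\}$. Given the invariant, I track the subsequence of configurations along the run at which a driver sits at line~12 and claim that each encodes a configuration of $\Reach(F)$. The first such configuration is the bootstrap one, encoding $(s_F^0,\e)\in\Reach(F)$. Between two consecutive ones the $\Aa_F$-run performs exactly one \emph{productive} step: either a send (the dispatch of line~16, after which the driver returns to line~12), matching an $F$-send, or a receive (the $\mathtt{head}$-assignment and termination of lines~18--19, followed by removal of the driver and the forced dequeue of the head of $q$), matching an $F$-receive. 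Reading these steps off yields a run of $F$ reaching $(s,w)$, which is the claim.

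The main obstacle is the soundness direction, and specifically ruling out \emph{spurious} $\rhd$-configurations. Two filtering mechanisms must be made precise: a wrong transition guess (state mismatch at line~14) and a mismatched dequeue (head mismatch at line~10) each divert the offending block to the infinite do-nothing loop at line~20, so it never returns to line~12 and hence never contributes a $\rhd$-configuration. The delicate case is the receive: I must invoke the serial-queue wait edge placed from the head of the remaining queue to the running driver to argue that the scheduler cannot dequeue the next block of $q$ before the current driver terminates, so that the label of the block dequeued upon a receive is exactly the head of the encoded \fifo-queue; the head-test at line~10 then forces the simulated receive to respect $F$'s receive-enabledness. Showing that this invariant survives the arbitrary interleaving of \main, the driver's sends, and the single scheduler dequeue is the technical crux; once it is in place, both directions follow by the inductions above and, together, reduce control-state reachability of \fifo systems to Parikh coverability of $\Aa_F$ (Theorem~\ref{the:async-seri-undec}).
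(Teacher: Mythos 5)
Your proposal follows essentially the same route as the paper's proof: induction on the length of the $F$-run for completeness (bootstrap $\e$ block, then simulate each guessed rule), and for soundness an induction over the subsequence of configurations where a driver block sits at line~12, using the line-10 and line-14 tests to divert spurious guesses into the dead loops and the serial-queue discipline to guarantee that the dequeued block matches the expected message head. The only (minor) omissions are the case of an internal ($\e$-labelled) \fifo transition in the inductive step, which the paper handles as a third case alongside send and receive, and your mirrored convention for writing the queue word, neither of which affects the argument.
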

\begin{proof}
  First, we consider a run $(s_0,w_0)(s_1,w_1)\cdots(s_n,w_n)$ of $F$
  s.t. for all $0\leq i<n$: $s_i\neq c$, and build a run
  $(\Graph_0,\Data_0)a_0(\Graph_1,\Data_1)a_1\cdots
  a_{k-1}(\Graph_k,\Data_k)$ of $\Aa_F$
  s.t. $(\Graph_k,\Data_k)\rhd(s_n,w_n)$, by induction on the length
  of $F$'s run.

  \textbf{Base case $n=0$:} Consider the run of $\Aa_F$ that consists
  in executing lines 5, 6, 7 of \texttt{main} (which sets the
  \texttt{head} variable to $\e$), then dequeueing the $\e$ block from
  the queue, then executing lines 10 and 11 of $\e$. Remark that the
  test at line 10 is not satisfied, as $\mathtt{head}=\e$, and that
  the queue is now empty. Clearly, the resulting configuration
  $(\Graph,\Data)\rhd(s^0_F,w_0)$ as $w_0=\e$.

  \textbf{Inductive case $n=\ell$}. Let us assume that there is a
  reachable configuration $(\Graph, \Data)$ of $\Aa_F$
  s.t. $(\Graph,\Data)\rhd (s_{\ell-1}, w_{\ell-1})$, and let us build
  a sequence of $\Aa_F$ transitions that is fireable from
  $(\Graph,\Data)$ and reaches a configuration encoding
  $(s_\ell,w_\ell)$. In $(\Graph,\Data)$, there is, by definition of
  $\rhd$, a task running a $b$ block, for $b\in M\cup\{\e\}$, that is
  at line 12. Moreover, $\Data(\mathtt{state})=s_{\ell-1}$. Let
  $\delta$ be the transition of $F$
  s.t. $(s_{\ell-1},w_{\ell-1})\xrightarrow{\delta}(s_\ell,w_\ell)$. By
  hypothesis, $s_{\ell-1}\neq c$, hence, we let $b$ execute line 12;
  select $\delta=(s_{\ell-1},a,s_\ell)$ at line 13; execute line 14,
  where the condition of the \texttt{if} is not satisfied as
  $s=s_{\ell-1}=\mathtt{state}$; and execute line 16, which reaches a
  configuration $(\Graph',\Data')$ where
  $\Data'(\mathtt{state})=s_\ell$. We consider three cases to complete
  the simulation of $\delta$ in $\Aa_F$. If $a=!n$, the $b$ task
  performs an asynchronous dispatch of $n$ to \texttt{q}, and jumps to
  line 11, then 12. Clearly, the resulting configuration
  $(\Graph'',\Data'')$ is s.t. $(\Graph'',\Data'')\rhd(s_\ell,w_\ell)$
  (in particular, the dispatch has correctly updated the content of
  the queue). If $a=\e$, the $b$ tasks jumps directly to line 11, then
  to line 12. Again, the resulting configuration $(\Graph'',\Data'')$
  is s.t. $(\Graph'',\Data'')\rhd(s_\ell,w_\ell)$, as the content of
  the queue has not been modified. Finally, if $a=!n$, the running $b$
  block sets \texttt{head} to $n$ and terminates. Let
  $(\Graph'',\Data'')$ be the $\Aa_F$ configuration reached at that
  point. As $\delta$ is fireable from $(s_{\ell-1},w_{\ell-1})$ in $F$,
  since $(\Graph, \Data)\rhd(s_{\ell-1},w_{\ell-1})$, and as the
  content of the queue has not been modified since then, the head of
  \texttt{q} is necessarily an $n$ block in $\Graph''$. Moreover,
  $\Data''(\mathtt{head})=n$ and
  $\Data''(\mathtt{state})=s_\ell$. Thus, we let the scheduler dequeue
  this $n$ block, and we let the task running it execute line 10
  (where the condition of the \texttt{if} is not satisfied), then line
  11. Clearly, the resulting configuration encodes $(s_\ell, w_\ell)$.
  \medskip

  Now, let $\rho=(\Graph_0,\Data_0)a_0(\Graph_1,\Data_1)a_1\cdots a_{n-1}
  (\Graph_n,\Data_n)$ be a run of $\Aa_F$ s.t. there is $(s,w)$ with
  $(\Graph_n,\Data_n)\rhd(s,w)$, and let us build, by induction on the
  length of this run, a run $(s^0_F,w_0)(s_1,w_1)\cdots(s_k,w_k)$ a
  run of $F$ s.t. $(s_k,w_k)=(s,w)$.

  Let
  $K=|\set{(\Graph_i,\Data_i)}{\Parikh(\Graph_i)(\mystate{12}{m})=1\textrm{
      for }m\in M\cup\{\e\}}|$, i.e., $K$ is the number of times an
  $m$ block reaches line $12$ along $\rho$. Let us consider the
  increasing monotonic function
  $\rho:\{1,\ldots,K\}\rightarrow\{0,\ldots,n\}$ s.t. for all $0\leq
  i\leq n$: there exists $m\in M\cup\{\e\}$
  s.t. $\Parikh(\Graph_i)(\mystate{12}{m})=1$ iff there is $1\leq
  j\leq K$ s.t. $\Graph_i=\rho(j)$, that is, $\rho(i)$ is the index,
  in $\rho$ of the $i$th time a configuration is reached where an
  $\mathtt{m}$ block is at line 12. Clearly, by definition of $\rhd$
  only the $(\Graph_{\rho(j)},\Data_{\rho(j)})$ configurations (for
  $1\leq j\leq K$) can encode a configuration of $F$, as no $m$ block
  is at line 12 in the other configurations of $\rho$. So, it is
  sufficient to show that all those
  $(\Graph_{\rho(j)},\Data_{\rho(j)})$ configurations encode a
  reachable configuration of $F$. We proceed by induction on $j$, and
  show that: for all $1\leq j\leq K$:
  $(\Graph_{\rho(j)},\Data_{\rho(j)})$ encodes a reachable
  configuration of $F$ and $\Graph_{\rho(j)}$ contains exactly one $m$
  task (for $m\in M\cup\{e\}$), that has been dequeued from
  \texttt{q}.

  \textbf{Base case $j=0$:} Observe that the subrun
  $(\Graph_0,\Data_0)a_0\cdots
  a_{\rho(1)-1}(\Graph_{\rho(1)},\Data_{\rho(1)})$ is necessarily an
  initialization phase where \texttt{main} sets \texttt{state} to
  $s_F^0$, \texttt{head} to $\e$, dispatches an $\e$ block, and
  reaches line 8, where it will stay forever. Then, the scheduler
  dequeues the $\e$ block, which empties the queue. The $\e$ task then
  traverses line 10 (as \texttt{head}$=\e$) and 11 and reaches line
  12. So, clearly $(\Graph_{\rho(0)},\Data_{\rho(0)})\rhd (s^0_F,\e)$
  and contains exactly one $m$ task (for $m\in M\cup\{e\}$), that has
  been dequeued from \texttt{q}.

  \textbf{Inductive case $j=\ell$:} Let us assume that
  $(\Graph_{\rho(\ell-1)},\Data_{\rho(\ell-1)})$ encodes a reachable
  configuration $(s_{\ell-1}, w_{\ell-1})$ of $F$. We consider several
  cases. If $(\Graph_{\rho(\ell-1)},\Data_{\rho(\ell-1)}) =
  (\Graph_{\rho(\ell)},\Data_{\rho(\ell)})$ we are done. Otherwise, we
  have necessarily performed one iteration (possibly interrupted at
  line 12, 14 or 19) of the while loop at line 11 between
  $(\Graph_{\rho(\ell-1)},\Data_{\rho(\ell-1)})$ and
  $(\Graph_{\rho(\ell)},\Data_{\rho(\ell)})$, as, by induction
  hypothesis, $\Graph_{\rho(\ell-1)}$ contains exactly one $m$ task
  (with $m\in M\cup\{\e\}$) that blocks \texttt{q}, and \texttt{main}
  can only loop at line 8, which does not modify the current
  configuration. Then, observe that the conditions of the \texttt{if}
  at lines 12 and 14 were necessarily false during the
  iteration. Otherwise, $m$ would have reached line 21, from which it
  cannot escape. From that point, no configuration is reachable where
  an $\mathtt{m}$ block is at line 12 , and $(\Graph_{\rho(ell)},
  \Data_{\rho(\ell)})$ cannot exist.  Thus, we consider three cases:
  \begin{itemize}
  \item If we have entered the \texttt{if} at line 16 during the
    iteration, then a transition of the form $(s,!n,s')$ has been
    guessed, with $\texttt{state}=s$ and a dispatch of $n$ has been
    performed into $q$. As
    $(\Graph_{\rho(\ell-1)},\Data_{\rho(\ell-1)})\rhd (s_{\ell-1},
    w_{\ell-1})$ by induction hypothesis, $s_{\ell-1}=s$, and thus
    $(s,!n,s')$ is fireable from $(s_{\ell-1}, w_{\ell-1})$ and reaches
    $(s',n\cdot w_{\ell-1})$. Clearly, this configuration is encoded
    by $(\Graph_{\rho(\ell)},\Data_{\rho(\ell)})$.
  \item If we have entered the \texttt{else if} at line 17 during the
    iteration, then a transition of the form $(s,?n,s')$ has been
    guessed, with $\texttt{state}=s$, \texttt{head} has been set to
    $n$, the current $m$ block has been terminated, a new block $m'$
    has been dequeued by the scheduler (as there is necessarily a
    running \texttt{m} block in $\Graph_{\rho(\ell)}$). Moreover
    $m'=n$, because $m'$ has to be at line 12 in
    $\Graph_{\rho(\ell)}$, so the test of line 10 had to be false to
    allow $m'$ to reach line 12. As
    $(\Graph_{\rho(\ell-1)},\Data_{\rho(\ell-1)})\rhd (s_{\ell-1},
    w_{\ell-1})$ by induction hypothesis, $s_{\ell-1}=s$. As a dequeue
    of a block $m'=n$ has been performed, $w_{\ell-1}$ is of the form
    $w\cdot n$. Thus, $(s,?m,s')$ is fireable from $(s_{\ell-1},
    w_{\ell-1})$ and reaches $(s',w)$. Clearly, this configuration is
    encoded by $(\Graph_{\rho(\ell)},\Data_{\rho(\ell)})$.
  \item Finally, if neither the \texttt{if} nor the \texttt{else if}
    have been entered during the iteration, then a transition of the
    form $(s,\e,s')$ has been guessed, with $\texttt{state}=s$. As
    $(\Graph_{\rho(\ell-1)},\Data_{\rho(\ell-1)})\rhd (s_{\ell-1},
    w_{\ell-1})$ by induction hypothesis, $s_{\ell-1}=s$, and thus
    $(s,\e,s')$ is fireable from $(s_{\ell-1}, w_{\ell-1})$ and reaches
    $(s',w_{\ell-1})$. Clearly, this configuration is encoded by
    $(\Graph_{\rho(\ell)},\Data_{\rho(\ell)})$.\qed
  \end{itemize}
\end{proof}

We can now prove Theorem~\ref{the:async-seri-undec}:
\begin{proof}
  Let $F$ be a FIFO system, with set of messages $M$ and associated
  serial asynchronous \qdas $\Aa_F$ and let $c$ be a control location
  of $F$. For all $m\in M\cup\{\e\}$, let $f_m$ be the Parikh image
  s.t. $f_m(\mystate{21}{{\tt main}})=1$ and $f_m(s)=0$ for all $s\neq
  \mystate{21}{{\tt main}}$. Remark that there are only finitely many
  such $f_m$. Then, we show that $c$ is reachable in $F$ iff there
  exists $m\in M\cup\{\e\}$ s.t.  $f_m$ is Parikh-coverable in
  $\Aa_F$.

  Assume $c$ is reachable in $F$, and let $(c,w)$ be a configuration
  in $\Reach(F)$. Without loss of generality, assume $c$ is reachable
  by run that visits $c$ only once. By
  Lemma~\ref{lem:from-fifo-to-qdas}, there is
  $(\Graph,\Data)\in\Reach(\Aa_F)$
  s.t. $(\Graph,\Data)\rhd(c,w)$. Hence, in $(\Graph,\Data)$, there is
  a task running an $m$ block (for $m\in M\cup\{\e\}$) that is at line
  12, and $\Data(\mathtt{state})=c$. Thus, $m$ can execute one step
  and reach line 21, so $f_m$ is Parikh coverable in $\Aa_F$.

  For the reverse direction, assume there is $m\in M\cup\{\e\}$ that
  is Parikh-coverable in $\Aa_F$. Hence, there is
  $(\Graph,\Data)\in\Reach(\Aa_F)$ where a task running block $m$ is
  at line 21. The only way for that block to reach line 21 is from
  line 12, with a valuation $\Data'$
  s.t. $\Data'(\mathtt{state})=c$. Thus, there is, in $\Reach(\Aa_F)$
  a configuration $(\Graph',\Data')$ with $\Data'(\mathtt{state})=c$,
  a task running an $m$ block at line 12, and necessarily
  \texttt{main} at line 8 (otherwise, only \texttt{main} would be
  running). Hence, $(\Graph',\Data')$ is a reachable configuration of
  $\Aa_F$ s.t. $(\Graph',\Data')\rhd (c,w)$ for some queue content
  $w$. Thus, by Lemma~\ref{lem:from-fifo-to-qdas},
  $(c,w)\in\Reach(F)$, and $c$ is reachable in $F$.

  We have thus reduced the control location reachability problem of
  FIFO systems to the Parikh coverability problem of serial
  asynchronous \qdas (using only one serial queue). The former is
  undecidable. Hence the theorem.\qed
\end{proof}

\subsection{Concurrent \qdas}

We reduce the reachability problem of two counter
systems. Let us give the intuition of the construction. For each
$\Pp$, we construct a \qdas $\Aa_\Pp$ s.t. all reachable \ctg in
$\Aa_\Pp$ encode configurations of $\Pp$ and are of the form depicted
in Fig.\,\ref{fig:sim_counter}. That is, (after an initialization
phase), there are always three tasks that are unblocked: a \main task
to simulate $\Pp$'s control structure, and, for each $i=\{1,2\}$,
either a task $eins(i)$ or a task $null(i)$. If the task $null(i)$ is
unblocked, then counter $i$ is zero in the current configuration of
$\Pp$. Otherwise, the current valuation of counter $i$ is encoded by
the number of $eins(i)$ tasks in the \ctg. Remark that, as in the case
of synchronous \qdas, the parts of the \ctg that encode each counter
behave as pushdown stacks. Finally, the control location of $\Pp$ is recorded in
global variable \texttt{state}.

\begin{figure}[t!]
  \centering
  \begin{minipage}[t]{.5\linewidth}
      \begin{lstlisting}
global state
global $\ell_1^1$, $\ell_2^1$, $x^1$ // rdvz channel 1
global $\ell_1^2$, $\ell_2^2$, $x^2$ // rdvz channel 2
global c_queue q

def main():
  foreach i in {1,2}:
    dispatch_a(q, null(i))
    i?ack
  state := $x^0$

  while(true):
    select $(s,a,s')\in\Delta_\Pp$ where state=$s$

    if $a=\incr(1)$ :
      1!$\incr$
      1?$\ack$
      state:=$s'$

    \\ other actions analogous
    ...
      \end{lstlisting}
    \end{minipage}
    \begin{minipage}[t]{.5\linewidth}
\begin{tikzpicture}[baseline=(base),
  zstd/.style={state,font=\tiny}]
  \draw (0,0.3) coordinate (base);
  \draw (1.5,0) node[zstd] (1) {{1}};
  \draw (1)+(-0.5,0) node (s) {}
      (s) edge[->] (1);
  \draw (3,0) node[zstd] (2) {2};
  \draw (4,0) node[zstd] (f) {};
  \draw[fill=black] (f) circle (3pt);
  \draw (0,-1) node[zstd] (3) {3};
  \draw (1.5,-1) node[zstd] (4) {4};
  \draw (3,-1) node[zstd] (5) {5};
  \draw (3) edge[->,bend left=11] node[above,font=\tiny]{$i!\ack$} (4);
  \draw (4) edge[->,bend left=11] node[below,font=\tiny]{$i?\zerotest$} (3);
  \draw (1) edge[->] node[left,font=\tiny,pos=0.3]{$i!\ack$} (4);
  \draw (4) edge[->] node[below,font=\tiny]{$i?\incr$}(5);
  \draw (5) edge[->] node[right,font=\tiny]{$\disps(q,eins(i))$} (2);
  \draw (2) edge[->] node[left,pos=0.3,font=\tiny]{$i!\ack$\ }(4);

  \draw (0,0) node {$null(i)$:};
\end{tikzpicture}

\vspace{2ex}
\begin{tikzpicture}
  [zstd/.style={state,font=\tiny}]
  \draw (1.5,0) node[zstd] (1) {1};
  \draw (1)+(-0.5,0) node (s) {}
      (s) edge[->] (1);
  \draw (3,0) node[zstd] (2) {2};
  \draw (0,-1) node[zstd] (3) {};
  \draw[fill=black] (3) circle (3pt);
  \draw (1.5,-1) node[zstd] (4) {4};
  \draw (3,-1) node[zstd] (5) {5};
  \draw (4) edge[->] node[below,font=\tiny]{$i?\decr$} (3);
  \draw (1) edge[->] node[left,font=\tiny,pos=0.3]{$i!\ack$} (4);
  \draw (4) edge[->] node[below,font=\tiny]{$i?\incr$}(5);
  \draw (5) edge[->] node[right,font=\tiny]{$\disps(q,eins(i))$} (2);
  \draw (2) edge[->] node[left,pos=0.3,font=\tiny]{$i!\ack$\ }(4);

  \draw (0,0) node {$eins(i)$:};
\end{tikzpicture}

\vspace{2ex}
\begin{tikzpicture}
  [zstd/.style={state,font=\tiny},anchor=west]

  \draw (-0.5,1.2) node[anchor=west] (dummy){\ctg in $\Reach(\Aa_\Pp)$:};
  \draw (0,0.7) node[zstd] (0) {\main};
  \draw (0,0) node[zstd] (00) {null(1)};
  \draw (00)+(1,0) node[zstd] (01) {eins(1)};
  \draw (01)+(1,0) node[font=\tiny,anchor=west] (02) {\dots};
  \draw (02)+(0.5,0) node[zstd] (03) {eins(1)};
  \draw (0,-0.8) node[zstd] (11) {null(2)};
  \draw (11)+(1,0) node[zstd] (12) {eins(2)};
  \draw (12)+(1,0) node[font=\tiny,anchor=west] (13) {\dots};
  \draw (13)+(1.2,0) node[zstd] (14) {eins(2)};
  \draw[dashed] (00) edge[->] (01)
    (01) edge[->] (02)
    (02) edge[->] (03)
    (11) edge[->] (12)
    (12) edge[->] (13)
    (13) edge[->] (14);

    \begin{pgfonlayer}{background}
      \draw node[fit=(00) (03),fill=gray!20] (c1) {};
      \draw node[fit=(11) (14),fill=gray!20] (c2){};
      \draw (c1.north east)
        node[anchor=south east,inner sep=0pt,font=\tiny]
        {counter 1};
      \draw (c2.north east)
        node[anchor=south east,inner sep=0pt,font=\tiny]
        {counter 2};
    \end{pgfonlayer}
\end{tikzpicture}
    \end{minipage}

    \caption{From a two counter system  to a \qdas: \main and
    $null(i),eins(i)$ for $i=1,2$ \label{fig:sim_counter}}
    \vspace{-4ex}
\end{figure}

The actual operations on the counters will be simulated by the
$eins(i)$ and $null(i)$ running tasks. As \main simulates the control
structure, we need to synchronize
\main with those $eins(i)$ and $null(i)$ tasks.  Let us explain
intuitively how we can achieve \emph{rendezvous} synchronization
between running tasks using global variables of \qdas. Consider a
\qdas with three global variables $\ell_1$, $\ell_2$ ranging over
Boolean and $X$ over a finite set of `messages' $M$.
Let $\gamma_1$ and $\gamma_2$ be two blocks whose \lts
are:\\
$\gamma_1$: \begin{tikzpicture}[baseline=-0.5ex] \foreach \x in
  {0,1,2,3,4,5} \draw (1.5*\x,0) node[circle,draw,inner sep=4pt] (\x)
  {}; \draw (0) node[font=\tiny]{$s_0$}; \draw (5)
  node[font=\tiny]{$s_5$}; \draw (0) edge[->]
  node[above,font=\tiny]{$\ell_1 = 1$} (1); \draw (1) edge[->]
  node[above,font=\tiny]{$x\gets m$} (2); \draw (2) edge[->]
  node[above,font=\tiny]{$\ell_2\gets 1$} (3); \draw (3) edge[->]
  node[above,font=\tiny]{$\ell_1=0$} (4); \draw (4) edge[->]
  node[above,font=\tiny]{$\ell_2\gets 0$} (5); \draw (0)+(-0.7,0) node
  (z) {} (z) edge[->](0);
\end{tikzpicture}\hfill (for $m\in M$)\\
$\gamma_2$: \begin{tikzpicture}[baseline=-0.5ex]
  \foreach \x in {0,1,2,3,4,5}
  \draw (1.5*\x,0) node[circle,draw,inner sep=4pt] (\x) {};
  \draw (0) node[font=\tiny]{$s_0'$};
  \draw (5) node[font=\tiny]{$s_5'$};
  \draw (0) edge[->] node[above,font=\tiny]{$\ell_1\gets 1$} (1);
  \draw (1) edge[->] node[above,font=\tiny]{$\ell_2=1$} (2);
  \draw (2) edge[->] node[above,font=\tiny]{$x=m$} (3);
  \draw (3) edge[->] node[above,font=\tiny]{$\ell_1\gets 0$} (4);
  \draw (4) edge[->] node[above,font=\tiny]{$\ell_2= 0$} (5);
  \draw (0)+(-0.7,0) node (z) {} (z) edge[->](0);
\end{tikzpicture}\\
Assume a configuration $c$ of the \qdas where $\ell_1=\ell_2=0$ and
where two distinct tasks are running $\gamma_1$ and $\gamma_2$, are
unblocked, and are in $s_0$ and $s_0'$ respectively.  Assume that no
other task can access $\ell_1$, $\ell_2$ and $m$. It is easy to check
that, from $c$, there is only one possible interleaving of the transitions
of$\gamma_1$ and $\gamma_2$. So if $\gamma_2$ reaches $s_5'$ from $c$,
then $\gamma_1$ must have reached $s_5$, and the $x=m$ test in
$\gamma_1$ has been fired \emph{after} the $x\gets m$ assignment in
$\gamma_2$. This achieves rendezvous synchronisation between
$\gamma_1$ and $\gamma_2$, with the passing of message $m$. This can
easily be extended to rendezvous via different ``channels'', by adding
extra global variables. So, we extend the syntax of
\qdas by allowing transitions of the form $(s_0,c!m,s_5)$ and
$(s_0',c?m,s_5')$ (for $m\in M$) to denote respectively a send and a
receive of message $m$ on a rendezvous channel $c$.

We rely on this mechanism to let \main send operations to be performed
on the counters to the $null(i)$ and $eins(i)$ running tasks. More
precisely, for a \twocs $\Pp=\tuple{X,x^0,\Sigma_\Pp,\Delta_\Pp}$, we
build the \qdas $\Aa_\Pp=\QDAS$ where $\CQID=\{q\}$,
$\Gamma=(\{null,eins\}\times\{1,2\})\cup\{\main\}$,
$\Xx=\{\ell_1^1,\ell_2^1,x^1,\ell_1^2,\ell_2^2,x^2\}$ where $x^1,x^2$
range over the domain $\{\incr,\decr,\zerotest,\ack\}$, and the
transition systems are given in Fig.\,\ref{fig:sim_counter}. The
variables $\Xx$ encode two channels that we call $1$ and $2$ in the
pseudo code of Fig.~\ref{fig:sim_counter}. The \main task runs an
infinite \texttt{while} loop (line 12 onwards) that consists in
guessing a transition $(s,a,s')$ of $F$ and synchronising, via
\emph{rendezvous} on the channels $1$ and $2$, with the relevant
$null$ or $eins$ unblocked task, to let it execute the operation on
the counter. When a $null(i)$ or $eins(i)$ receives an $\incr$
message, it performs an asynchronous dispatch of $eins(i)$ into $q$ to
increment counter $i$, and acknowledges the operation to \main, thanks
to message $\ack$. When an $eins$ block receives a $\decr$ message, it
terminates, which decrements the counter. $null$ blocks cannot receive
$\decr$ messages, so, if \main requests a $\decr$ operation when the
counter is zero, \main gets blocked. This means that the guessed
transition was not fireable in the currently simulated \twocs configuration, and ends the
simulation. Finally, only $null$ blocks can receive and acknowledge
$\zerotest$ messages, so, again, \main is blocked after sending
$\zerotest$ to a non-zero counter.
Note that we need both \emph{asynchronous} calls to start two counters
in parallel, and \emph{synchronous} calls to encode the counter
values. The result of Theorem\,\ref{thm:concqdasundec} follows
directly from:

\begin{restatable}{proposition}{propsimpdsrdvzqdas}
  \label{prop:sim_pds_rdvz_qdas}
  Given a \twocs,
  then we can reduce its reachability question
  to the Parikh coverability question for a concurrent \qdas that demands
  both synchronous and asynchronous dispatch actions.
\end{restatable}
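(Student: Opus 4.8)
The plan is to show that the construction $\Aa_\Pp$ faithfully simulates the \twocs $\Pp$, so that reachability of a target control location of $\Pp$ is equivalent to coverability of a suitable Parikh image in $\Aa_\Pp$. First I would augment \main with a guarded transition \texttt{state}$=x$ leading to a fresh local state $s_\dagger$, reducing ``is $x$ reachable in $\Pp$'' to ``is $f$ Parikh-coverable'', where $f(s_\dagger)=1$ and $f(s)=0$ otherwise; since only one token in $s_\dagger$ is required, coverability ($\mleq$) is exactly what is needed. The core of the argument is an encoding relation $\rhd$ between configurations of $\Aa_\Pp$ and configurations $(x,c_1,c_2)$ of $\Pp$, defined to hold at the \emph{clean} configurations where \main sits at the top of its \texttt{while} loop, both lock variables $\ell_1^i,\ell_2^i$ are $0$, $\texttt{state}=x$, and for each $i$ the counter component is a chain of wait edges whose bottom is a $null(i)$ task carrying exactly $c_i$ tasks $eins(i)$, with the topmost task unblocked and in its waiting state $4$.

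Before the induction I would establish the structural invariant that, after the initialization phase (\main dispatches $null(1),null(2)$ asynchronously and collects their creation acks), every reachable \ctg has the shape of Fig.\,\ref{fig:sim_counter}: there are precisely three unblocked tasks, namely \main and, for each counter, a single top task, while every other $eins(i)$ task is blocked in state $2$ by the synchronous dispatch that created its successor. This is the synchronous-\qdas stack discipline applied independently to each counter, and it guarantees that on each rendezvous channel $i$ there are at all times only two potential partners: \main and the unique unblocked counter-$i$ task.

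With this in hand I would prove the two simulation directions by induction on run length, mirroring Lemma\,\ref{lem:from-fifo-to-qdas}. For the forward direction, a transition $(s,a,s')$ of $\Pp$ fireable from $(x,c_1,c_2)$ is simulated by \main guessing it and performing the matching rendezvous: $\incr(i)$ triggers a synchronous dispatch of $eins(i)$ (pushing the stack) whose creation ack releases \main; $\decr(i)$ makes the top $eins(i)$ terminate (popping the stack) and unblocks the task below, which then delivers the deferred ack; and a zero-test succeeds only through a $null(i)$ task, which is the top task precisely when $c_i=0$. A $\decr(i)$ on a zero counter or a zero-test on a positive counter leaves \main blocked, faithfully reflecting non-fireability. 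The backward direction shows that every clean configuration reached along a run of $\Aa_\Pp$ encodes a reachable configuration of $\Pp$, using the structural invariant to argue that between two consecutive clean configurations exactly one loop iteration of \main has completed and has correctly updated one counter.

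The main obstacle, which I would isolate as a separate rendezvous lemma, is faithfulness of the lock-based handshake: starting from $\ell_1=\ell_2=0$ with the partner blocks $\gamma_1,\gamma_2$ at $s_0,s_0'$ and with no other unblocked task touching $\ell_1,\ell_2,x$, the nested lock protocol admits a \emph{unique} interleaving reaching both $s_5,s_5'$, and along it the test $x=m$ necessarily follows the assignment $x\gets m$, so exactly message $m$ is transmitted and no value is lost or overwritten. Combined with the structural invariant, which supplies the no-interference hypothesis since all non-top counter tasks are blocked and the two channels are disjoint, this rules out any spurious interleaving. Crucially, \qdas reachability is \emph{exact} (there is no token loss, unlike the concurrent asynchronous case), so the counters are simulated without error, which is what lets a two-counter machine be encoded. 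Reachability of $x$ in $\Pp$ is then equivalent to coverability of $f$, yielding the reduction and hence Theorem\,\ref{thm:concqdasundec}.
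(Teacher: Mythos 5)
Your proposal is correct and follows essentially the same route as the paper: the same stack-of-synchronous-dispatches encoding of each counter, the same lock-based rendezvous argument establishing a unique interleaving, and the same induction cutting runs into per-transition phases relating clean configurations of the \qdas to configurations of the \twocs. You actually make explicit two points the paper leaves implicit (the fresh target state reducing control-location reachability to Parikh coverability, and the structural invariant on reachable {\ctg}s), but these are elaborations, not a different proof.
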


As discussed before, we can separate each $\Graph$ for
$(\Graph,\Data)\in\Reach(\Aa_\Cc)$ into three components, one
consisting only of a vertex $v_0$ with $\lambda(v_0)=\main$
and two paths $v_1 v_2 \dots v_k$ and $v_1'v_2'\dots v_l'$ which
we will call $counter1$ and $counter2$ in the following.

As before, we define a relation between configurations of the \twocs $\Cc$ and
the \qdas $\Aa_{Cc}$. For $c=(\Graph,\Data)\in\Reach(\Aa_\Cc)$ and
$y=(x,k,l)\in\Reach(\Cc)\subseteq X\times \NN \times \NN$ we
write $c\triangleright y$ if $\Data(state)=x$, $\veve{counter1}=k$,
and $\veve{counter2}=l$.

The rendezvous assures a unique interleaving of actions of \main, $null(1)$,
and $null(2)$ until \main reaches line $12$. Let us in the following consider
the reached configuration $c^0=(\Graph^0,\Data^0)$ with $\Data^0(state)=x^0$,
$\Data^0(\ell_0)=\Data^0(\ell_1)=0$ and
$\Graph^0$ with\\
\begin{tikzpicture}
  [zstd/.style={state,font=\small,inner sep=1pt,minimum size=15pt},
  zstd2/.style={zstd,rectangle},
  lab/.style={font=\tiny,inner sep=1pt},
  anchor=west]
    \draw  node[zstd] (13) {$\main$};
    \draw (13.south east) node[lab,anchor=north west] {$q$};
    \draw (13.north east) node[lab,anchor=south west] {$s_{12}$};

    \draw (1.5,0) node[zstd] (13) {$null(1)$};
    \draw (13.south east) node[lab,anchor=north west] {$q$};
    \draw (13.north east) node[lab,anchor=south west] {$4$};

    \draw (3.5,0) node[zstd] (13) {$null(2)$};
    \draw (13.south east) node[lab,anchor=north west] {$q$};
    \draw (13.north east) node[lab,anchor=south west] {$4$};
\end{tikzpicture}\\
(where $s_{12}$ is the state of \main in line $12$) as ``initial'' configuration of the \qdas.

Note that $counter1$ and $counter2$ are independent, i.e., they do not
synchronize except via $\main$. Further, there is no more than one task active
in $counter1$ and $counter2$. The unique tasks $zero(1)$ and $zero(2)$
never terminate. The rendezvous synchronization assures that there is only
\emph{one} possible interleaving between the \main task and the currently running
tasks in $counter1$ and $counter2$:
\begin{myitemize}
  \item \main does loops of the form\\
\begin{tikzpicture}[baseline=-0.5ex]
  \foreach \x in {0,1,2,3,4}
    \draw (1.8*\x,0) node[circle,draw,inner sep=4pt] (\x) {};

  \draw (0) node[font=\tiny]{$s_0$};
  \draw (4) node[font=\tiny]{$s_0$};
  \draw (0) edge[->] node[above,font=\tiny]{$state = s$} (1);
  \draw (1) edge[->] node[above,font=\tiny]{$i!a$} (2);
  \draw (2) edge[->] node[above,font=\tiny]{$i?ack$} (3);
  \draw (3) edge[->] node[above,font=\tiny]{$state\leftarrow s'$} (4);
\end{tikzpicture}\hfill (for $i\in \{1,2\},a\in\Sigma_\Cc$)\\

  \item which leads to the following interleaving of actions of \main
    with actions of the $i$-th counter component.\\
\begin{tikzpicture}
  \foreach \x in {0,1,2,3,4,5,6,7}
    \draw (1.7*\x,0) node[circle,draw,inner sep=4pt] (\x) {};

  \draw (0) edge[->] node[above,font=\tiny]{$state = s$} (1);
  \draw (1) edge[->] node[above,font=\tiny]{$i!a$} (2);
  \draw (2) edge[->] node[above,font=\tiny]{$i?a$} (3);
  \draw (3) edge[->,dotted] node[above,font=\tiny]{$c_i(a)$} (4);
  \draw (4) edge[->] node[above,font=\tiny]{$i!ack$} (5);
  \draw (5) edge[->] node[above,font=\tiny]{$i?ack$} (6);
  \draw (6) edge[->] node[above,font=\tiny]{$state\leftarrow s'$} (7);
\end{tikzpicture}\\

where
\begin{tikzpicture}
  \draw node[circle,draw,inner sep=2pt] (a) {};
  \draw +(1,0) node[circle,draw,inner sep=2pt] (b) {};
  \draw (a) edge[->,dotted] node[above,font=\tiny]{$c_i(a)$} (b);
\end{tikzpicture}
translates the sent action $a$ to a meta-action $c_i(a)$ of the $i$-th counter
as follows:\\
\begin{myitemize}
  \item an action $\incr$ is mapped to the action
    $\disps(q,eins(i))$ and the activation of the dispatched task
  \item an action $\decr$ is mapped to the termination of
    the current task; which is only possible if the current task
    is a block $eins(i)$
  \item the test for empty stack is mapped to an epsilon action; this action
    is only possible in $null(i)$.
\end{myitemize}
\end{myitemize}
Note that if $c_i(a)$ is not possible, then there will be no acknowledgement,
hence $\Aa_\Cc$ blocks.

Thus we can cut a run of $\Aa_\Cc$ into (an initial phase and) a sequence of
phases of the above form that will be abbreviated $trans(s,a,s')$ in the
following.

\begin{lemma}
  Let $\Cc$ be a \twocs and $\Aa_{\Cc}$
  the associated \qdas, if  $y\in\Reach(\Cc)$ then there exists
  $c\in\Reach(\Aa_{\Cc})$ such
  that $c\triangleright y$. Further, if $c=(\Graph,\Data)\in\Reach(\Aa_\Cc)$
  where $\Data$ valuates $\Data(\ell_0)=\Data(\ell_1)=0$,
  then there exists $y\in\Reach(\Cc)$ with $c\triangleright y$.
\end{lemma}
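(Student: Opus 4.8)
The plan is to prove both implications by induction on the length of a witnessing run, exploiting the decomposition of every run of $\Aa_\Cc$ into an initialization phase followed by phases $trans(s,a,s')$ that has already been justified from the forced rendezvous interleaving. The invariant I would maintain at a configuration $c=(\Graph,\Data)$ whose lock variables are all $0$ (which is what the hypothesis $\Data(\ell_0)=\Data(\ell_1)=0$ abbreviates for the channel locks $\ell_1^i,\ell_2^i$) is: \main sits at the head of its \texttt{while} loop, each component $counter_i$ carries one unblocked task (the topmost $eins(i)$, or $null(i)$ when the counter is empty) waiting for a message, and $\veve{counter_i}$ (the number of stacked $eins(i)$ tasks) equals the value of counter $i$. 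These configurations are intended to be exactly the boundaries between consecutive $trans$-phases.

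\textbf{Forward direction.} For $y\in\Reach(\Cc)$ I induct on the length of the $\Cc$-run reaching $y$. The base case matches the initial \twocs configuration $(x^0,0,0)$ with the configuration $c^0$ described above, reached from the initial configuration of $\Aa_\Cc$ by the forced initialization phase that asynchronously dispatches $null(1)$ and $null(2)$, collects their acknowledgements on channels $1$ and $2$, and sets $state$ to $x^0$; clearly $c^0\triangleright(x^0,0,0)$. For the step, suppose $(x,k,l)\xrightarrow{(x,a,x')}(x',k',l')$ in $\Cc$ and, by the induction hypothesis, $c\triangleright(x,k,l)$ with $c$ reachable. I append the phase $trans(x,a,x')$: \main guesses $(x,a,x')$ (possible since $\Data(state)=x$) and rendezvous-sends $a$ on the relevant channel $i$ to the unblocked counter task, which performs the meta-action $c_i(a)$ and acknowledges, after which \main writes $x'$ into $state$. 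A case split on $a$ gives $c'\triangleright(x',k',l')$: an $\incr(i)$ triggers $\disps(q,eins(i))$ and its scheduling, raising $\veve{counter_i}$ by one; a $\decr(i)$, fireable in $\Cc$ only at a positive value, is answered by the top $eins(i)$ terminating, lowering $\veve{counter_i}$ by one; a $\zerotest$, fireable only at value $0$, is answered solely by $null(i)$ and leaves the \ctg unchanged. Since $null(i)$ refuses $\decr$ and $eins(i)$ refuses the $\zerotest$, the phase completes precisely when $(x,a,x')$ is fireable in $\Cc$.

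\textbf{Backward direction.} Given a reachable $c=(\Graph,\Data)$ with all locks $0$, I cut its witnessing run at the all-locks-$0$ configurations; by the uniqueness of the rendezvous interleaving the run factors into the initialization phase followed by complete phases $trans(s,a,s')$. I would first establish that a configuration with all locks idle occurs only at a phase boundary, \main at the loop head with $\Data(state)$ a control location and each $counter_i$ exhibiting its value (see the obstacle). Then I induct on the number of completed phases: the first boundary encodes $(x^0,0,0)$; and if the boundary preceding a phase $trans(s,a,s')$ encodes a reachable $y=(s,k,l)$, then, since the phase can complete only when the counter task actually carries out $c_i(a)$ and acknowledges, $a$ is fireable in $\Cc$ from $y$, so the successor $y'$ is reachable and is encoded by the next boundary. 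Reading off $\Data(state)$ and the two path lengths from $c$ then yields the required $y$ with $c\triangleright y$.

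\textbf{Main obstacle.} The delicate point is confirming that the hypothesis isolates exactly the inter-phase configurations. A single $trans$-phase comprises \emph{two} rendezvous rounds — one transmitting the command $a$, one for the acknowledgement — with the counter operation $c_i(a)$ in between, and since each round leaves both locks at $0$, a naive reading admits an intermediate ``half-applied'' configuration in which $state$ still holds $s$ while $\veve{counter_i}$ has already changed; such a configuration need not encode any reachable \twocs configuration. I would rule this out by a careful analysis of the handshake, keeping one lock of channel $i$ raised across the execution of $c_i(a)$ until \main has updated $state$ to $s'$, so that both locks being $0$ certifies that a whole phase has elapsed. A secondary, routine point is that the scheduler's dequeues of the synchronously dispatched $eins(i)$ blocks keep them inside the $counter_i$ path and hence leave $\veve{counter_i}$ unchanged, so these moves are invisible to the encoding.
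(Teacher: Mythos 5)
Your overall strategy---decompose every run of $\Aa_\Cc$ into the forced initialization phase followed by $trans(s,a,s')$ phases and induct on the number of phases in both directions---is exactly the paper's; the paper's own proof is a two-line sketch of precisely this, and your forward direction fills it in correctly. The real content of your proposal is the ``main obstacle,'' and you are right that it is genuine rather than a technicality you merely failed to discharge: each $trans$ phase consists of two complete handshakes with the counter mutation $c_i(a)$ in between, every completed handshake returns all lock variables to $0$, and so there are reachable configurations in the \emph{middle} of a phase---after the dispatch or termination performed by $c_i(a)$ but before $\main$ executes $state\gets s'$---in which all locks are $0$, $\Data(state)$ still holds the source location $s$, and $\veve{counter_i}$ has already changed. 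Such a configuration satisfies the hypothesis of the backward implication yet encodes a tuple of the form $(s,k+1,l)$ that need not lie in $\Reach(\Cc)$ (take a \twocs whose only transition out of $x^0$ is an $\incr$ and which never returns to $x^0$). The paper's ``the reverse direction follows by a straightforward inductive argument'' silently steps over this, so on this point your analysis is more careful than the source.

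Two repairs are available, and it is worth being clear about their cost. The one you sketch---keeping one lock of channel $i$ raised from the receipt of $a$ until $\main$ has written $s'$---does close the gap, but it modifies the construction of $\Aa_\Cc$, so you would have to re-establish the unique-interleaving property for the altered protocol before the phase decomposition (on which both directions rest) is available again. The cheaper repair, and the one consistent with how the paper treats the analogous FIFO lemma (whose encoding relation explicitly requires \texttt{main} to be at line~8 and the running block at line~12), is to strengthen the hypothesis of the backward direction so that it pins $c$ to a phase boundary: require in addition that $\main$ is at the head of its \texttt{while} loop and that each unblocked counter task sits in its waiting state. That version is provable by exactly your induction on completed phases and is all the undecidability reduction of Theorem~\ref{thm:concqdasundec} needs. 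Your secondary observation---that scheduling the synchronously dispatched $eins(i)$ blocks keeps them on the $counter_i$ path and so is invisible to the encoding---is correct and indeed routine.
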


\begin{proof}

  Given a run $x_0 \delta_1 x_1 \delta_2 \dots \delta_k x_k$ of
  $\Cc$, then there exists a run of $\Aa_\Cc$ that can be cut into
  phases $t_1,\dots,t_k$ where $t_i=trans(s_{i-1},a_i,s_i)$ where
  $\delta_i=(s_{i-1},a_i,s_i)$ for $1\leq i \leq k$.
    Obviously $c^0\triangleright x^0$ and
  $\Data^0(\ell_0)=\Data^0(\ell_1)=0$.
  Hence, the reverse direction follows by a straightforward inductive
  argument.
      \end{proof}


\begin{thebibliography}{10}

\bibitem{apple--2010--a}
{G}rand {C}entral {D}ispatch ({GCD}) {R}eference.
\nb Technical report, Apple Inc., 2010.

\bibitem{apple--2011--a}
{C}oncurrency {P}rogramming {G}uide.
\nb Technical report, Apple Inc., 2011.

\bibitem{bouajjani-a-2012--a}
A.~Bouajjani and M.~Emmi.
\nb Analysis of recursively parallel programs.
\nb In {\em Proc. of POPL'12}, p.203--214, 2012.

\bibitem{bouajjani-a-1997-135-a}
A.~Bouajjani et al.
\nb Reachability analysis of pushdown automata: Application to
  model-checking.
\nb In {\em Proc. of CONCUR'97}, {\em {LNCS}} 1243, p.135--150. Springer,
  1997.

\bibitem{brand-d-1983-323-a}
D.~Brand and P.~Zafiropulo.
\nb {O}n {C}ommunicating {F}inite-{S}tate {M}achines.
\nb {\em Journal of the ACM}, 30(2):323--342, 1983.

\bibitem{chadha-r-2007-136-a}
R.~Chadha and M.~Viswanathan.
\nb Decidability results for well-structured transition systems with
  auxiliary storage.
\nb In {\em Proc. of CONCUR'07}, {\em {LNCS}} 4703, pages
  136--150, 2007.

\bibitem{chadha-r-2009-4169-a}
R.~Chadha and M.~Viswanathan.
\nb Deciding branching time properties for asynchronous programs.
\nb {\em Theoretical Computer Science}, 410(42):4169--4179, 2009.

\bibitem{esparza-j-1998-374-a}
J.~Esparza.
\nb Decidability and complexity of {P}etri net problems {---} an
  introduction.
\nb In {\em Lectures on Petri nets I}, {\em {LNCS}} 1491.
  Springer, 1998.

\bibitem{esparza-j-2000-232-a}
J.~Esparza et al.
\nb Efficient algorithms for model checking pushdown systems.
\nb In {\em Proc. of CAV'00}, {\em {LNCS}} 1855, pages
  232--247. Springer, 2000.

\bibitem{ganty-p-2010--a}
P.~Ganty and R.~Majumdar.
\nb Algorithmic verification of asynchronous programs.
\nb {\em TOPLAS}, 34(1), 2012.

\bibitem{ganty-p-2009-102-a}
P.~Ganty, R.~Majumdar, and A.~Rybalchenko.
\nb Verifying liveness for asynchronous programs.
\nb In {\em Proc. of POPL'09}, p.102--113. ACM Press, 2009.

\bibitem{geeraerts-g-2012--a}
G.~Geeraerts et al.
\nb {$\omega$}-petri nets.
\nb ULB Research Report.\\
\nb
\smash{\url{http://www.ulb.ac.be/di/verif/ggeeraer/papers/wPetri.pdf}}.

\bibitem{jhala-r-2007-339-a}
R.~Jhala and R.~Majumdar.
\nb Interprocedural analysis of asynchronous programs.
\nb In {\em Proc. of POPL'07}, p.339--350. ACM Press, 2007.

\bibitem{kozen-d-1977-254-a}
D.~Kozen.
\nb Lower bounds for natural proof systems.
\nb In {\em Proc. of FOCS'77}, p.254--266. IEEE Comp.\ Soc.\ Press,
  1977.

\bibitem{lipton}
R.~Lipton.
\nb The Reachability Problem Requires Exponential Space.
\nb  Techreport 62, Yale University, 1976

\bibitem{rackoff}
C.~Rackoff.
\nb The Covering and Boundedness Problem for Vector Addition Systems.
\nb {\em TCS}, 6:223–231, 1978.

\bibitem{sen-k-2006-300-a}
K.~Sen and M.~Viswanathan.
\nb Model checking multithreaded programs with asynchronous atomic
  methods.
\nb In {\em Proc. of CAV'06}, {\em {LNCS}} 4144, pages
  300--314, 2006.

\end{thebibliography}
\end{document}